\documentclass[11pt]{article}

\usepackage[margin=1in]{geometry}
\usepackage{amsmath,amsfonts,amssymb,amsthm}
\usepackage{array}
\usepackage{bbm}
\usepackage{bm}
\usepackage{booktabs}
\usepackage{caption}
\usepackage{color}
\usepackage{dsfont}
\usepackage[shortlabels]{enumitem}
\usepackage{fancyvrb}
\usepackage{graphicx}
\usepackage{hhline}
\usepackage[ruled]{algorithm}
\usepackage[noend]{algpseudocode}
\usepackage{listings}
\usepackage{mathrsfs}
\usepackage{mathtools}
\usepackage{multirow}
\usepackage{natbib}
\usepackage{pgfplots}
\usepackage{rotating}
\usepackage{soul}
\usepackage{subfigure}
\usepackage{tikz}
\usepackage{url}
\usepackage{xspace}

\pgfplotsset{compat=1.18}

\newtheorem{theorem}{Theorem}
\newtheorem{lemma}{Lemma}
\newtheorem{proposition}{Proposition}

\newtheorem{assumption}{Assumption}
\newtheorem{definition}{Definition}

\newtheorem{remark}{Remark}

\definecolor{strcolor}{rgb}{0.6, 0.2, 0.6}
\definecolor{commentcolor}{rgb}{0.3125, 0.5, 0.3125}
\definecolor{keycol}{rgb}{0, 0, 1}

\newcommand {\bea}{\begin{eqnarray}}
	\newcommand {\eea}{\end{eqnarray}}

\DeclareMathOperator{\argmax}{\arg\max}

\newcommand \ourfrmwrk{OSSA}
\newcommand \oursimplerfrmwk{FOSSA}

\newcommand \truemean{\mathbf{M}_{0}}

\def\blot{\quad \mbox{$\vcenter{ \vbox{ \hrule height.4pt
				\hbox{\vrule width.4pt height.9ex \kern.9ex \vrule width.4pt}
				\hrule height.4pt}}$}}

 \bibpunct[, ]{(}{)}{,}{a}{}{,}

\title{One-Shot Screening of Simulated Systems for Acceptability}
\author{Jinbo Zhao \and David J. Eckman}
\date{\normalsize Wm Michael Barnes '64 Department of Industrial and Systems Engineering\\
  Texas A\&M University}

\begin{document}

\maketitle

\begin{abstract}
We introduce a general-purpose framework for designing screening procedures for problems featuring a finite set of simulated systems, a.k.a.\ ranking-and-selection problems.
The framework offers a novel perspective on screening in which decisions to retain or eliminate systems are based on confidence regions for the unknown problem instance rather than comparisons of estimated performances.
Specifically, a system is retained if it has acceptable performance under some plausible configuration of response vectors contained in the confidence region.
This perspective facilitates the design of procedures that guarantee to return either all acceptable systems, or each acceptable system, with high probability and accommodates many well-studied definitions of acceptability, including feasibility with respect to stochastic constraints and optimality with respect to one or more objectives.
We further study a subclass of the framework that yields simple and computationally efficient screening procedures that often have lower-order time complexity than existing methods and naturally supports parallelization without loss of screening power.
We demonstrate the effectiveness and efficiency of the procedures through numerical experiments.
\end{abstract}

\noindent\textbf{Keywords:} Ranking and selection; screening; multi-objective simulation optimization; stochastic constraints

\section{Introduction}

In the classical ranking-and-selection (R\&S) problem, a decision maker chooses from among a finite number of simulated systems (i.e., designs) and has sufficient computational resources to simulate all of them to some degree. 
Screening (a.k.a.\ subset selection) refers to the process of removing from consideration systems having poor estimated performance, so that the decision maker can focus on those with more acceptable performance. 
Screening procedures serve an important role supporting the use of simulation optimization for decision making under uncertainty.
For instance, screening procedures can quickly and cheaply pare down the number of candidate systems before running a selection procedure that ultimately recommends a single system as the best.
Or, the decision maker may examine the subset of systems returned by a screening procedure and choose one from it based on some secondary qualitative criteria.
In other situations, such as when the decision maker will be making a sequence of similar decisions over time, the returned subset can serve as an assortment of purportedly good systems from which the decision maker can choose again and again, depending on the scenario.

Screening procedures are readily available in commercial simulation software like Simio\textsuperscript{\texttrademark} and Arena\textsuperscript{\texttrademark} \citep{smith2018simio} and widely used in practice. Among the most common ways practitioners use screening procedures is to ``clean up'' after performing an exploratory simulation experiment, as a way to reduce the number of contending systems. We refer to this setting in which simulation data is collected in a single stage and subsequently post-processed as \emph{one-shot} screening. Alternatively, some screening procedures adaptively eliminate (screen out) systems over time as additional simulation data become available. The Extended Screen-to-the-Best procedure of \cite{boesel2003using} and the bisection Parallel Adaptive Survivor Selection (bi-PASS) procedure of \cite{pei2022parallel} are arguably the state-of-the-art one-shot and adaptive screening procedures, respectively.
Both procedures are implemented in Simio and designed for screening out systems whose performance---defined as the expectation of a scalar simulation output---is suboptimal. Screening procedures for other definitions of ``acceptable performance'' are relatively nonexistent.

This paper introduces a general-purpose framework, called \emph{one-shot screening for acceptability} (\ourfrmwrk), in which screening procedures can be designed with ease to deliver guarantees on returning systems with acceptable performance with high probability.
A strength of the \ourfrmwrk\ framework is its ability to accommodate unequal sample sizes across systems, dependent simulation outputs and different definitions of acceptability, including those based on multi-valued system performance.
We believe the name \ourfrmwrk---Latin for ``bones''---is especially fitting, as the generality of the framework comes from reducing screening to its primitives: defining acceptability and quantifying uncertainty. 
A procedure designed under the \ourfrmwrk\ framework screens each system by checking whether that system is plausibly acceptable given the uncertainty associated with the estimated response vectors of all systems.
More precisely, when screening a system, the procedure constructs a confidence region for the response vectors of all systems and returns that system if the confidence region contains a configuration in which that system is acceptable.
\ourfrmwrk\ procedures involve checking the intersection of sets of \emph{configurations} of response vectors and therefore operate at a different level of abstraction from most screening procedures.
More powerful procedures can be designed within the framework by tailoring the confidence regions to the given definition of acceptability.
A well-known special case within this framework is the Extended Screen-to-the-Best procedure of \cite{boesel2003using}.
\ourfrmwrk\ procedures can also be applied to non-simulation data, provided the requisite confidence region can be constructed.

The \ourfrmwrk\ framework resembles the plausible screening approach of \cite{eckman2022plausible} in that both involve checking how plausible it is that a solution is acceptable in light of the uncertainty surrounding the true configuration.
The main differences are that the \ourfrmwrk\ framework entails simulating all systems under consideration---thus, unsimulated systems cannot be screened out---and makes no assumption (or use) of functional information relating the responses of systems, making it applicable to the analysis of black-box simulation models.
Other relationships between screening and plausible screening have been studied and some plausible-screening-inspired screening procedures proposed \citep{eckman2020revisiting}.
Unlike plausible screening, the \ourfrmwrk\ framework places less emphasis on the choice of the discrepancy, which measures deviations from the estimated configuration, and more on the geometry of the confidence region dictated by the discrepancy and how it can be exploited for more efficient and powerful screening.

We identify and delve into a subclass of the framework that facilitates the design of simple, yet computationally efficient, screening procedures for a wide array of R\&S problems.
Specifically, we investigate procedures that simulate systems independently and construct marginal confidence regions for the response vectors of each system.
We refer to this subclass as \emph{fragmented} \ourfrmwrk\ (\oursimplerfrmwk) because of how the overall allowed error is split across systems.
We find that the time complexity of the procedures resulting from these design choices tends to be of a lower order than existing methods.
For instance, the \oursimplerfrmwk\ procedure for optimizing a single response has a linear dependence on the number of systems versus the quadratic dependence of the Extended Screen-to-the-Best procedure, which makes pairwise comparisons.
For optimization with stochastic constraints and multi-objective optimization, the complexity of the \oursimplerfrmwk\ procedures is linear and (at worst) sub-quadratic in the number of systems, respectively. 
The procedures also naturally lend themselves to parallelization via a straightforward divide-and-conquer scheme with no loss of screening power. 

We view the chief contributions of this paper being the following:
\begin{itemize}
    \item We present a design framework for one-shot screening procedures that encompasses problems concerning optimization and feasibility and provides a template for designing new procedures for yet unstudied problems.
    \item We introduce an innovative design perspective centered around confidence regions for the unknown problem instance
    and demonstrate how the construction and geometry of these confidence regions can influence the screening power of fixed-confidence screening procedures.
    \item We investigate a subclass within the framework that yields powerful and easy-to-implement screening procedures, including ones that are the first known one-shot screening procedures for optimization with stochastic constraints and multi-objective optimization.
\end{itemize}

The rest of this paper is organized as follows:
We review relevant literature in Section~\ref{sec:literature}.
In Section~\ref{sec:ossa}, we introduce relevant notation and guarantees and present the \ourfrmwrk\ framework.
We then introduce the FOSSA subclass of procedures, which exploit the geometry of a simplifying confidence region, and discuss extensions of the framework to the adaptive sampling setting.
In Section~\ref{sec:two examples}, we detail \oursimplerfrmwk\ procedures for simulation-optimization problems of broad interest, specifically, optimization subject to stochastic constraints and multi-objective optimization. (A separate discussion of the well-studied case of optimizing a single response appears in the electronic companion.)
In Section~\ref{sec:num_exp}, we apply the algorithms to a simulation study of a multi-class queueing system subject to blocking.
We conclude in Section~\ref{sec:conclusion} and point out directions for future research.

\section{Background}
\label{sec:literature}

The approach of returning a subset of promising systems originated as an alternative to selection procedures designed under the indifference-zone formulation of \cite{bechhofer1954single}, in which a single system is selected and there is assumed to be a clear best system \citep{gupta1965some,gupta1985subset}.
Screening and selection procedures were initially developed for problems featuring a small number of systems that could be evaluated only by performing expensive physical experiments. They were later co-opted for use in computer simulation experiments to solve optimization problems with thousands or even millions of systems \citep{Ni2017,avci2024simulation}.
Screening and selection procedures also arise in non-simulation contexts such $A/B/n$ testing \citep{Russac2021} and the use of large language models (LLMs) as proxies for human evaluators \citep{li2025efficientbudgetallocationlargescale}.

\subsection{One-Shot Versus Adaptive Screening}

Screening procedures can be adaptive or non-adaptive, depending on how they allocate simulation effort (i.e., replications) across systems. 
Non-adaptive (one-shot) screening procedures take a fixed, possibly unequal, number of replications from each system and determine which systems to eliminate based on summary statistics.
Adaptive screening procedures, on the other hand, obtain samples from systems either in stages \citep{dudewicz:twostage, rinott:twostage} or sequentially \citep{pei2022parallel} with sample sizes being based on the data collected.
As a result, adaptive screening procedures can direct more simulation effort to systems with promising, yet uncertain, performance.
Although one-shot procedures may have less screening power than adaptive procedures, they can be easily parallelized and exemplify the exploratory role screening can serve in situations where the decision maker may have already obtained data from a preliminary experiment.
Although our focus in this paper is not on how the data are collected but on how the returned subset of systems is formed using said data, the OSSA framework is compatible with adaptive sampling schemes; see Section~\ref{sec:the Envelope CI} in the electronic companion for more details.

The majority of the literature on screening has studied the problem from a frequentist perspective, where the systems' true responses are fixed, but unknown, and the provided statistical guarantees are with respect to repeated runs of the data-collection and subset-construction processes.
For example, the well-known probability of correct selection (PCS) guarantee implies that the system with the best (scalar) response will be returned with at least some user-specified probability, regardless of the true configuration.
\cite{EckmanHenderson2021} and \cite{zhao23screening} provide an overview and discussion of other fixed-confidence frequentist guarantees for screening. 
We adopt the frequentist treatment in this paper.
For background on the alternative Bayesian perspective of R\&S problems, see \cite{miescke:bayesdesigns99} and \cite{chen2015ranking}.

\subsection{R\&S with a Single Response}

Most conventional R\&S procedures focus on the setting in which a decision maker is interested in systems' \textit{expected} outputs. Other problem variations have also been studied, such as comparing systems based on quantiles \citep{batur2010quantile, peng2021efficient, shin2022practical} or the variances \citep{batur2010mean} of their output distributions. 
Our proposed framework can accommodate most risk measures, including the expected value, variance, quantile, tail probability, and conditional value at risk, provided valid finite-sample or asymptotic confidence regions can be constructed. 
In each case, the same machinery is applied and analogous statistical guarantees are delivered.

Early research on screening procedures was primarily restricted to problems in which each system has a single response, given by an expected value. Within the context of optimization, a common goal has been to return the system with the best response with high probability \citep{boesel2003using}. Other goals related to optimization involving alternative definitions of \emph{acceptable} performance have been studied; prominent examples include returning systems with near-optimal responses \citep{lam1986new,sullivan1989restricted,zhao23screening}, top-$m$ responses  \citep{koenig1985procedure,chen2008efficient}, or low expected opportunity cost \citep{gao2015efficient}. A related problem in this area is feasibility determination, i.e., identifying systems whose responses are better than some user-specified threshold \citep{szechtman2008new,solow2021novel}.
Goals of optimization and feasibility can also be combined. For example, \cite{yan2012efficient} and \cite{jia2013efficient} study a setting in which systems have a deterministic primary response and a secondary response associated with a simulation output. The goals studied in these papers are to return the set of systems with the $m$ best primary responses among either those having good enough secondary responses (i.e., better than a threshold) or top-$g$ secondary responses, respectively.

\subsection{R\&S with Multiple Responses}

Many real-world problems tackled using simulation involve multiple responses of interest \citep{butler2001multiple}.
For example, supply-chain planners may care about transportation and holding costs and order fill rates when evaluating inventory control policies.
Screening procedures exist for feasibility determination problems with multiple thresholds on performance \citep{baturkim2010, gao2017efficient}.
A problem formulation that combines optimization and feasibility is the optimization of a single primary responses subject to one or more so-called stochastic constraints on secondary responses.
Selection procedures that deliver finite-sample fixed-confidence guarantees for this problem include those of \cite{andradottir2010fully,healey2014selection}, and \cite{hong2015chance}.
These selection procedures treat the feasibility and optimality aspects of the problem separately, splitting the allowable probability of making an error between the two. They feature a first stage in which clearly infeasible systems are screened out, followed by a second stage in which a conventional selection procedure selects from among the survivors.
Any dependence between primary and secondary responses is either ignored or avoided by the use of the Bonferroni inequality.
\cite{lee2012approximate} and \cite{pasupathy2014stochastically} study ways to allocate simulation effort for this class of problems with the goal of maximizing the probability of correct selection.
Another approach involves using bootstrapping to construct a subset whose estimated probability of containing the best system is sufficiently high \citep{currie2021practical}.
In this paper, we introduce what we believe to be the first screening procedure with finite-sample guarantees on returning a set containing the (constrained) optimal solution with high probability while jointly handling the uncertainty about the primary and secondary responses.

Another problem featuring multiple responses is multi-objective simulation optimization.
Unlike in optimization with stochastic constraints, the decision maker is interested in finding a set of systems, namely, the set of non-dominated or efficient solutions \citep{hunter2019introduction}. 
Among selection procedures for multi-objective optimization, the typical goal is to return the \emph{exact} set of efficient solutions with high probability \citep{wang2017sequential,andradottir2021pareto} or to allocate simulation effort to maximize this probability
\citep{lee2010finding,li2018optimal,applegate2020multi}.
Here too, to the best of our knowledge, no screening procedures exist for the multi-objective R\&S problem that either guarantee to return a set containing all efficient systems or guarantee that each efficient system is returned with high probability.
We present two new screening procedures that fill this gap.

\section{One-Shot Screening for Acceptability}
\label{sec:ossa}

In this section, we introduce the problem of screening for acceptability and the OSSA/FOSSA frameworks in their fullest abstraction, with concrete procedures given later in Sections~\ref{sec:two examples} and \ref{sec:single_resp}.
We consider $k$ systems, each corresponding to a specific parameter setting of a simulation model.
The performance of each system is described by $d$ responses.
For System $i$, $i=1,2,\dots,k$, let the true (but unknown) response vector be denoted by $\boldsymbol{\theta}_{i}=(\theta_{i1},\theta_{i2},\dots,\theta_{id})^\intercal\in\mathds{R}^{d}$.
We assume that $n_i$ replications of System $i$ were simulated and produced independent and identically distributed (i.i.d.)\ output vectors $\boldsymbol{X}_{i1}, \boldsymbol{X}_{i2},\dots,\boldsymbol{X}_{in_i}$, where $\boldsymbol{X}_{i\ell} \in \mathds{R}^s$ for $\ell = 1, 2, \ldots, n_i$.
Each component of $\boldsymbol{\theta}_{i}$ is a statistical functional of one or more components of $\boldsymbol{X}_{i1}$, such as an expectation, variance, or quantile.
Although a system's true response vector is unknown, it can be estimated using these simulation outputs.
We refer to the matrix 
$\truemean=(\boldsymbol{\theta}_{1},\boldsymbol{\theta}_{2},\dots,\boldsymbol{\theta}_{k})^{\intercal} \in \mathds{R}^{k\times d}$
as the true \emph{configuration} of response vectors and let $\widehat{\mathbf{M}}_0 = (\widehat{\boldsymbol{\theta}}_{1},\widehat{\boldsymbol{\theta}}_{2},\dots,\widehat{\boldsymbol{\theta}}_{k})^{\intercal} \in \mathds{R}^{k\times d}$ denote its estimate.

For a given configuration $\truemean$, let $\mathcal{A}(\truemean) \subseteq \{1,2, \dots,k\}$ denote the set of indices of systems whose responses are \emph{acceptable} to the decision maker, where the acceptability mapping $\mathcal{A} \colon \mathds{R}^{k \times d} \mapsto \mathscr{P}(\{1, 2, \ldots, k\})$ is known and $\mathscr{P}(\{1, 2, \ldots, k\})$ denotes the set of all possible subsets of systems.
In this section, we keep the definition of acceptability general to highlight the flexibility of the framework. 
In later sections, we present procedures tailored to specific forms of acceptability.
Common definitions of acceptability include, but are not limited to,
optimality with respect to a single response; 
optimality with respect to a single response subject to individual constraints on other responses;
and Pareto optimality, i.e., being non-dominated with respect to all responses.

Because the true configuration $\truemean$ is unknown, $\mathcal{A}(\truemean)$ cannot be identified with certainty given a finite sampling budget.
To tackle this challenge, one-shot screening procedures take as input the data $\{\boldsymbol{X}_{i\ell}\colon  i=1,2,\dots,k \text{ and } \ell=1,2,\dots,n_{i}\}$ and return a subset $\mathcal{S}\subseteq \{1,2, \dots,k\}$ accompanied by a certain statistical guarantee. 
We consider two types of guarantees:
\begin{definition} 
A subset $\mathcal{S}$ delivers the \emph{system-wise probability of acceptable selection (system-wise PAS) guarantee} if for any $\truemean \in \mathds{R}^{k\times d}$, $\mathrm{P}(i \in \mathcal{S})\geq 1-\alpha$ for all $i\in \mathcal{A}(\truemean)$.
\end{definition}
\begin{definition} 
A subset $\mathcal{S}$ delivers the \emph{set-wise probability of acceptable selection (set-wise PAS) guarantee} if for any $\truemean \in \mathds{R}^{k\times d}$, $\mathrm{P}(\mathcal{A}(\truemean) \subseteq \mathcal{S})\geq 1-\alpha$.
\end{definition}
In both guarantees, $1-\alpha \in (0,1)$ represents the decision maker's  desired level of confidence. 
The system-wise PAS guarantee ensures that \emph{each} acceptable system has a high (marginal) probability of being returned, whereas the set-wise PAS guarantee ensures that, with high probability, \emph{all} acceptable systems will be returned. 
The set-wise PAS guarantee implies the system-wise PAS guarantee.
When there are multiple acceptable systems, such as when acceptability is defined as Pareto optimality, decision makers may prefer the set-wise PAS guarantee.
Both the system-wise and set-wise PAS guarantees are trivially delivered by returning all systems, i.e., choosing $\mathcal{S} = \{1, 2, \ldots, k\}$, but a smaller subset is desirable.
We treat the problem under the frequentist framework, hence the probabilities in both guarantees are with respect to repeated experiments on the fixed distributions of $\boldsymbol{X}_{i1}$ for all $i=1,2,\dots,k$. More precisely, the probability statements pertain to the random returned subset $\mathcal{S}$, not $\mathcal{A}(\truemean)$, which is deterministic but unknown. 
Additional frequentist guarantees and their interrelationships are detailed in \cite{zhao23screening}.

\subsection{The OSSA Framework}
We introduce a design framework for one-shot screening procedures delivering either the system-wise or set-wise PAS guarantee.
The procedures do so by returning systems deemed plausibly acceptable, meaning that within a confidence region for the true configuration, $\truemean$, there exists a plausible configuration for which that system would be acceptable.
Put differently, when screening a given system, the procedures attempt to manipulate the estimated configuration $\widehat{\mathbf{M}}_0$ to make that system appear acceptable, but these perturbations are restricted based on the uncertainty associated with the unknown configuration and the decision maker's desired confidence level.\

For a given definition of acceptability, let $\mathds{A}_{i} \subseteq \mathds{R}^{k\times d}$ be the set of configurations for which System $i$ is acceptable,
i.e., $\mathds{A}_{i}=\{\mathbf{M}\in \mathds{R}^{k\times d}\colon i \in \mathcal{A}(\mathbf{M})\}$, where $\mathbf{M}=(\boldsymbol{m}_{1},\boldsymbol{m}_{2},\dots,\boldsymbol{m}_{k})^{\intercal}$ represents a generic configuration.
We call $\mathds{A}_{i}$ the \emph{acceptable region} of System $i$, which is fixed and known.
As an example, when acceptability is defined as optimality with respect to a single response, where smaller is better,
$\mathds{A}_{i}=\{\mathbf{M}\in \mathds{R}^{k \times 1}\colon m_{i}\leq m_{j}\ \mathrm{for\ all}\ j\neq i\}$. The acceptable regions for the two-system select-the-best case are shown in Figure~\ref{fig:Fig1} in Section~\ref{sec:single_resp} of the electronic companion. Acceptable regions have also been adopted by \cite{wang2026rankingandselectionmultiplecorrectanswers} for the problem of selecting one acceptable system, wherein $\mathds{A}_{i}$ is referred to as the \textit{acceptance set} of System $i$.

OSSA procedures screen each system under consideration to determine which of them should be returned.
When screening System $i$, $i = 1, 2, \ldots, k$, an OSSA procedure works with a $1-\alpha$ confidence region for $\truemean$, denoted by $\mathds{C}_{i} \subseteq \mathds{R}^{k\times d}$, where the value of $1-\alpha$ is that of the desired PAS guarantee.
The construction of $\mathds{C}_{i}$ depends on the distributional assumptions made about the simulation outputs and the guarantee sought; specific examples will be given in Section~\ref{subsetion:section 2.4}.

For either PAS guarantee, an OSSA procedure returns those systems whose acceptable region intersects with the corresponding $1-\alpha$ confidence region; i.e., $\mathcal{S}=\{i\colon \mathds{A}_{i}\cap \mathds{C}_{i}\neq \emptyset\}$.
In Theorems \ref{theorem1} and \ref{theorem2}, we establish that this returned subset delivers the desired guarantee.

\begin{theorem} 
\label{theorem1}
The subset $\mathcal{S}^{system}=\{i\colon \mathds{A}_{i} \cap \mathds{C}_{i} \neq \emptyset \}$ delivers the system-wise PAS guarantee.
\end{theorem}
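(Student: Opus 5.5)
The argument is a direct event-inclusion argument. I would fix an arbitrary true configuration $\truemean \in \mathds{R}^{k\times d}$ and an arbitrary acceptable index $i \in \mathcal{A}(\truemean)$, and show that $\mathrm{P}(i \in \mathcal{S}^{system}) \geq 1-\alpha$. Since both $\truemean$ and $i$ are arbitrary, this is exactly the system-wise PAS guarantee.

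The first step uses the definition of the acceptable region. Because $i \in \mathcal{A}(\truemean)$, we have $\truemean \in \mathds{A}_{i} = \{\mathbf{M}\colon i \in \mathcal{A}(\mathbf{M})\}$. This set is deterministic: it depends only on the known mapping $\mathcal{A}$, not on the data.

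The second step is an inclusion of events. On the event $\{\truemean \in \mathds{C}_{i}\}$, the point $\truemean$ lies in both $\mathds{A}_{i}$ and $\mathds{C}_{i}$, so $\mathds{A}_{i} \cap \mathds{C}_{i} \neq \emptyset$ and therefore $i \in \mathcal{S}^{system}$. This gives
\[
\{\truemean \in \mathds{C}_{i}\} \subseteq \{i \in \mathcal{S}^{system}\},
\]
and hence, by monotonicity of probability, $\mathrm{P}(i \in \mathcal{S}^{system}) \geq \mathrm{P}(\truemean \in \mathds{C}_{i}) \geq 1-\alpha$. The last inequality is the defining coverage property of $\mathds{C}_{i}$ as a $1-\alpha$ confidence region for $\truemean$, which must hold for every $\truemean$.

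The only real obstacle is making precise what "$1-\alpha$ confidence region" means for the system-wise guarantee. Each $\mathds{C}_{i}$ needs only its own marginal coverage $\mathrm{P}(\truemean\in\mathds{C}_{i})\ge 1-\alpha$, uniformly over all true configurations. No simultaneous coverage across the different $i$ is required, which is why the regions may differ across systems. I would also note that if coverage is only asymptotic, the same argument yields the guarantee in the limit, via $\liminf$. Measurability of $\{i\in\mathcal{S}^{system}\}$ is not actually needed: the inclusion above shows this event contains a measurable event of probability at least $1-\alpha$, so the guarantee holds as an inner-probability statement. I would add a brief remark to that effect rather than impose extra assumptions.
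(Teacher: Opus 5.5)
Your argument is correct and is essentially the paper's proof. The paper also uses the fact that $i\in\mathcal{A}(\truemean)$ forces $\truemean\in\mathds{A}_i$, and then chains $\mathrm{P}(\mathds{A}_i\cap\mathds{C}_i\neq\emptyset)\geq\mathrm{P}(\truemean\in\mathds{A}_i\cap\mathds{C}_i)=\mathrm{P}(\truemean\in\mathds{C}_i)\geq 1-\alpha$, exactly your event inclusion; your remarks on marginal coverage and the asymptotic variant likewise match the paper's surrounding discussion.
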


\noindent \textbf{Proof}: For all $i\in \mathcal{A}(\truemean)$, 
\vspace{-1em}
\begin{align*}  
    \mathrm{P}(i \in \mathcal{S}^{system}) &= \mathrm{P}(\mathds{A}_{i} \cap \mathds{C}_{i} \neq \emptyset)\\
&\geq \mathrm{P}(\truemean \in \mathds{A}_{i} \cap \mathds{C}_{i})\\
&=\mathrm{P}(\truemean \in \mathds{C}_{i})\\
&\geq 1-\alpha,
\end{align*}
where the second equality comes from the fact that $i \in \mathcal{A}(\truemean)$ implies that $\truemean \in \mathds{A}_{i}$, and the second inequality comes from the definition of $\mathds{C}_{i}$ as a $1-\alpha$ confidence region for $\truemean$. $\square$

For the set-wise PAS guarantee, a common confidence region is used for screening all systems, i.e.,   $\mathds{C}_{1}=\mathds{C}_{2}=\dots=\mathds{C}_{k}$; we denote this common confidence region by $\mathds{C}_{u}$.

\begin{theorem}
\label{theorem2}
The subset $\mathcal{S}^{set}=\{i\colon \mathds{A}_{i} \cap \mathds{C}_{u} \neq \emptyset \}$ delivers the set-wise PAS guarantee.
\end{theorem}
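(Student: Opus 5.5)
The argument mirrors the proof of Theorem~\ref{theorem1}, except that we work with the single event $\{\truemean \in \mathds{C}_{u}\}$ instead of one event per system. Because the same region is used to screen every system, this one event covers all acceptable systems simultaneously. We therefore aim to show the set inclusion
\[
\{\truemean \in \mathds{C}_{u}\} \subseteq \{\mathcal{A}(\truemean) \subseteq \mathcal{S}^{set}\},
\]
and then apply monotonicity of probability together with the coverage property of $\mathds{C}_{u}$.

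To establish the inclusion, fix an outcome on which $\truemean \in \mathds{C}_{u}$ and take any $i \in \mathcal{A}(\truemean)$. By the definition $\mathds{A}_{i}=\{\mathbf{M}\colon i \in \mathcal{A}(\mathbf{M})\}$, we have $\truemean \in \mathds{A}_{i}$. Hence $\truemean \in \mathds{A}_{i}\cap\mathds{C}_{u}$, so this intersection is nonempty and $i \in \mathcal{S}^{set}$. Since $i$ was an arbitrary element of $\mathcal{A}(\truemean)$, it follows that $\mathcal{A}(\truemean)\subseteq\mathcal{S}^{set}$ on this outcome. If $\mathcal{A}(\truemean)$ is empty, the inclusion holds trivially.

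Putting these together gives
\[
\mathrm{P}(\mathcal{A}(\truemean)\subseteq \mathcal{S}^{set}) \geq \mathrm{P}(\truemean\in\mathds{C}_{u}) \geq 1-\alpha .
\]
The last inequality holds because $\mathds{C}_{u}$ is a $1-\alpha$ confidence region for $\truemean$. The argument places no restriction on $\truemean$, so the bound holds for every $\truemean\in\mathds{R}^{k\times d}$, which is exactly the set-wise PAS guarantee.

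The only step needing care is that the region is shared: a single coverage event must handle every $i$ at once. With separate regions $\mathds{C}_{i}$, each at level $1-\alpha$, the simultaneous coverage probability would not be controlled. Beyond this observation the argument is routine.
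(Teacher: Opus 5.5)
Your proof is correct and follows essentially the same argument as the paper. The paper also bounds $\mathrm{P}(\mathcal{A}(\truemean)\subseteq\mathcal{S}^{set})$ below by $\mathrm{P}(\truemean\in\mathds{A}_i\cap\mathds{C}_u \text{ for all } i\in\mathcal{A}(\truemean))$, then reduces this to the coverage probability of the shared region $\mathds{C}_u$ using $\truemean\in\mathds{A}_i$ for every acceptable $i$. Your explicit handling of $\mathcal{A}(\truemean)=\emptyset$ is, if anything, slightly more careful than the paper: in that case its equality step should strictly be an inequality.
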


\noindent \textbf{Proof}:
\vspace{-1em}
\begin{align*}  
\mathrm{P}&(\mathcal{A}(\truemean) \subseteq \mathcal{S}^{set}) \\
&= \mathrm{P}(\mathds{A}_{i} \cap \mathds{C}_{u} \neq \emptyset \mathrm{\ for\ all\ }i \in \mathcal{A}(\truemean))\\
&\geq\mathrm{P}(\truemean \in \mathds{A}_{i} \cap \mathds{C}_{u}  \mathrm{\ for\ all\ }i \in \mathcal{A}(\truemean))\\
&=\mathrm{P}(\truemean \in \mathds{C}_{u})\\
&\geq 1-\alpha.\
\square
\end{align*} 

\begin{remark}
The framework and proof techniques introduced here are extremely general as we have yet to make any assumptions about the form of $\mathds{A}_{i}$ and $\mathds{C}_{i}$.
In particular, the confidence regions can be constructed in any way as long as they contain $\mathbf{M}_{0}$ with probability exceeding $1-\alpha$.
The proofs of Theorems~\ref{theorem1} and \ref{theorem2} show that there are two sources of conservativeness with how OSSA procedures deliver PAS guarantees.
The first arises from controlling the error over all configurations by bounding $\sup\limits_{\mathbf{M}_{0} \in\mathds{A}_{i}}\mathrm{P}_{\mathbf{M}_{0}}(\mathds{A}_{i} \cap \mathds{C}_{i} = \emptyset)$ from above, without assuming anything about the geometry of $\mathds{A}_{i}$.
The second comes from the potential inexactness of $\mathds{C}_{i}$ as a $1-\alpha$ confidence region for $\truemean$.
We will later see situations when constructing $\mathds{C}_{i}$ where there is a trade-off between  having more exact confidence and having a geometry more closely aligned with that of $\mathds{A}_{i}$.
\end{remark}

Constructing finite-sample confidence regions often necessitates making assumptions about the distribution of simulation outputs.
Alternatively, bootstrapping or empirical likelihood methods can be applied with very mild assumptions on the distribution to construct regions with asymptotic confidence at or above $1-\alpha$ \citep{efron1994introduction,Owen1988empirical}.
The OSSA framework can be modified to use such regions so that the returned subset delivers either an asymptotic system-wise or set-wise PAS guarantee by altering only the last inequality in the proofs of Theorems~\ref{theorem1} and \ref{theorem2}.

The system-wise and set-wise PAS guarantees control the probability of incorrectly screening out acceptable systems given fixed sample sizes.
Alternatively, one may be interested in the probability that an unacceptable system incorrectly survives screening and how this probability decreases as sample sizes increase.
We say that a screening procedure is \textit{consistent} if it can asymptotically screen out any (non-borderline) unacceptable system.
This concept is made more rigorous in the following definition.
\begin{definition}
\label{def:consistency}
A subset $\mathcal{S}$ is \textit{consistent} if for any $\truemean \in \mathds{R}^{k\times d}$, $\mathrm{P}(i \in \mathcal{S})\rightarrow 0$ as $\min_{i=1,2,\dots,k}n_i\rightarrow \infty$ for all $i$ such that $\truemean \notin \mathrm{cl}(\mathds{A}_i)$, where $\mathrm{cl}(\mathds{A}_i)$ denotes the closure of $\mathds{A}_i$.
\end{definition}
The closure appearing in Definition~\ref{def:consistency} is necessary because for configurations in $\mathrm{cl}(\mathds{A}_i) \setminus \mathds{A}_i$, it is generally impossible to determine whether $\truemean$ belongs to $\mathds{A}_i$ with certainty as sample sizes approach infinity, even when a consistent estimator of $\truemean$ is available.

It is difficult to claim that OSSA procedures are consistent without considering specific definitions of acceptability or making stronger assumptions on the confidence regions they employ. Keeping with our goal of presenting OSSA procedures in their fullest generality, we elect not to do so here but instead offer some insights into when they can be presumed to be consistent.
Roughly speaking, OSSA procedures will be consistent if for any unacceptable system, $i$, the confidence region $\mathds{C}_i$ ``shrinks away from'' the acceptable region $\mathds{A}_i$ as sample sizes increase, making the two sets less likely to intersect.

\subsection{Fragmented OSSA (FOSSA) }
\label{subsection:2.3}
As we have seen, the OSSA framework involves checking the intersection of the acceptable region $\mathds{A}_{i}$ and the confidence region $\mathds{C}_{i}$ for each system $i=1,2,\dots,k$.
Detecting the intersection of two sets is a fundamental problem in computational geometry \citep{toth2017handbook}.
If both $\mathds{A}_{i}$ and $\mathds{C}_{i}$ are polyhedra, detecting their intersection is of $O(\log|\mathds{A}_{i}|\log|\mathds{C}_{i}|)$ complexity, where $|\cdot|$ is the total number of faces of a polyhedron \citep{dobkin1190determining,barba2014optimal}.
More generally, when both $\mathds{A}_{i}$ and $\mathds{C}_{i}$ are convex, their intersection detection can be treated as a convex optimization problem and solved numerically \citep{wang2020distance}.  
However, to the best of our knowledge, there is no general solution to detect the intersection of two sets when one of them is non-convex.
In some of the problem settings we consider, $\mathds{A}_{i}$ is unavoidably non-convex, whereas $\mathds{C}_{i}$ can be designed to have a geometry that simplifies the intersection-detection task.
Nevertheless, in these situations, constructing the returned subset can be made computationally feasible by exploiting the particular geometry of $\mathds{A}_{i}$; examples will be given in Section~\ref{sec:two examples}.

We introduce a subclass of the \ourfrmwrk\ framework in which checking the non-emptiness of $\mathds{A}_{i}\cap\mathds{C}_{i}$ is in many cases straightforward.
To unify the upcoming results for different problem settings, we adopt a definition of acceptability based on the \emph{feasibility} of a system's response vector and its \emph{preferability} when compared to those of other systems.
In other words, acceptable systems are those that have feasible response vectors and are regarded as no worse than any other feasible system, as laid out in Assumption~\ref{assump: Acceptability}.

\begin{assumption}
For a given System $i$, $i = 1, 2, \ldots, k$, the acceptable region is defined as
\label{assump: Acceptability}
\begin{align*}
\mathds{A}_{i}=\{\mathbf{M}\in \mathds{R}^{k\times d}\colon &\boldsymbol{m}_{i}\in\mathds{F} \text{ and } \\
&\boldsymbol{m}_{i}\nsucc \boldsymbol{m}_{j} \text{ for all } j\neq i \},
\end{align*}
where 
\begin{itemize}
    \item $\mathds{F} \subseteq \mathds{R}^{d}$ is a closed set;
    \item $\boldsymbol{m}_{i}\prec \boldsymbol{m}_{j}$ and $\boldsymbol{m}_{i}\nprec \boldsymbol{m}_{j}$ signify that the decision maker does or does not prefer $\boldsymbol{m}_{i}$ to $\boldsymbol{m}_{j}$, respectively;
    \item the preference relationship is transitive, i.e., if $\boldsymbol{m}_{i}\prec \boldsymbol{m}_{j}$ and $\boldsymbol{m}_{j}\prec \boldsymbol{m}_{l}$, then $\boldsymbol{m}_{i}\prec \boldsymbol{m}_{l}$;
    \item any feasible system is preferred to any infeasible system, i.e., $\boldsymbol{m}_{i}\prec \boldsymbol{m}_{j}$ for all $i \in \mathcal{F}(\mathbf{M})$ and $j \in \mathcal{F}(\mathbf{M})^c$ where $\mathcal{F}(\mathbf{M})=\{i\colon \boldsymbol{m}_{i}\in\mathds{F}\}$ denotes the set of feasible systems; and
    \item all infeasible systems are equally preferable, i.e., $\boldsymbol{m}_{i}\nprec \boldsymbol{m}_{j}$ and $\boldsymbol{m}_{i}\nsucc \boldsymbol{m}_{j}$ for all $i,j\in\mathcal{F}(\mathbf{M})^{c}$.
\end{itemize}
\end{assumption}

Assumption~\ref{assump: Acceptability} encompasses virtually all forms of acceptability appearing in the R\&S literature, including single-response optimality, constrained optimality and Pareto optimality.
For example, the problem of minimizing a single response corresponds to the case where $\mathds{F}=\mathds{R}$ and systems with smaller responses are preferred.
Meanwhile, the problem of feasibility determination with respect to a single response corresponds to the case where $\mathds{F}=\{m \in \mathds{R} \colon m\leq \mu^{\dagger}\}$ for some user-specified $\mu^{\dagger}\in\mathds{R}$ and all feasible systems are equally preferable.

We proceed with a few other technical assumptions involving the definition of acceptability and its interaction with the confidence regions.
Let $\boldsymbol{m}_{i} \preccurlyeq \boldsymbol{m}_{j}$ denote that either $\boldsymbol{m}_{i}\prec \boldsymbol{m}_{j}$ or $\boldsymbol{m}_{i}= \boldsymbol{m}_{j}$, where $\boldsymbol{m}_{i}= \boldsymbol{m}_{j}$ means that the two vectors are identical.
For any vector $\boldsymbol{m}\in\mathds{R}^{d}$, we denote the set of response vectors that are better (or worse) than or equal to $\boldsymbol{m}$ by $\mathds{B}(\boldsymbol{m})=\{\boldsymbol{m}'\in\mathds{R}^{d}\colon \boldsymbol{m}'\preccurlyeq \boldsymbol{m} \}$ and $\mathds{W}(\boldsymbol{m})=\{\boldsymbol{m}'\in\mathds{R}^{d}\colon \boldsymbol{m}'\succcurlyeq \boldsymbol{m} \}$, respectively.

\begin{assumption}
\label{assump: existence}
For the definition of acceptability given in Assumption~\ref{assump: Acceptability}, on any closed and bounded set $\mathbb{S}\subseteq\mathds{R}^{d}$,
\begin{equation*}
\mathds{B}^{*}(\mathbb{S})\equiv\{\boldsymbol{m}\in\mathbb{S}\colon \mathds{B}(\boldsymbol{m})\cap\mathbb{S}=\{\boldsymbol{m}\}\}\neq\emptyset,    
\end{equation*}
and
\begin{equation*}
\mathds{W}^{*}(\mathbb{S})\equiv\{\boldsymbol{m}\in\mathbb{S}\colon \mathds{W}(\boldsymbol{m})\cap\mathbb{S}=\{\boldsymbol{m}\}\}\neq\emptyset.
\end{equation*}
\end{assumption}

We adopt Assumption \ref{assump: existence} to exclude cases where there is no generic response vector that is either the most or least preferable over a closed and bounded set.
We have yet to encounter a practical situation where Assumption \ref{assump: existence} is violated.

The defining characteristic of the upcoming subclass of procedures is the construction of the confidence region for the configuration as a Cartesian product of $k$ confidence regions, one for each system's response vector. Mathematically, we mean that $\mathds{C}_{i}=\bigtimes\limits_{j=1}^{k}\mathds{C}_{ij}$ where  $\mathds{C}_{ij}\subseteq\mathds{R}^{d}$ is a confidence region for $\boldsymbol{\theta}_{j}$, $j=1,2,\dots,k$, and $\mathds{C}_{ii}$ is the confidence region for the response vector of the system being screened. (Likewise, for the set-wise PAS guarantee, we will have $\mathds{C}_u = \bigtimes\limits_{j=1}^k\mathds{C}_{uj}$.)

\begin{assumption}
\label{assump: closed & bounded}
For any $i=1,2,\dots,k$ and any $\boldsymbol{m}\in\mathds{C}_{ii}$, $\mathds{B}(\boldsymbol{m})\cap\mathds{C}_{ii}$ is closed and bounded; 
and for all $j\neq i$ and any $\boldsymbol{m}\in\mathds{C}_{ij}$, $\mathds{W}(\boldsymbol{m})\cap\mathds{C}_{ij}$ is closed and bounded.
\end{assumption}
Assumption \ref{assump: closed & bounded} ensures that the confidence region $\mathds{C}_{i}$ does not include any configurations in which the system being screened can be made feasible and infinitely preferable while all other systems can be made infeasible or infinitely unpreferable.
For example, when minimizing a single response, a system is considered infinitely preferable if its response value is negative infinity and infinitely unpreferable if its response value is positive infinity.
This property is desirable because if $\mathds{C}_{i}$ otherwise included such extreme configurations, System $i$ would always be returned.

With Assumptions~\ref{assump: Acceptability}--\ref{assump: closed & bounded} on hand, we introduce a subclass of the \ourfrmwrk\ framework called fragmented OSSA (\oursimplerfrmwk).
The term \textit{fragmented} refers to how the uncertainty about the configuration $\truemean$ is controlled by separately controlling the uncertainty about each system's response vector $\boldsymbol{\theta}_{j}$---in other words, the choice to construct $\mathds{C}_i$ (or $\mathds{C}_u$) as a Cartesian product. As we will see shortly, Assumptions~\ref{assump: Acceptability}--\ref{assump: closed & bounded} will enable FOSSA procedures to check the non-emptiness of $\mathds{A}_{i}\cap\mathds{C}_{i}$ more directly. 
In particular, when screening a System $i$, \oursimplerfrmwk\ procedures work with a certain subset $\mathbb{M}^{*}_{i}\subseteq \mathds{C}_{i}$, where
\begin{align*}
\mathbb{M}^{*}_{i}=\{\mathbf{M}\in&\mathds{C}_{i}\colon \boldsymbol{m}_{i} \in \mathds{F} 
\text{ and }  \nexists \mathbf{M}'\in
\mathds{C}_{i}\setminus\{\mathbf{M}\} 
 \text{ s.t. } \\
 &\boldsymbol{m}'_{i} \preccurlyeq  \boldsymbol{m}_{i} \text{ and }  \boldsymbol{m}'_{j} \succcurlyeq \boldsymbol{m}_{j} \text{ for all }j\neq i\}. 
\end{align*}
In words, $\mathbb{M}^{*}_{i}$ is the set of configurations within $\mathds{C}_{i}$ that are most favorable for System $i$, namely, those for which there is no other configuration in $\mathds{C}_{i}$ for which System $i$ is as or more acceptable and the other systems are as or less acceptable.

In our setup, if System $i$ is plausibly feasible, $\mathbb{M}^{*}_{i}$ can be expressed as a Cartesian product.
\begin{proposition} 
\label{Proposition: Best and Worst }
If Assumptions~\ref{assump: Acceptability}--\ref{assump: closed & bounded} hold and $\mathds{B}^{*}(\mathds{C}_{ii})\cap\mathds{F}\neq\emptyset$, 
\begin{align*}
\mathbb{M}^{*}_{i} &= 
\bigtimes\limits_{j=1}^{k}\mathbb{M}^{*}_{ij} \\
&\equiv \bigtimes\limits_{j=1}^{i-1}\mathds{W}^{*}(\mathds{C}_{ij})\times \mathds{B}^{*}(\mathds{C}_{ii})\times \bigtimes\limits_{j=i+1}^{k}\mathds{W}^{*}(\mathds{C}_{ij}),
\end{align*}
where $\mathds{B}^{*}(\cdot)$ and $\mathds{W}^{*}(\cdot)$ are as defined in Assumption~\ref{assump: existence}.
\end{proposition}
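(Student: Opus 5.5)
The plan is to prove the set equality by double inclusion, relying only on the Cartesian-product structure $\mathds{C}_i=\bigtimes_{j=1}^k\mathds{C}_{ij}$ and on the definitions of $\mathds{B}^*$ and $\mathds{W}^*$. The definition of $\mathbb{M}^*_i$ imposes two conditions on $\mathbf{M}$: feasibility of $\boldsymbol{m}_i$, and a non-domination condition over all of $\mathds{C}_i$. Because $\mathds{C}_i$ is a product, the non-domination condition should decouple coordinate-wise. The feasibility condition is what the hypothesis $\mathds{B}^*(\mathds{C}_{ii})\cap\mathds{F}\neq\emptyset$ is for.

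The first step is to show that $\mathds{B}^*(\mathds{C}_{ii})\subseteq\mathds{F}$. Fix $\bar{\boldsymbol{m}}\in\mathds{B}^*(\mathds{C}_{ii})\cap\mathds{F}$ and suppose some $\boldsymbol{m}\in\mathds{B}^*(\mathds{C}_{ii})$ is infeasible. By Assumption~\ref{assump: Acceptability}, any feasible vector is preferred to any infeasible one, so $\bar{\boldsymbol{m}}\prec\boldsymbol{m}$. Hence $\bar{\boldsymbol{m}}\in\mathds{B}(\boldsymbol{m})\cap\mathds{C}_{ii}$ with $\bar{\boldsymbol{m}}\neq\boldsymbol{m}$, contradicting $\boldsymbol{m}\in\mathds{B}^*(\mathds{C}_{ii})$. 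Here I read the feasibility clause of Assumption~\ref{assump: Acceptability} as a statement about response vectors. This is legitimate because any two vectors can be embedded as rows of some configuration.

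For ``$\supseteq$'', take $\mathbf{M}$ in the product. The first step gives $\boldsymbol{m}_i\in\mathds{F}$, and the product structure gives $\mathbf{M}\in\mathds{C}_i$. Now suppose there were $\mathbf{M}'\in\mathds{C}_i\setminus\{\mathbf{M}\}$ with $\boldsymbol{m}'_i\preccurlyeq\boldsymbol{m}_i$ and $\boldsymbol{m}'_j\succcurlyeq\boldsymbol{m}_j$ for all $j\neq i$. Some row of $\mathbf{M}'$ differs from the corresponding row of $\mathbf{M}$.
\begin{itemize}
\item If the differing row is row $i$, then $\boldsymbol{m}'_i\in\mathds{B}(\boldsymbol{m}_i)\cap\mathds{C}_{ii}\setminus\{\boldsymbol{m}_i\}$, contradicting $\boldsymbol{m}_i\in\mathds{B}^*(\mathds{C}_{ii})$.
\item If the differing row is row $j\neq i$, then $\boldsymbol{m}'_j\in\mathds{W}(\boldsymbol{m}_j)\cap\mathds{C}_{ij}\setminus\{\boldsymbol{m}_j\}$, contradicting $\boldsymbol{m}_j\in\mathds{W}^*(\mathds{C}_{ij})$.
\end{itemize}

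For ``$\subseteq$'', take $\mathbf{M}\in\mathbb{M}^*_i$. Suppose $\boldsymbol{m}_i\notin\mathds{B}^*(\mathds{C}_{ii})$, so there is $\boldsymbol{m}'\in\mathds{C}_{ii}$ with $\boldsymbol{m}'\neq\boldsymbol{m}_i$ and $\boldsymbol{m}'\preccurlyeq\boldsymbol{m}_i$. Let $\mathbf{M}'$ be $\mathbf{M}$ with row $i$ replaced by $\boldsymbol{m}'$. The product structure keeps $\mathbf{M}'\in\mathds{C}_i$, and the other rows are unchanged, so $\boldsymbol{m}'_j\succcurlyeq\boldsymbol{m}_j$ holds via equality. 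Thus $\mathbf{M}'$ violates the defining property of $\mathbb{M}^*_i$. The same single-row replacement argument shows $\boldsymbol{m}_j\in\mathds{W}^*(\mathds{C}_{ij})$ for each $j\neq i$.

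Assumptions~\ref{assump: existence} and~\ref{assump: closed & bounded} are not logically needed for the equality itself. Their role is to guarantee that the factors $\mathds{W}^*(\mathds{C}_{ij})$ are nonempty, so that the product is not vacuous. I would note this but not lean on it. The main subtlety I expect is the feasibility bookkeeping in the first step: justifying that the cross-feasibility preference clause applies to arbitrary vectors, and noting that it is the hypothesis that makes the ``$\supseteq$'' direction satisfy $\boldsymbol{m}_i\in\mathds{F}$.
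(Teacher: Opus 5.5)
Your proof is correct. It takes a more elementary route than the paper in the ``$\subseteq$'' direction.

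In that direction the paper also argues by contradiction, but it invokes Lemma~\ref{lemma:more_fav}. That lemma rests on Assumptions~\ref{assump: existence}--\ref{assump: closed & bounded}, i.e., compactness of $\mathds{B}(\boldsymbol{m})\cap\mathds{C}_{ii}$ and $\mathds{W}(\boldsymbol{m})\cap\mathds{C}_{ij}$ plus existence of extreme elements. The paper uses it to replace every offending row of $\mathbf{M}$ at once by a strictly better (resp.\ worse) vector lying in $\mathds{B}^{*}(\mathds{C}_{ii})$ (resp.\ $\mathds{W}^{*}(\mathds{C}_{ij})$). You instead observe that the definition of $\mathbb{M}^{*}_{i}$ only asks for some competitor $\mathbf{M}'\in\mathds{C}_i\setminus\{\mathbf{M}\}$. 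So it suffices to swap a single row for any vector witnessing $\mathds{B}(\boldsymbol{m}_i)\cap\mathds{C}_{ii}\neq\{\boldsymbol{m}_i\}$ (or its $\mathds{W}$ analogue), which comes straight from the definitions. This shows the product characterization holds without Assumptions~\ref{assump: existence}--\ref{assump: closed & bounded}; as you say, those assumptions only ensure the factors are nonempty.

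For ``$\supseteq$'' the paper says the inclusion ``follows directly from the definitions.'' Your step $\mathds{B}^{*}(\mathds{C}_{ii})\subseteq\mathds{F}$ supplies the piece that is not purely definitional. It uses the feasible-over-infeasible clause of Assumption~\ref{assump: Acceptability}, and it is exactly where the hypothesis $\mathds{B}^{*}(\mathds{C}_{ii})\cap\mathds{F}\neq\emptyset$ enters.

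What the paper's heavier argument buys is a competitor whose replaced rows already lie in the product. That device is reused in the proof of Theorem~\ref{theorem: FOSSA}, where one genuinely needs an element of $\mathbb{M}^{*}_{i}$ dominating a given $\mathbf{M}\in\mathds{A}_i\cap\mathds{C}_i$. There the existence of dominating extreme elements, and hence Assumptions~\ref{assump: existence}--\ref{assump: closed & bounded}, is actually needed.
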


The reason for considering the special subset $\mathbb{M}^{*}_i$ is that when screening System $i$, it suffices for FOSSA procedures to check the non-emptiness of $\mathds{A}_{i} \cap \mathbb{M}^{*}_{i}$. 
\begin{theorem} 
\label{theorem: FOSSA}
If Assumptions~\ref{assump: Acceptability}--\ref{assump: closed & bounded} hold, $\mathds{A}_{i} \cap \mathds{C}_{i}\neq\emptyset$ if and only if $\mathds{A}_{i} \cap \mathbb{M}^{*}_{i}\neq\emptyset$.
\end{theorem}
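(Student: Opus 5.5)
The "if" direction is immediate, because $\mathbb{M}^{*}_{i}\subseteq\mathds{C}_{i}$ by definition, so any configuration in $\mathds{A}_{i}\cap\mathbb{M}^{*}_{i}$ also lies in $\mathds{A}_{i}\cap\mathds{C}_{i}$. All the work is in the "only if" direction. My plan is to take an arbitrary $\mathbf{M}\in\mathds{A}_{i}\cap\mathds{C}_{i}$ and move it, coordinate block by coordinate block, to a configuration $\mathbf{M}^{*}\in\mathbb{M}^{*}_{i}$ without losing acceptability of System $i$.

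\textbf{Step 1 (existence of the replacement).} Since $\mathbf{M}\in\mathds{A}_{i}$, we have $\boldsymbol{m}_{i}\in\mathds{F}$. By Assumption~\ref{assump: closed & bounded}, the set $\mathbb{S}_i:=\mathds{B}(\boldsymbol{m}_{i})\cap\mathds{C}_{ii}$ is closed and bounded, so Assumption~\ref{assump: existence} gives some $\boldsymbol{m}^{*}_{i}\in\mathds{B}^{*}(\mathbb{S}_i)$. For each $j\neq i$, the same two assumptions give some $\boldsymbol{m}^{*}_{j}\in\mathds{W}^{*}(\mathds{W}(\boldsymbol{m}_{j})\cap\mathds{C}_{ij})$. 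Transitivity of $\prec$ lets me upgrade these to global extremality.
\begin{itemize}
\item Suppose some $\boldsymbol{m}'\in\mathds{C}_{ii}$ satisfies $\boldsymbol{m}'\preccurlyeq\boldsymbol{m}^{*}_{i}$. Then $\boldsymbol{m}'\preccurlyeq\boldsymbol{m}_{i}$, so $\boldsymbol{m}'\in\mathbb{S}_i$, and hence $\boldsymbol{m}'=\boldsymbol{m}^{*}_{i}$. Thus $\boldsymbol{m}^{*}_{i}\in\mathds{B}^{*}(\mathds{C}_{ii})$.
\item The analogous argument gives $\boldsymbol{m}^{*}_{j}\in\mathds{W}^{*}(\mathds{C}_{ij})$ for each $j\neq i$.
\end{itemize}
Because $\mathds{C}_{i}$ is a Cartesian product, $\mathbf{M}^{*}=(\boldsymbol{m}^{*}_{1},\dots,\boldsymbol{m}^{*}_{k})^{\intercal}\in\mathds{C}_{i}$.

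\textbf{Step 2 (acceptability is preserved).} I need $\boldsymbol{m}^{*}_{i}\in\mathds{F}$ and $\boldsymbol{m}^{*}_{i}\nsucc\boldsymbol{m}^{*}_{j}$ for all $j\neq i$.
\begin{itemize}
\item \emph{Feasibility.} If $\boldsymbol{m}^{*}_{i}=\boldsymbol{m}_{i}$ there is nothing to show. Otherwise $\boldsymbol{m}^{*}_{i}\prec\boldsymbol{m}_{i}$ with $\boldsymbol{m}_{i}$ feasible. An infeasible vector is never preferred to a feasible one, since the reverse preference holds and preference between these two classes cannot go both ways. Hence $\boldsymbol{m}^{*}_{i}$ is feasible.
\item \emph{Non-domination.} Suppose $\boldsymbol{m}^{*}_{j}\prec\boldsymbol{m}^{*}_{i}$ for some $j\neq i$. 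Chaining $\boldsymbol{m}_{j}\preccurlyeq\boldsymbol{m}^{*}_{j}\prec\boldsymbol{m}^{*}_{i}\preccurlyeq\boldsymbol{m}_{i}$ through transitivity yields $\boldsymbol{m}_{j}\prec\boldsymbol{m}_{i}$. This contradicts $\mathbf{M}\in\mathds{A}_{i}$.
\end{itemize}
So $\mathbf{M}^{*}\in\mathds{A}_{i}$.

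\textbf{Step 3 (membership in $\mathbb{M}^{*}_{i}$).} Since $\boldsymbol{m}^{*}_{i}\in\mathds{B}^{*}(\mathds{C}_{ii})\cap\mathds{F}$, the hypothesis of Proposition~\ref{Proposition: Best and Worst } holds, and it gives $\mathbb{M}^{*}_{i}=\bigtimes_j\mathbb{M}^{*}_{ij}$. Step 1 shows each block of $\mathbf{M}^{*}$ lies in the corresponding factor, so $\mathbf{M}^{*}\in\mathbb{M}^{*}_{i}$. Alternatively, one can verify the definition of $\mathbb{M}^{*}_{i}$ directly: any $\mathbf{M}'$ dominating $\mathbf{M}^{*}$ in the stated sense must agree with it block by block, by the extremality established in Step 1.

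\textbf{Main obstacle.} I expect the hardest part to be the relation bookkeeping in Step 2, specifically mixing $\preccurlyeq$ with strict $\prec$ under transitivity and confirming that feasibility is inherited by better vectors. This may need the last two bullets of Assumption~\ref{assump: Acceptability}, namely that infeasible vectors are equally preferable and are dominated by all feasible ones. If the treatment of infeasible vectors turns out to be subtle, I will first try splitting into the cases $\boldsymbol{m}^{*}_{j}$ feasible and $\boldsymbol{m}^{*}_{j}$ infeasible.
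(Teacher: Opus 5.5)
Your proposal is correct and follows essentially the paper's argument. Step~1 is exactly the content of Lemma~\ref{lemma:more_fav}, applied to $\mathds{B}(\boldsymbol{m}_i)\cap\mathds{C}_{ii}$ and $\mathds{W}(\boldsymbol{m}_j)\cap\mathds{C}_{ij}$ and upgraded by transitivity. Steps~2--3 reproduce the paper's transitivity chain and its appeal to Proposition~\ref{Proposition: Best and Worst }; the only difference is that your direct construction replaces the paper's proof by contradiction and its case split on $\mathds{B}^{*}(\mathds{C}_{ii})\cap\mathds{F}$. Your feasibility step is also sound: an infeasible $\boldsymbol{m}^{*}_{i}\prec\boldsymbol{m}_{i}$ would give $\boldsymbol{m}^{*}_{i}\prec\boldsymbol{m}^{*}_{i}$ by transitivity, contradicting the last bullet of Assumption~\ref{assump: Acceptability}. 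The paper merely asserts this point.
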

As a result of Theorem \ref{theorem: FOSSA}, \oursimplerfrmwk\ procedures return the subset $\mathcal{S}=\{i\colon\mathds{A}_{i}\cap\mathbb{M}^{*}_{i}\neq\emptyset\}$.
In many cases, checking the non-emptiness of $\mathds{A}_{i}\cap\mathbb{M}^{*}_{i}$ is easier than checking that of $\mathds{A}_{i}\cap\mathds{C}_{i}$, such as when $\mathbb{M}_{i}^{*}$ is a singleton or the configurations in $\mathbb{M}_{i}^{*}$ share a consistent preference regarding System $i$.
In these cases, the intersection-detection problem reduces to verifying whether a single configuration\ is in $\mathds{A}_{i}$.
In Figure~\ref{fig:Fig1} of Section~\ref{sec:single_resp}, we illustrate $\mathds{A}_i$ and $\mathbb{M}_{i}^{*}$ in the classical setting where the system with the smallest response is acceptable.
In Sections \ref{sec:constrained_opt} and \ref{sec:pareto}, we present \oursimplerfrmwk\ procedures designed for two commonly used definitions of acceptability.

A further consequence of Theorem~\ref{theorem: FOSSA} is that FOSSA procedures will be consistent if $\mathbb{M}^{*}_{i}$ converges to some point or set outside of $\mathds{A}_{i}$ for those systems that can be screened out asymptotically.
In cases where the confidence region $\mathds{C}_{i}$ is built around a consistent estimator $\widehat{\mathbf{M}}_0$, this will generally occur since $\truemean \notin \mathrm{cl}(\mathds{A}_{i})$ for those systems.
It is our belief that the vast majority of FOSSA procedures users are likely to work with will be consistent.
This conjecture is partially supported by Assumption~\ref{assump: closed & bounded}, which ensures that the confidence regions used for screening unacceptable systems should be bounded in a direction that would otherwise allow that system to be made infinitely preferable, and hence always returned.
Moreover, the FOSSA procedures tested in the numerical experiments can all be shown to be consistent.

We conclude this subsection with a brief discussion of how FOSSA procedures can be parallelized within a master-worker computing framework. 
Consider a divide-and-conquer scheme in which on each worker processor, FOSSA is applied locally on an assigned subset of systems and the surviving systems from each worker (or, more precisely, their data) are then returned to the master processor for a final application of the same FOSSA procedure.
The resulting final subset of systems is the same as that returned for a single-processor implementation of the FOSSA procedure.
This result is summarized in Theorem~\ref{theorem: Divide and Conquer} below; the proof appears in the electronic companion.
\begin{theorem}
\label{theorem: Divide and Conquer}
If Assumptions~\ref{assump: Acceptability}--\ref{assump: closed & bounded} hold and the confidence regions $\mathds{C}_{ij}$ for all $i,j=1,2,\dots,k$ are held fixed, FOSSA procedures can be parallelized using a divide-and-conquer scheme without compromising their screening power.
\end{theorem}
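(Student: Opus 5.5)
The plan is to formalize the divide-and-conquer scheme and then prove two set inclusions between its output and the single-processor output.

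\textbf{Setup.} Partition $\{1,\dots,k\}$ into groups $G_1,\dots,G_p$. Worker $g$ returns
\[
\mathcal{S}_g=\{i\in G_g\colon \mathds{A}^{G_g}_i\cap \mathbb{M}^{*,G_g}_i\neq\emptyset\},
\]
where the superscript means that only the systems in $G_g$ enter the configuration. The master then applies the same rule to $U=\bigcup_g \mathcal{S}_g$ and returns $\mathcal{S}^{U}$. Because the $\mathds{C}_{ij}$ are held fixed, every test uses the same marginal regions. I will assume, as ``held fixed'' suggests, that $\mathds{C}_{ij}$ does not depend on which system is being screened, so $\mathds{C}_{ij}=\mathds{C}_{jj}$. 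The goal is to show $\mathcal{S}^{U}=\mathcal{S}$, where $\mathcal{S}$ is the single-processor output.

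\textbf{Reduction to a pointwise criterion.} By Theorem~\ref{theorem: FOSSA}, Proposition~\ref{Proposition: Best and Worst } and the Cartesian-product structure, System $i$ survives screening against an index set $J$ if and only if some $\boldsymbol{b}_i\in\mathds{B}^*(\mathds{C}_{ii})\cap\mathds{F}$ exists such that, for every $j\in J\setminus\{i\}$, some $\boldsymbol{w}_j\in \mathds{W}^*(\mathds{C}_{ij})$ satisfies $\boldsymbol{b}_i\nsucc\boldsymbol{w}_j$.

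\begin{itemize}
\item If $\mathds{B}^*(\mathds{C}_{ii})\cap\mathds{F}=\emptyset$, then System $i$ is never plausibly feasible, because every point of $\mathds{C}_{ii}$ is dominated by or equal to a point of $\mathds{B}^*$ and feasibility is preferred. In that case $i$ is eliminated in every run, whatever $J$ is.
\item Since the $\mathds{W}^*$ choices can be made independently across $j$, System $i$ fails against $J$ exactly when, for every feasible best $\boldsymbol{b}_i$, some $j\in J$ has $\boldsymbol{b}_i\succ\boldsymbol{w}$ for all $\boldsymbol{w}\in\mathds{W}^*(\mathds{C}_{ij})$.
\end{itemize}

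Write $\boldsymbol{b}_i\succ \mathds{C}_{j}$ for this relation, meaning $\boldsymbol{b}_i\succ\boldsymbol{w}$ for every $\boldsymbol{w}\in\mathds{W}^*(\mathds{C}_{jj})$. By Assumption~\ref{assump: existence}, Assumption~\ref{assump: closed & bounded} and transitivity, it implies that $\boldsymbol{b}_i$ is strictly worse than every point of $\mathds{C}_{jj}$.

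\textbf{First inclusion, $\mathcal{S}\subseteq\mathcal{S}^U$ (monotonicity).} If $i$ survives against all $k$ systems, the same $\boldsymbol{b}_i$ and $\boldsymbol{w}_j$ witness survival against any $J\ni i$. So $i\in\mathcal{S}_g$ for its group $g$, hence $i\in U$, and then $i\in\mathcal{S}^U$.

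\textbf{Second inclusion, $\mathcal{S}^U\subseteq\mathcal{S}$ (the main step).} Suppose $i\in\mathcal{S}^U$ but $i\notin\mathcal{S}$. Take the witness $\boldsymbol{b}_i$ from the master's test. Since $i\notin\mathcal{S}$, some $j_1\notin U$ satisfies $\boldsymbol{b}_i\succ\mathds{C}_{j_1}$.

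\begin{itemize}
\item Because $j_1$ was eliminated at its worker, every feasible best point of $\mathds{C}_{j_1j_1}$ satisfies $\boldsymbol{b}\succ\mathds{C}_{j_2}$ for some $j_2$ in the same group. (If $\mathds{C}_{j_1j_1}$ has no feasible best point, then $\boldsymbol{b}_i\succ\mathds{C}_{j_1}$ is impossible, since a feasible $\boldsymbol{b}_i$ cannot be worse than an all-infeasible set.)
\item Pick such a feasible best $\boldsymbol{b}_{j_1}$ and any $\boldsymbol{w}\in\mathds{W}^*(\mathds{C}_{j_1j_1})$. Transitivity gives $\boldsymbol{b}_i\succ\boldsymbol{w}\succcurlyeq\boldsymbol{b}_{j_1}\succ\mathds{C}_{j_2}$, so $\boldsymbol{b}_i\succ\mathds{C}_{j_2}$.
\item Iterate this step. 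Irreflexivity of $\succ$, which follows from transitivity together with the relations above, rules out repeating an index. Since the group is finite, the chain must reach an index $j_r\in\mathcal{S}_g\subseteq U$ with $\boldsymbol{b}_i\succ\mathds{C}_{j_r}$.
\item That contradicts the choice of $\boldsymbol{b}_i$ as a witness of survival at the master.
\end{itemize}

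\textbf{Expected obstacle.} The delicate part is making the comparison $\boldsymbol{w}\succcurlyeq\boldsymbol{b}_{j_1}$ rigorous. With a partial order such as Pareto dominance, a worst point and a best point of the same region need not be comparable. If so, I would instead pick $\boldsymbol{w}\in\mathds{W}(\boldsymbol{b}_{j_1})\cap\mathds{W}^*(\mathds{C}_{j_1j_1})$. Such a point exists by applying Assumption~\ref{assump: existence} to the closed bounded set $\mathds{W}(\boldsymbol{b}_{j_1})\cap\mathds{C}_{j_1j_1}$ from Assumption~\ref{assump: closed & bounded}. This keeps the chain argument valid.
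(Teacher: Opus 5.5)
Your argument is correct under your identification $\mathds{C}_{ij}=\mathds{C}_{jj}$, and it takes a different route from the paper's.

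\textbf{Comparison with the paper.} The paper rewrites survival against $\mathcal{L}$ as the set condition $\mathds{C}_{ii}\cap\mathds{F}\cap\bigl(\bigcup_{j\in\mathcal{L}\setminus\{i\}}\mathds{W}^{+}(\mathds{C}_{ij})\bigr)^{c}\neq\emptyset$ (Lemma~\ref{lemma: FOSSA on a subset}). Your first inclusion is its monotonicity result (Lemma~\ref{lemma: Parallel Sufficency}). For the second inclusion it proves the stronger Lemma~\ref{lemma: Parallel necessity}: $\mathcal{S}(\mathcal{L})\subseteq\mathcal{S}(\mathcal{X})$ for \emph{every} $\mathcal{L}\supseteq\mathcal{S}(\mathcal{X})$. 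It does this with a chain through $\mathcal{X}\setminus\mathcal{S}(\mathcal{X})$, driven by inclusions among the sets $\mathds{W}^{+}(\cdot)$ and the nesting Lemma~\ref{lemma: W+ nested relationship}. You instead freeze one witness $\boldsymbol{b}_i$ and push it down a chain inside a single worker's group, using only that discarded systems were discarded by FOSSA there. Your route is shorter and pointwise. The paper's gives invariance of the master's output for any superset of $\mathcal{S}(\mathcal{X})$, not just unions of worker outputs. Your witness chain also proves that if you run it through $\mathcal{X}\setminus\mathcal{L}$ and use elimination in the full problem.

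\textbf{Two small repairs.}
\begin{itemize}
\item Irreflexivity of $\succ$ does not follow from transitivity. It is part of reading $\prec$ as strict preference, and the paper uses it tacitly too.
\item You need irreflexivity again when the chain ends at $j_r=i$, which can happen if $i$ is in the same group. The master's witness only covers $U\setminus\{i\}$; there, $\boldsymbol{b}_i\succ\mathds{C}_{i}$ forces $\boldsymbol{b}_i\succ\boldsymbol{b}_i$.
\end{itemize}
Your ``expected obstacle'' is moot. You already showed that $\boldsymbol{b}_i$ is worse than every point of $\mathds{C}_{j_1j_1}$, so $\boldsymbol{b}_i\succ\boldsymbol{b}_{j_1}$ follows directly.

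\textbf{The real issue is scope.} ``Held fixed'' means the workers and the master reuse the same $(1-\alpha)^{1/k}$ regions; it does not mean $\mathds{C}_{ij}=\mathds{C}_{jj}$. Your identification excludes FOSSA Half-Box, where $\mathds{C}^{\mathrm{hb}}_{jj}$ is a lower half-box and $\mathds{C}^{\mathrm{hb}}_{ij}$ is an upper one.

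Some link between a system's two regions is genuinely needed. Take $d=1$, minimization, and
$\mathds{C}_{11}=[10,\infty)$, $\mathds{C}_{21}=\mathds{C}_{31}=(-\infty,0]$,
$\mathds{C}_{22}=[5,\infty)$, $\mathds{C}_{12}=\mathds{C}_{32}=(-\infty,5]$,
$\mathds{C}_{33}=[3,\infty)$, $\mathds{C}_{13}=\mathds{C}_{23}=(-\infty,20]$.
Assumptions~\ref{assump: Acceptability}--\ref{assump: closed & bounded} hold and $\mathcal{S}=\emptyset$. Yet with groups $\{1,2\}$ and $\{3\}$, the master returns $\{3\}$.

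The paper's proof silently makes the same identification. Its step producing $i''$ treats $i'\notin\mathcal{S}(\mathcal{X})$ as a statement about $\mathds{C}_{ii'}$ and $\mathds{C}_{ij}$ rather than $\mathds{C}_{i'i'}$ and $\mathds{C}_{i'j}$.

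\textbf{How to extend your proof.} It goes through under a weaker hypothesis: $\mathds{C}_{ij}\equiv\mathds{C}^{o}_{j}$ for all $i\neq j$, and $\mathds{C}_{jj}\cap\mathds{C}^{o}_{j}\neq\emptyset$. This holds for half-boxes since $l_{jr}\le u_{jr}$.
\begin{itemize}
\item In the chain step, take $\boldsymbol{x}\in\mathds{C}_{j_1j_1}\cap\mathds{C}^{o}_{j_1}$. Then $\boldsymbol{b}_i\succ\boldsymbol{x}$, so $\boldsymbol{x}$ is feasible.
\item By Lemma~\ref{lemma:more_fav}(i), choose $\boldsymbol{b}_{j_1}\in\mathds{B}^{*}(\mathds{C}_{j_1j_1})$ with $\boldsymbol{b}_{j_1}\preccurlyeq\boldsymbol{x}$. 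It is feasible, and $\boldsymbol{b}_i\succ\boldsymbol{b}_{j_1}$.
\item If the chain ends at $j_r=i$, a point $\boldsymbol{y}\in\mathds{C}_{ii}\cap\mathds{C}^{o}_{i}$ with $\boldsymbol{y}\prec\boldsymbol{b}_i$ contradicts $\boldsymbol{b}_i\in\mathds{B}^{*}(\mathds{C}_{ii})$.
\item A repeated index $j_s$ gives $\boldsymbol{b}_{j_s}\succ\boldsymbol{x}_s\succcurlyeq\boldsymbol{b}_{j_s}$, which is impossible.
\end{itemize}
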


In cases where checking the non-emptiness of $\mathds{A}_{i}\cap\mathds{C}_{i}$ is time-intensive and its complexity grows with $k$, a divide-and-conquer approach to screening can offer significant time savings relative to a centralized approach.
Moreover, this approach can also reduce communication and memory overhead in cases where the simulation runs are distributed among workers, because workers can immediately perform screening without waiting for other workers to share their results, and data from screened out systems can be discarded rather than reported back to the master.

\subsection{Confidence Regions}
\label{subsetion:section 2.4}
Before giving examples of FOSSA procedures, we briefly preview the confidence regions that the various procedures will employ.

When systems are simulated independently, constructing Cartesian product confidence regions $ \mathds{C}_{i}$ with an overall confidence level of $1-\alpha$ is straightforward.
Specifically, if $\mathds{C}_{i1}, \mathds{C}_{i2},\dots, \mathds{C}_{ik}$ each have a confidence level of $(1-\alpha)^{1/k}$, then
\begin{align*} 
\mathrm{P}\left(\truemean \in \mathds{C}_{i}\right)
&=\mathrm{P}\left(\truemean \in \bigtimes\limits_{j=1}^{k}\mathds{C}_{ij}\right) \\
&=\prod\limits_{j=1}^{k}\mathrm{P}\left(\boldsymbol{\theta}_{j}\in\mathds{C}_{ij}\right) \\
&\geq\left(1-\alpha\right)^{(1/k)k} \\
&=1-\alpha,
\end{align*} 
where the independence assumption is applied in the second equality.
Even when systems are not simulated independently, the Bonferroni correction can be used to additively split the error $\alpha$ across each system, though the resulting \oursimplerfrmwk\ procedures would be slightly more conservative.

\begin{remark}
For certain R\&S problems and procedures, inducing dependence in the outputs across systems by using common random numbers (CRN) can prove advantageous. 
However, using CRN usually requires that all systems receive a common number of replications.
Preliminary numerical experiments suggest that applying the FOSSA procedures outlined in this paper on outputs generated via CRN will not undermine the promised PAS guarantee.
Furthermore, applying CRN tends to improve the performance of the worst returned system, but also increases the variance of the size of the returned subset.
We leave the development of OSSA procedures that exploit CRN with confidence regions designed for dependent sampling for future research.
\end{remark}

\begin{figure}[tb]
    \centering
    \includegraphics[height=4.6cm]{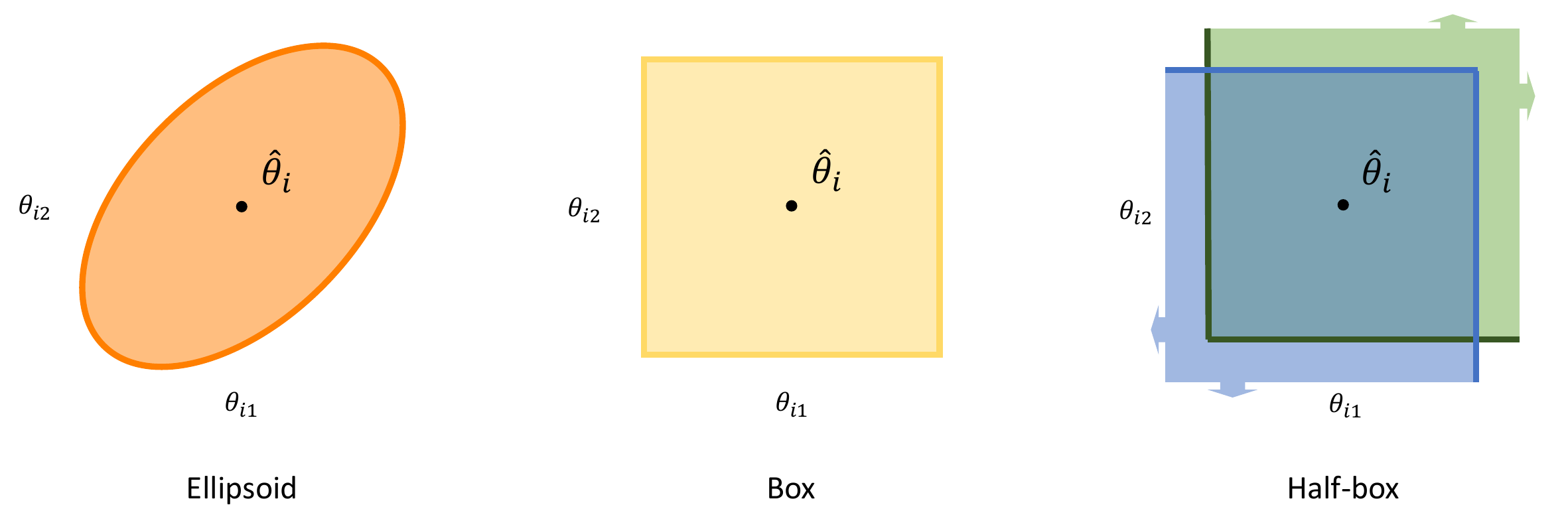}
    
    \caption{Three types of confidence regions for each system's response vector when $d=2$. The Half-Box type consists of two distinct confidence regions: one for the system being screened (blue) and the other for the other systems (green).}
    \label{fig: confidence regions}
\end{figure}

We now discuss three common geometries of $\mathds{C}_{ij}$ for $i, j=1,2,\dots,k$, namely, an ellipsoid, a box, and a half-box in $\mathds{R}^d$.
These regions are shown in Figure~\ref{fig: confidence regions} for the $d=2$ setting and described in more detail in the EC.
An ellipsoid-shaped confidence region can be constructed when, for example, $\boldsymbol{\theta}_j$ is the expected value of outputs following an elliptical distribution (e.g., multivariate normal distribution and  multivariate $t$-distribution) \citep{frahm2004generalized}, or when the maximum likelihood estimator for $\boldsymbol{\theta}_j$ is asymptotically normally distributed and unbiased. Alternatively, a box-shaped (hyperrectangle) confidence region can be constructed as the Cartesian product of $d$ two-sided confidence intervals corresponding to each component of $\boldsymbol{\theta}_j$. Let $l_{ijr}^{\mathrm{box}}$ and $u_{ijr}^{\mathrm{box}}$ denote the associated lower and upper confidence bounds, respectively, for $\theta_{jr}$, $r = 1, 2, \ldots, d$.
For notational convenience, we assume that the same confidence intervals are used regardless of which system is being screened, i.e., $l_{jr}^{\mathrm{box}}=l_{1jr}^{\mathrm{box}}=l_{2jr}^{\mathrm{box}}=\dots=l_{kjr}^{\mathrm{box}}$ and $u_{jr}^{\mathrm{box}}=u_{1jr}^{\mathrm{box}}=u_{2jr}^{\mathrm{box}}=\dots=u_{kjr}^{\mathrm{box}}$ for all $j=1,2,\dots,k$ and $r=1,2,\dots,d$.
The overall confidence level of the box-shaped confidence region can be controlled using probability inequalities, such as the Bonferroni or \v{S}id\'{a}k correction.

By construction, the ellipsoid-shaped and box-shaped confidence regions are not oriented differently depending on which system is being screened.
Consequently, their geometries are better suited to delivering the stronger set-wise PAS guarantee.
To deliver the system-wise PAS guarantee, we turn to half-box confidence regions, which, in contrast to box-shaped regions, are the Cartesian product of $d$ \textit{one-sided} confidence intervals for each component of $\boldsymbol{\theta}_j$. 
The screening power of FOSSA procedures can be enhanced by orienting the half-boxes depending on which system is being screened.
More specifically, the confidence region for $\boldsymbol{\theta}_{i}$ can be oriented to exclude overly unfavorable responses, whereas the confidence regions for $\boldsymbol{\theta}_{j}$, $j \neq i$ can be oriented to exclude overly favorable responses.
For example, when smaller responses are preferred,
we orient the half-boxes such that the confidence intervals are bounded from below for System $i$ and bounded from above for all the other Systems $j \neq i$.
Formally, the half-box confidence region for $\boldsymbol{\theta}_i$ is defined as
$$\mathds{C}_{ii}^{\mathrm{hb}}=\{\boldsymbol{m}\in\mathds{R}^{d}\colon m_r\geq l_{iir}^{\mathrm{hb}} \text{ for all }r=1,2,\dots,d\},$$
and for each competing System $j\neq i$, the confidence region for their response is
$$\mathds{C}_{ij}^{\mathrm{hb}}=\{\boldsymbol{m}\in\mathds{R}^{d}\colon m_r\leq u_{ijr}^{\mathrm{hb}} \text{ for all }r=1,2,\dots,d\},$$
where $l_{iir}^{\mathrm{hb}}$ and $u_{ijr}^{\mathrm{hb}}$ denote the one-sided confidence bounds used to form the half-boxes.

Our numerical experiments indicate that using a simple Bonferroni correction across responses---as would usually be done for the box and half-box constructions---achieves comparable screening power to using ellipsoid confidence regions.
The reason is that, although the box and half-box confidence regions control the error more conservatively, their geometries are better aligned with the acceptable regions for definitions of acceptability that involve optimizing individual responses, e.g,. optimization with multiple objectives or stochastic constraints.
Therefore, depending on the desired PAS guarantee, we recommend using the box or half-box confidence regions over the ellipsoid confidence region, as they lead to a simpler intersection-detection problem.

\begin{remark}
It may be appealing to leverage more sophisticated approaches for controlling the family-wise error rate when screening for acceptability. For instance, the set-wise PAS guarantee can be viewed as a test of multiple hypotheses, namely, whether each System $i = 1, 2, \ldots, k$ is acceptable or not. However, testing each hypothesis $\{i \in \mathcal{A}(\truemean)\}$ generally necessitates controlling the allowable error across systems, especially for definitions of acceptability that feature preference relationships, i.e., those captured by the conditions $\boldsymbol{m}_{i}\nsucc \boldsymbol{m}_{j} \text{ for all } j\neq i$ in Assumption~\ref{assump: Acceptability}.
As a result, applying standard approaches from multiple hypothesis testing, such as Holm's step-down method \citep{holm1979simple}, would lead to a redundant splitting of the error across systems: once across hypotheses and again within each hypothesis.
In contrast, OSSA and FOSSA procedures deliver the set-wise PAS guarantee by controlling the error across systems once, through the use of a common confidence region $\mathds{C}_{u}$, and do not appear to be easily improved upon in this respect.
\end{remark}

\section{Examples of FOSSA Procedures}
\label{sec:two examples}
In this section, we describe  FOSSA procedures that employ box and half-box confidence regions under two definitions of acceptability that arise when decision makers are concerned with multiple aspects of system performance: optimization with stochastic constraints and Pareto optimality.
The Cartesian product structure of the box and half-box confidence regions allows the intersection-detection problem to be solved analytically, resulting in procedures whose time complexities match those of identifying acceptable systems when the true configuration is known.
Theoretical results related to intersection detection for confidence regions with general geometry, as well as an illustrative procedure for the ellipsoid confidence region, are provided in the EC.

\subsection{Optimization with Stochastic Constraints}
\label{sec:constrained_opt}

This subsection focuses on the case where acceptability is defined as optimality subject to one or more stochastic constraints.
Without loss of generality, we designate the first response as the primary response and the remaining responses as secondary responses.
System $i$ is considered \emph{feasible} if its response vector $\boldsymbol{\theta}_{i}$ is in some polyhedron $\mathds{F}=\{\boldsymbol{m} \in \mathds{R}^{d} \colon m_{r} \leq \theta_{r}^{\dagger} \ \mathrm{for} \ r=2,3,\dots,d\}$, where the vector $\boldsymbol{\theta}^{\dagger} \equiv (\theta_{2}^{\dagger},\theta_{3}^{\dagger},\dots,\theta_{d}^{\dagger})\in \mathds{R}^{d-1}$ is specified by the decision maker.

Among all feasible systems, those that have the lowest primary response are called \emph{constrained optimal} systems.  
The acceptable set is
$\mathcal{A}(\truemean)=\{i \colon 
i\in \mathcal{F}(\mathbf{M}_0)
\mathrm{\ and\ } 
\theta_{i1}\leq\theta_{j1} \mathrm{\ for \ all \ } j\in\mathcal{F}(\mathbf{M}_0) \} $ where $\mathcal{F}(\mathbf{M}_0) = \{i \colon \boldsymbol{\theta}_i \in \mathds{F}\}$, and the corresponding acceptable regions are
$\mathds{A}_{i}=\{\mathbf{M}\in\mathds{R}^{k\times d} \colon 
\boldsymbol{m}_{i}\in \mathds{F}
\mathrm{\ and\ } 
m_{i1}\leq m_{j1} \mathrm{\ for \ all \ } j  
\mathrm{\ s.t.\ } 
\boldsymbol{m}_{j} \in \mathds{F} \}
\mathrm{\ for \ } i=1,2,\dots,k
$.
For any $d\geq2$ and $k\geq2$, $\mathds{A}_{i}$ is non-convex.
Nevertheless, \oursimplerfrmwk\ procedures designed for this problem setting will be able to check the non-emptiness of $\mathds{A}_{i}\cap\bigtimes\limits_{j=1}^{k}\mathbb{M}^{*}_{ij}$. 

A key insight in this setting is that each system can be succinctly described by its primary response and its feasibility.
Consequently, when screening System $i$, it will be possible to further simplify checking the non-emptiness of $\mathds{A}_{i}\cap\bigtimes\limits_{j=1}^{k}\mathbb{M}^{*}_{ij}$ to checking whether an extended real-value vector, $\tau_i  = (\tau_{i1}, \tau_{i2}, \dots, \tau_{ik})$, describes a configuration in which System $i$ is a constrained optimal system.
Define the scalars
\vspace{-1em}
\begin{align}
\label{optimization problem 1_Constrained Opt}
\tau_{ii} = \begin{cases}
\min m_{1}\ \mathrm{s.t.}\ \boldsymbol{m}\in \mathds{C}_{ii}\cap\mathds{F}&\mathrm{if}\ \mathds{C}_{ii}\cap\mathds{F}\neq\emptyset,\\
\infty &\mathrm{if}\ \mathds{C}_{ii}\cap\mathds{F}=\emptyset;	
\end{cases}
\end{align}
and
\vspace{-1em}
\begin{align}
\label{optimization problem 2_Constrained Opt}
\tau_{ij} = \begin{cases}
\max m_{1}\ \mathrm{s.t.}\ \boldsymbol{m}\in\mathds{C}_{ij} &\mathrm{if}\ \mathds{C}_{ij}\cap\mathds{F}^{c}=\emptyset,\\
\infty &\mathrm{if}\ \mathds{C}_{ij}\cap\mathds{F}^{c}\neq\emptyset.
\end{cases}
\end{align}
for all $j \neq i$.
In words, $\tau_{ii}$ represents the best primary response among feasible response vectors in System $i$'s confidence region. On the other hand, $\tau_{ij}$ represents the worst primary response among feasible response vectors in System $j$'s confidence region. If there are no feasible vectors in $\mathds{C}_{ii}$, we set $\tau_{ii} = \infty$, or if there are any infeasible vectors in $\mathds{C}_{ij}$, we set $\tau_{ij} = \infty$.
These values can be thought of as the primary response of elements in $\mathds{B}^{*}(\mathds{C}_{ii})$ and $\mathds{W}^{*}(\mathds{C}_{ij})$, respectively, with the value of infinity indicating that the sets contains only infeasible vectors.
This interpretation is depicted in Figure \ref{fig:Fig2} for ellipsoid confidence regions, where $\tau_{ii}$ and $\tau_{ij}$ are shown to be projections of $\mathds{B}^{*}(\mathds{C}_{ii})$ and $\mathds{W}^{*}(\mathds{C}_{ij})$ on the dimension associated with the primary response. 

Introducing $\tau_{ii}$ and $\tau_{ij}$ facilitates construction of the subset returned by FOSSA procedures, as made clear in Proposition \ref{Proposition for Constrained Optimality}. Specifically, the non-emptiness of $\mathds{A}_{i}\cap\bigtimes\limits_{j=1}^{k}\mathbb{M}^{*}_{ij}$ can be checked by simply comparing the values in the vector $\tau_i$.

\begin{figure}
    \centering
    \includegraphics[width=0.50\textwidth]{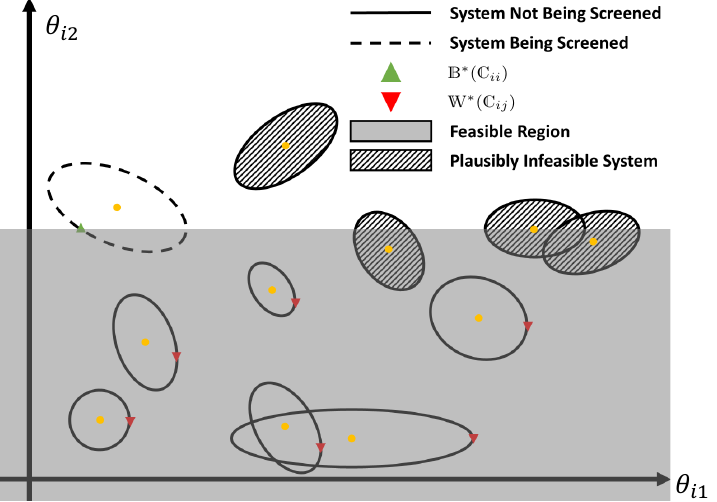}
     \caption{Elliptical confidence regions for each system's response vector in a 11-system problem with one stochastic constraint. The feasible region is shaded gray and the confidence region of the system being screened (System $i$) is indicated with a dashed boundary. For every other system, a hatched confidence region indicates that the corresponding system is plausibly infeasible, i.e., the corresponding $\tau_{ij}$ is $\infty$. For Systems $j \neq i$ that are not plausibly infeasible, a red upside-down triangle indicates $\mathds{W}^{*}(\mathds{C}_{ij})$ and its projection on the $\theta_{i1}$ axis is $\tau_{ij}$. System $i$ is plausibly feasible and $\mathds{B}^{*}(\mathds{C}_{ii})$ is represented by a green triangle. Because the projection of $\mathds{B}^{*}(\mathds{C}_{ii})$ on $\theta_{i1}$ ($\tau_{ii}$) is less than those of all other plausibly feasible systems, System $i$ will be returned.}
    \label{fig:Fig2}
\end{figure}

\begin{proposition}
\label{Proposition for Constrained Optimality}
For the definition of acceptability as constrained optimality,
\begin{align*}
\biggr\{i\colon&\mathds{A}_{i}\cap\bigtimes\limits_{j=1}^{k}\mathbb{M}^{*}_{ij}\neq\emptyset\biggr\} \\
&=\{i\colon\tau_{ii}\neq\infty \text{ and } \tau_{ii}\leq \tau_{ij} \ \mathrm{for\ all\ } j\neq i\}.    
\end{align*}
\end{proposition}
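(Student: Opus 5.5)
We prove the two inclusions by first determining the structure of $\mathbb{M}^{*}_{ij}$ for constrained optimality. The preference order is: feasible vectors beat infeasible ones, feasible vectors are ranked by $m_1$ (smaller preferred), and infeasible vectors are all tied. Hence $\mathds{B}^{*}(\mathds{C}_{ii})$ is nonempty (Assumption~\ref{assump: existence}). If $\mathds{C}_{ii}\cap\mathds{F}\neq\emptyset$, the set consists of feasible points of $\mathds{C}_{ii}$ attaining $\tau_{ii}$, the minimizer existing by closedness and boundedness (Assumption~\ref{assump: closed & bounded}); otherwise all its elements are infeasible.

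Symmetrically, for $j\neq i$: if $\mathds{C}_{ij}$ contains an infeasible vector, then $\mathds{W}^{*}(\mathds{C}_{ij})$ consists of infeasible vectors, which are worst. If $\mathds{C}_{ij}\subseteq\mathds{F}$, it consists of points attaining $\max m_1=\tau_{ij}$. I would state this characterization as a preliminary lemma, taking care over the ``$\mathds{B}(\boldsymbol{m})\cap\mathbb{S}=\{\boldsymbol{m}\}$'' condition with ties. Ties among equal-$m_1$ feasible vectors are neither $\prec$ nor equal, so several minimizers may coexist in $\mathds{B}^{*}$; all share the same $m_1$, which is all that matters.

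($\supseteq$) Suppose $\tau_{ii}<\infty$ and $\tau_{ii}\le\tau_{ij}$ for all $j\neq i$. Then $\mathds{B}^{*}(\mathds{C}_{ii})\cap\mathds{F}\neq\emptyset$, so Proposition~\ref{Proposition: Best and Worst } gives $\mathbb{M}^{*}_i=\bigtimes_j\mathbb{M}^{*}_{ij}$. Choose any $\mathbf{M}$ in this product with $\boldsymbol{m}_i$ a feasible minimizer. For each $j\neq i$, $\boldsymbol{m}_j$ is either infeasible (the case $\tau_{ij}=\infty$), so it imposes no constraint, or feasible with $m_{j1}=\tau_{ij}\ge\tau_{ii}=m_{i1}$. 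Therefore $\mathbf{M}\in\mathds{A}_i$.

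($\subseteq$) Suppose $\mathbf{M}\in\mathds{A}_i\cap\bigtimes_j\mathbb{M}^{*}_{ij}$. Then $\boldsymbol{m}_i\in\mathds{F}\cap\mathds{B}^{*}(\mathds{C}_{ii})$. By the lemma, $\mathds{C}_{ii}\cap\mathds{F}\neq\emptyset$ and $m_{i1}=\tau_{ii}<\infty$. For $j\neq i$, if $\tau_{ij}=\infty$ the inequality is trivial. Otherwise $\mathds{C}_{ij}\subseteq\mathds{F}$, so $\boldsymbol{m}_j$ is feasible with $m_{j1}=\tau_{ij}$, and acceptability gives $\tau_{ii}=m_{i1}\le m_{j1}=\tau_{ij}$.

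The main obstacle is the lemma, specifically the degenerate case where $\mathds{C}_{ii}$ has feasible points and the minimizer set is clean. We must verify that no infeasible point lies in $\mathds{B}^{*}(\mathds{C}_{ii})$ when a feasible point exists. This holds because any feasible point is strictly preferred to it, so $\mathds{B}$ of that infeasible point contains the feasible point as well. We also need closedness of $\mathds{F}$ together with Assumption~\ref{assump: closed & bounded} to guarantee that the min and max defining $\tau$ are attained.
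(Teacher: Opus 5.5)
Your proof is correct, but it takes a different route from the paper's.

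The paper never identifies $\mathds{B}^{*}(\mathds{C}_{ii})$ or $\mathds{W}^{*}(\mathds{C}_{ij})$ explicitly:
\begin{itemize}
\item For $\subseteq$, it uses only that the product of the $\mathbb{M}^{*}_{ij}$ lies in $\mathds{C}_{i}$. This already yields $\tau_{ii}\le m_{i1}$ and $m_{j1}\le\tau_{ij}$.
\item For $\supseteq$, it assembles a configuration in $\mathds{A}_{i}\cap\mathds{C}_{i}$: a feasible minimizer for System $i$, and for each $j\neq i$ either a maximizer of $m_1$ or an infeasible point. It then passes to the product via Theorem~\ref{theorem: FOSSA}.
\end{itemize}

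Your preliminary lemma characterizes the two sets directly:
\begin{itemize}
\item $\mathds{B}^{*}(\mathds{C}_{ii})$ is the set of minimizers of $m_1$ over $\mathds{C}_{ii}\cap\mathds{F}$ when that intersection is nonempty, and all of $\mathds{C}_{ii}$ otherwise.
\item $\mathds{W}^{*}(\mathds{C}_{ij})$ is $\mathds{C}_{ij}\cap\mathds{F}^{c}$ when that is nonempty, and the set of maximizers of $m_1$ otherwise.
\end{itemize}
This turns both inclusions into direct checks and removes any reliance on the general machinery behind Theorem~\ref{theorem: FOSSA}. It also makes rigorous the paper's informal remark that $\tau_{ii}$ and $\tau_{ij}$ are the primary responses of elements of $\mathds{B}^{*}(\mathds{C}_{ii})$ and $\mathds{W}^{*}(\mathds{C}_{ij})$.

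Your insistence on exact maximizers is precisely what the reverse direction needs. A point with merely $m'_{j1}\le\tau_{ij}$, as the paper's text phrases it, would not guarantee $\tau_{ii}\le m'_{j1}$. The cost of your route is that you must spell out $\prec$ for constrained optimality and handle ties, which you do correctly.

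Some minor points:
\begin{itemize}
\item Your $\subseteq$ direction only needs the inequalities $\tau_{ii}\le m_{i1}$ and $m_{j1}\le\tau_{ij}$, not the equalities your lemma supplies.
\item The appeal to Proposition~\ref{Proposition: Best and Worst } in $\supseteq$ is superfluous, since the statement concerns the product of the $\mathbb{M}^{*}_{ij}$ itself.
\item Non-emptiness of $\mathds{B}^{*}(\mathds{C}_{ii})$ does not follow from Assumption~\ref{assump: existence} alone. That assumption applies only to closed bounded sets, and $\mathds{C}_{ii}$ need not be one. It does follow from your characterization together with Assumption~\ref{assump: closed & bounded}.
\item Closedness of $\mathds{F}$ is not needed for attainment of the min and max. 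For feasible $\boldsymbol{m}$, the set $\mathds{B}(\boldsymbol{m})\cap\mathds{C}_{ii}$ already lies in $\mathds{F}$ and is compact by Assumption~\ref{assump: closed & bounded}.
\end{itemize}
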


As can be seen from (\ref{optimization problem 1_Constrained Opt}), calculating $\tau_{ii}$ entails solving a feasibility problem, namely, checking whether $\mathds{C}_{ii} \cap \mathds{F} = \emptyset$, and, depending on the result, solving another optimization problem: $\min m_1$ s.t.\ $\boldsymbol{m} \in \mathds{C}_{ii} \cap \mathds{F}$. 
A similar observation can be made about calculating $\tau_{ij}$ as defined in (\ref{optimization problem 2_Constrained Opt}). 
For the choices of confidence regions presented in this paper, the relevant feasibility problems can be solved with minimal effort.
For the box and half-box cases in particular, the optimization problems can be solved analytically.

We will simultaneously address FOSSA procedures that use the box and half-box confidence regions due to their similar geometry. We call these procedures FOSSA Box and FOSSA Half-Box, respectively. Letting $l_{ir}=l_{ir}^{\mathrm{box}}$ and $u_{ir}=u_{ir}^{\mathrm{box}}$ for all $i=1,2,\dots,k$ will deliver the set-wise PAS guarantee, whereas choosing $l_{ir}=l_{ir}^{\mathrm{hb}}$ and $u_{ir}=u_{ir}^{\mathrm{hb}}$ for all $i=1,2,\dots,k$ will deliver the system-wise PAS guarantee. 
In both cases, the confidence region for each system's response vector is an intersection of half-spaces whose defining hyperplanes are parallel to those describing the feasible region $\mathds{F}$.
As a result, checking the non-emptiness of $\mathds{C}_{ii} \cap \mathds{F}$ in (\ref{optimization problem 1_Constrained Opt}) reduces to checking whether each lower confidence bound on $\theta_{ir}$ for $r = 2, 3, \dots, d$ is less than the corresponding threshold $\theta_r^{\dagger}$. Likewise, checking the non-emptiness of $\mathds{C}_{ij} \cap \mathds{F}^c$ in (\ref{optimization problem 2_Constrained Opt}) involves the upper confidence bounds on $\theta_{jr}$ for $r = 2, 3, \dots, d$.
As for the other optimization problems in (\ref{optimization problem 1_Constrained Opt}) and (\ref{optimization problem 2_Constrained Opt}), their optimal values are $l_{i1}$ and $u_{i1}$, respectively.

Algorithm \ref{Constrained-Optimal Procedure(system-wise)} summarizes the two FOSSA procedures.
For both procedures, the returned subset can be constructed in $O(kd)$ operations.

\begin{algorithm}[tb]
\caption{FOSSA Box and FOSSA Half-Box for Constrained Optimality}
\label{Constrained-Optimal Procedure(system-wise)}
\begin{algorithmic}[1]
\Require $l_{ir},u_{ir}$ for $i=1,2,\dots,k$ and $r=1,2,\dots,d$
\State $\mathcal{S}\gets\{1,2,\dots,k\}$
\For{$i=1,2,\dots,k$}
    \For{$r=2,3,\dots,d$}
        \If{$l_{ir}>\theta^{\dagger}_{r}$}
            \State $l_{i1}\gets\infty$
            \State \textbf{break}
        \EndIf
    \EndFor
    \For{$r=2,3,\dots,d$}
        \If{$u_{ir}>\theta^{\dagger}_{r}$}
            \State $u_{i1}\gets\infty$
            \State \textbf{break}
        \EndIf
    \EndFor
\EndFor
\State $\tau\gets\min_{i\in\{1,2,\dots,k\}}u_{i1}$
\For{$i=1,2,\dots,k$}
    \If{$l_{i1}>\tau$}
        \State $\mathcal{S}\gets\mathcal{S}\setminus\{i\}$
    \EndIf
\EndFor
\State \Return $\mathcal{S}$
\end{algorithmic}
\end{algorithm}
   
\subsection{Pareto Optimality}
\label{sec:pareto}

This subsection focuses on the case where acceptability is defined as Pareto optimality, as can arise in situations where system performance is described by multiple, equally important responses.
System $i$ is said to dominate System $j$, denoted as $\boldsymbol{\theta}_{i} \prec_{p} \boldsymbol{\theta}_{j}$, if all of System $i$'s $d$ responses are less than or equal to those of System $j$, and System $i$ performs strictly better than System $j$ in at least one response.
We use $\boldsymbol{\theta}_{i} \nsucc_{p} \boldsymbol{\theta}_{j}$ to denote that System $i$ is not dominated by System $j$.
In this setting, acceptable systems are those that are not dominated by any other systems, hence the acceptable set is $\mathcal{A}(\truemean)=\{i \colon \boldsymbol{\theta}_{i} \nsucc_{p} \boldsymbol{\theta}_{j} \ \textrm{for all} \ j\neq i\}$, the so-called \emph{efficient set}.
Given the potential multiplicity of acceptable systems, the set-wise PAS guarantee may be favored in this setting.
A byproduct of FOSSA procedures that return a subset $\mathcal{S}$ delivering the set-wise PAS guarantee is a $1-\alpha$ confidence region for the so-called Pareto front. The specific form of this confidence region and the proof can be found in the electronic companion.
The acceptable regions are $\mathds{A}_{i}=\{\mathbf{M} \in \mathds{R}^{k\times d} \colon \boldsymbol{m}_{i} \nsucc_{p} \boldsymbol{m}_{j} \ \textrm{for all} \ j\neq i \}$ for $i = 1, 2, \ldots, k$, which for any $d\geq2$ and $k\geq2$ are non-closed non-convex sets. 
We will demonstrate how FOSSA procedures can still efficiently check the non-emptiness of $\mathds{A}_{i}\cap\mathds{C}_{i}$ or $\mathds{A}_{i}\cap\mathbb{M}^{*}_{i}$. 

FOSSA procedures designed for Pareto optimality can use the box confidence regions for the set-wise PAS guarantee or the half-box confidence regions for the system-wise PAS guarantee.
In either case, the non-emptiness of $\mathds{A}_i \cap \mathbb{M}^{*}_{i}$ can be determined straightforwardly because $\mathbb{M}^{*}_{i}$ is a singleton.
The sole configuration in $\mathbb{M}^{*}_{i}$ is that  defined by the lower confidence bounds of all of the responses for System $i$, denoted by $\boldsymbol{l}_i=(l_{i1},l_{i2},\dots,l_{id})$, and the upper confidence bounds of all of the responses of all Systems $j$, $j \neq i$, denoted by $\boldsymbol{u}_j=(u_{j1},u_{j2},\dots,u_{jd})$. Therefore, when screening System $i$, one needs to check only whether or not $\boldsymbol{l}_{i} \nsucc_{p} \boldsymbol{u}_{j} \ \textrm{for all} \ j\neq i$.
(For other confidence region geometries, checking the non-emptiness of $\mathds{A}_i \cap \mathds{C}_i$ is equivalent to checking whether $\mathds{C}_i$ intersects the union of an $O(k^{\lfloor d/2 \rfloor})$ number half-boxes specified by $\mathds{C}_j$ for all $j\neq i$. When $\mathds{C}_i$ is ellipsoidal, this intersection can be determined analytically.)

The FOSSA procedures with the box and half-box confidence regions are outlined in Algorithm~\ref{alg:pareto-box}.
Lines 3-5 in Algorithm \ref{alg:pareto-box} identify those systems for which their vector of lower confidence bounds are dominated by the vector of upper confidence bounds of some other system, in which case the system is not plausibly acceptable and is screened out.
This subtask can be done efficiently using Algorithm 4.1 in \cite{kung1975finding}, which has time complexity $O(k\log k)$ for $d=2$ and  $O(k(\log k)^{d-2})$ for $d\geq 3$. 
The overall time complexity of Algorithm \ref{alg:pareto-box} matches these, which is also the same time complexity as finding the Pareto front among $k$ systems with known responses.
 
\begin{algorithm}[tb]
\caption{FOSSA Box and FOSSA Half-Box for Pareto Optimality}
\label{alg:pareto-box}
\begin{algorithmic}[1]
\Require $\boldsymbol{l}_{i},\boldsymbol{u}_{i}$ for $i=1,2,\dots,k$
\State $\mathcal{S}\gets\{1,2,\dots,k\}$
\For{$i=1,2,\dots,k$}
    \For{$j=1,2,\dots,k$}
        \If{$\boldsymbol{l}_{i}\succ_{p}\boldsymbol{u}_{j}$}
            \State $\mathcal{S}\gets\mathcal{S}\setminus\{i\}$
            \State \textbf{break}
        \EndIf
    \EndFor
\EndFor
\State \Return $\mathcal{S}$
\end{algorithmic}
\end{algorithm}

\section{Numerical Experiments}
\label{sec:num_exp}

In this section, we evaluate and compare FOSSA with contemporary screening procedures on a modified version of the buffer allocation problem from \cite{patsis1997simd}, adapted to a bi-objective setting. This instance is well suited for multi-objective R\&S because it exhibits a well-structured Pareto front, and the simulation outputs from most systems are approximately normally distributed, consistent with standard assumptions in R\&S.
We consider stochastically constrained optimality and Pareto optimality as definitions of acceptability. Because there are no existing screening methods that offer rigorous confidence guarantees in either case, we use the most suitable existing approaches as baselines, making adaptations as needed.
We assess each procedure in terms of its screening power and computational efficiency. By comparing FOSSA procedures employing (half-)box and ellipsoid confidence regions, we find that box confidence regions are generally preferable  to ellipsoids, as they offer comparable screening power but with faster computation and lower computational complexity. Additionally, our results demonstrate that under the normality assumption, using Bonferroni or \v{S}id\'{a}k corrections to build box and half-box confidence regions that account for correlations between simulation outputs is effective for both stochastically constrained optimality and Pareto optimality, owing to the fact that the box and half-box confidence regions conform to the geometry of the acceptable regions better than ellipsoids.

\subsection{Problem Instance and Setup}

We consider a queueing network consisting of ten nodes arranged in three stages, as depicted in Figure~\ref{fig: Buffer Allocation Diagram}.
Each node functions as a single-server queue following a first-in-first-out service discipline.
The network serves two classes of customers, C1 and C2, having exponential and uniform interarrival times, respectively. Irrespective of their class, each customer enters the network at one of the four entry nodes (Nodes 0-3) uniformly at random and passes through all three service stages. 
The network's structure is such that a customer's path is predetermined upon entry. 
Service times for both classes are uniformly distributed at Nodes 0-7 and exponentially distributed at Nodes 8 and 9.
Each node has a finite number of buffer spaces in which customers can wait for service.
A node is considered \emph{full} when its service area and all of its buffer spaces are occupied by customers waiting, being served or blocked. In such cases, upstream customers attempting to transition to a full node are said to be \emph{blocked} and must remain at their current node until space becomes available. 
If an entry node is full, incoming customers at that node are denied entry into the network.
\begin{figure}[tb]
    \centering
    \includegraphics[width=0.5\textwidth]{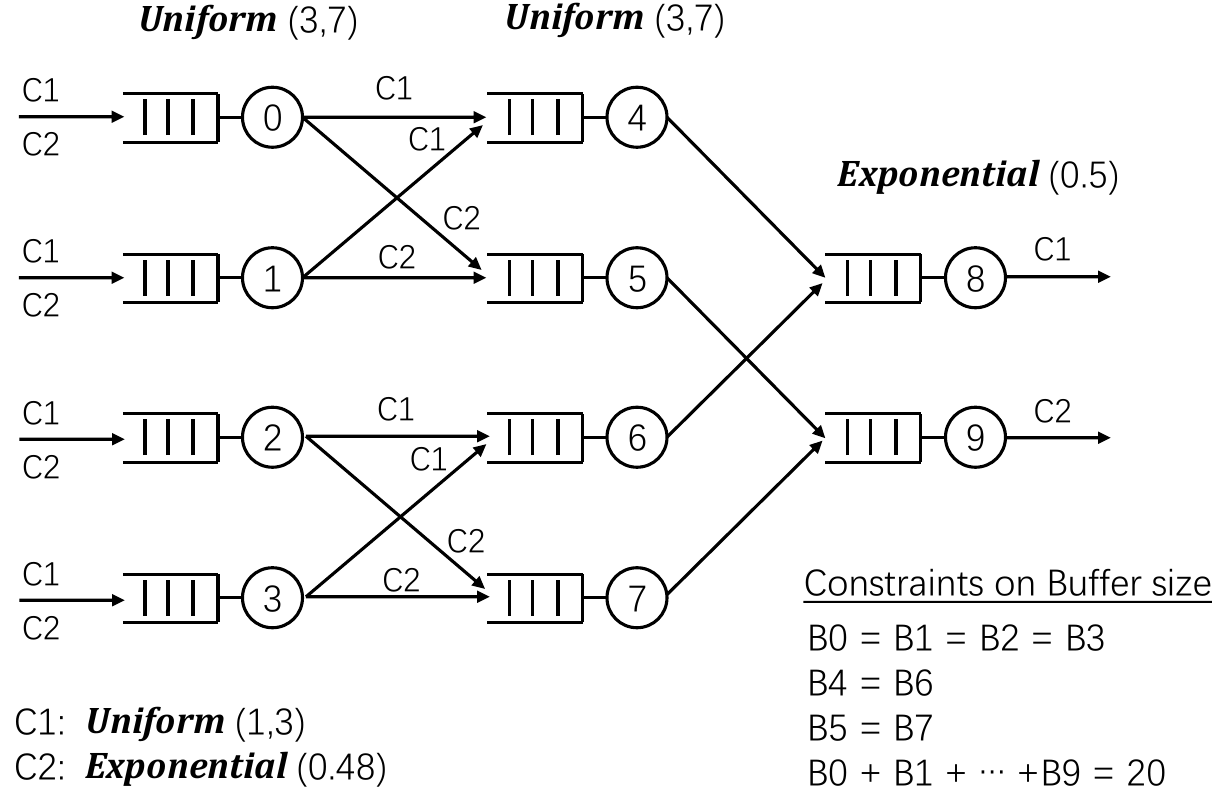}
    
    \caption{Diagram for the queueing system in the buffer allocation problem. The label B$x$ refers to the allocated buffer budget for node $x$, $x = 0, 1, \ldots, 9$. Interarrival times and service times are either uniformly distributed on an interval $[a, b]$ or exponentially distributed with a given mean.}
    \label{fig: Buffer Allocation Diagram}
\end{figure}

The decision maker's goal is to allocate 20 buffer spaces across the 10 nodes in a way that yields good system performance. 
We focus on two simulation outputs: the average waiting time of customers who complete service and the average idle rate of the servers.
A server is considered idle if their associated node is either empty or contains any blocked customers.
We impose three constraints on permissible buffer allocations based on the symmetry of the problem: Nodes 0--3 receive the same number of buffer spaces, as do Nodes 4 and 6, and Nodes 5 and 7.
These constraints reduce the number of systems under consideration from 10,015,005 to 1001.

The responses of a given system are estimated via simulation where a replication entails simulating for a period of 3000 time units, the first 2000 time units of which is treated as a warm-up period in which no statistics are collected.
The estimated responses of each system are shown in Figure~\ref{fig:Scatter Plot_All systems}, based on $50{,}000$ replications.
In this problem instance, many systems are closely grouped; in particular, more than half of the systems (506 out of 1001) make up the cluster in the lower-right corner of Figure \ref{fig:Scatter Plot_All systems}, characterized by high expected average idle rates and low expected average waiting times. 
Figure \ref{fig:Scatter Plot_One system} shows the distribution of $10{,}000$ samples from a representative system and suggests that the joint distribution of the simulation outputs vector is reasonably approximated by a bivariate normal distribution.
(For some systems, the marginal distribution of the average waiting time exhibits a slight right skew.)
We therefore use confidence regions derived under a normality assumption, as detailed in the electronic companion.
For a large majority of systems (722 out of 1001), the two components of simulation outputs are negatively correlated, indicating a common trade-off between customer waiting times and server utilization.

\begin{figure}[]
    \centering
    \subfigure[responses]
    {\includegraphics[width=0.45\textwidth]{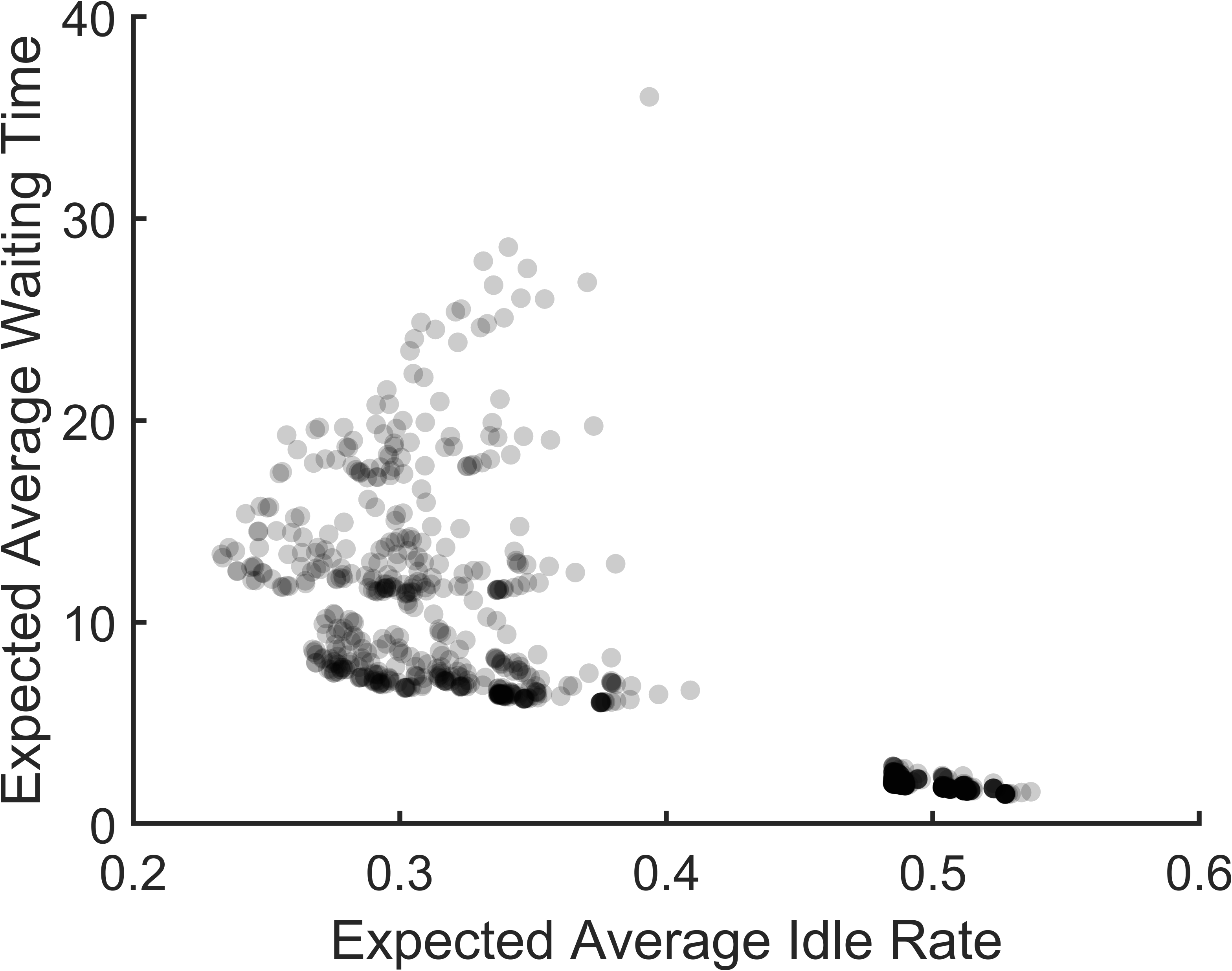}
    \label{fig:Scatter Plot_All systems}}
    \hfill
    \subfigure[Responses from One System]
    {\includegraphics[width=0.45\textwidth]{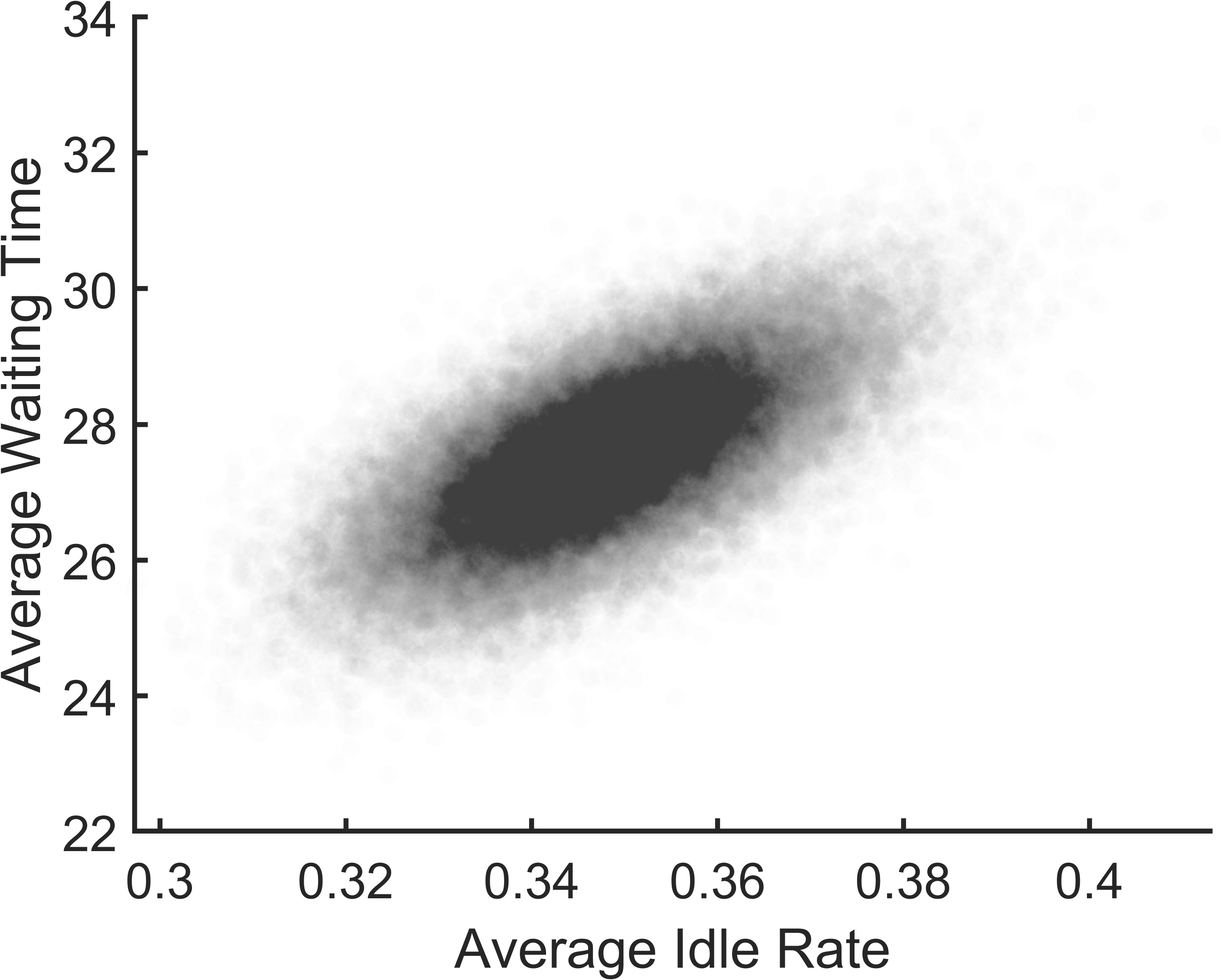}
    \label{fig:Scatter Plot_One system}}

    \caption{(a) A scatter plot of the response vectors of all systems. (b) A scatter plot of the simulation outputs vectors from $10{,}000$ replications of a representative system. (The points are semi-transparent, thus a darker area indicates a higher density of points.)
    }
    
\end{figure}

In the following subsections, we compare several screening procedures on two optimization problems posed for this queueing system: one featuring a stochastic constraint and one featuring multiple objectives.
The performance of each procedure is evaluated based on 1000 macroreplications, where on each macroreplication, each procedure simulates 50 replications at each system. For comparison purposes, all procedures receive the same set of simulation outputs on any given macroreplication.
In all experiments, we set the confidence level $1-\alpha$ to be $0.95$. 
The FOSSA procedures are implemented in MATLAB and use the built-in quadratic program solver \texttt{quadprog} with its default settings.
Source code is available at \url{https://github.com/Jinbo-Zhao/OSSA}.
While most experiments were conducted on a PC with an AMD Ryzen 5 5600X 6-Core processor (3.70 GHz) and 16 GB RAM, we leveraged a high-performance computing cluster---utilizing 48 cores on a compute node with 384GB of RAM---for the simulation replications and one screening procedure (GSPRT). Unless otherwise specified, the timing experiments were conducted on the PC.

\subsection{Optimization with Stochastic Constraints}
We first consider the case where acceptability is defined as constrained optimality. In particular, the objective is to minimize the expected average idle rate subject to a constraint that the expected average waiting time is less than or equal to $7.5$ time units.
Based on the large-sample results from Figure~\ref{fig:Scatter Plot_All systems}, an estimated 670 systems are feasible.

In this subsection, we juxtapose the three FOSSA procedures designed for constrained optimality with BootComp, a two-stage procedure that utilizes bootstrapping \citep{currie2021practical}.
BootComp proceeds in two stages, each consisting of two phases (constraints bootstrap and quality bootstrap), where the first stage uses less restrictive parameters than the second stage and is designed to prevent the oversampling of inferior systems.
The constraints bootstrap phase returns those systems whose bootstrapped secondary performance observations fall within the feasible region with probability exceeding $1-\gamma$, and the quality bootstrap phase returns those systems whose bootstrapped primary performance mean is within $100\beta$\% of that of the best system returned by the constraints bootstrap stage with probability exceeding $1-\alpha$, where $\alpha$, $\beta$, and $\gamma$ are user-specified parameters. 
BootComp aims to return a subset $\mathcal{S}^\mathrm{BC}$ of systems that are marginally feasible and near-optimal with high probabilities. The statistical guarantee sought by BootComp differs from the set-wise and system-wise PAS guarantees and is not rigorously shown to be attained, thus BootComp should be regarded as a heuristic screening procedure. Indeed, we empirically observed settings under which BootComp did not deliver its intended guarantee. 
In an attempt to provide a guarantee similar to that of FOSSA procedures, we modify the BootComp procedure for problems with expected-value constraints instead of chance constraints, as it was originally designed for.
To be precise, we modify the constraints bootstrap to return systems whose bootstrap secondary performance \emph{mean} was feasible with probability exceeding $1-\gamma$.
Furthermore, we only employ the second stage of BootComp and allow it to use all 50 samples from each system, the same as the FOSSA procedures.
These modifications enhance BootComp's screening power and alignment with established FOSSA guarantees.
We use the parameter values recommended in \cite{currie2021practical} in our experiments: $\alpha=0.95$, $\beta=0.05$ and $\gamma=0.95$.

For each of the four procedures, the probability of acceptable selection, size of the returned subset, running time and average quality of the returned subsets are reported in Table~\ref{table: Performance Constrained}.
The FOSSA procedures return a subset containing the unique optimal system in all 1000 macroreplications whereas BootComp does so on only 95 macroreplications.
The FOSSA procedures screen out 95\% of systems on average compared to 99\% by
BootComp. This can be seen in the heatmap of Figure~\ref{fig:Scatter Plot_Constrained}. (Because the results for the three FOSSA procedures are very similar, only those for FOSSA Half-Box are displayed.) 
Figure~\ref{fig:Scatter Plot_Constrained} also shows that most of systems returned by the FOSSA procedures are located near the true optimal system in the output space.
To quantify this, we measure each system's \emph{distance to acceptability} as the sum of its relative optimality gap and relative distance to feasibility, with smaller values indicating better quality; more details about this metric are provided in the electronic companion.
Figure~\ref{fig: Quality of Returned Systems (Constrained)} shows a breakdown of the quality of the solutions returned by FOSSA Half-Box relative to all solutions under consideration on a representative macroreplication.
All of the procedures return subsets consisting of high-quality solutions, though the average quality of the systems in BootComp's returned subset is more variable, as seen in the wider range of its percentiles.
In terms of timing, all four procedures take less than one second to run one macroreplication, with FOSSA Box and FOSSA Half-Box taking only several milliseconds, over 200 times faster than FOSSA Ellipsoid and BootComp.
This efficiency suggests that FOSSA procedures could be applied to stochastically constrained problems with large numbers, say millions, of systems.
In summary, the experimental results demonstrate that although BootComp tends to return a smaller subset consisting of near-optimal systems, it fails to return the optimal system with high probability. In contrast, the FOSSA procedure rapidly provides a subset delivering the PAS guarantee and requires fewer user-specified parameters.

\begin{table}[]
\caption{Performance metrics for the three FOSSA procedures and BootComp: probability of acceptable selection (PAS), size of returned subset, average quality of systems in the returned subset and wall-clock time. The $90$\% confidence intervals for PAS are two-tailed Clopper-Pearson confidence intervals. The averages and 10\% and 90\% percentiles are calculated from 1000 macroreplications.}
\label{table: Performance Constrained}
\begin{center}
\resizebox{\columnwidth}{!}{
\begin{tabular}{cc|cccc}

\multicolumn{2}{c|}{\multirow{2}{*}{Metric}}                                                                        & \multicolumn{4}{c}{Procedure}                                                            \\ \cline{3-6} 
\multicolumn{2}{c|}{}                                                                                                     & Ellipsoid           & Box                 & Half-Box            & BootComp            \\ \hline
\multicolumn{1}{c|}{\multirow{2}{*}{PAS}}                                                       & Avg                 & 1                   & 1                   & 1                   & 0.095               \\
\multicolumn{1}{c|}{}                                                                           & 90\%CI                  & {[}0.9992, 1{]}      & {[}0.9992, 1{]}      & {[}0.9992, 1{]}      & {[}0.0945, 0.0968{]} \\ \hline
\multicolumn{1}{c|}{\multirow{2}{*}{\begin{tabular}[c]{@{}c@{}}Subset\\ Size\end{tabular}}}     & Avg                 & 57.60               & 53.68               & 50.98               & 12.22               \\
\multicolumn{1}{c|}{}                                                                           & {[}10\%, 90\%{]} & {[}52, 62{]}         & {[}48, 59{]}         & {[}46, 57{]}         & {[}3, 18{]}          \\ \hline

\multicolumn{1}{c|}{\multirow{2}{*}{\begin{tabular}[c]{@{}c@{}}Average\\ Quality\end{tabular}}} & Avg                 & 0.0459              & 0.0441              & 0.0427              & 0.0411              \\
\multicolumn{1}{c|}{}                                                                           & {[}10\%, 90\%{]} & {[}0.0412, 0.0496{]} & {[}0.0393, 0.0484{]} & {[}0.0382, 0.0474{]} & {[}0.0221, 0.0506{]} \\ \hline

\multicolumn{1}{c|}{\multirow{2}{*}{\begin{tabular}[c]{@{}c@{}}Wall-Clock\\ Time (s)\end{tabular}}}     & Avg                 & 0.7527              & 0.0035              & 0.0033              & 0.8360              \\
\multicolumn{1}{c|}{}                                                                           & {[}10\%, 90\%{]} & {[}0.7037, 0.7849{]} & {[}0.0032, 0.0037{]} & {[}0.0030, 0.0035{]} & {[}0.7745, 0.9162{]} \\

\end{tabular}
}
\end{center}
\end{table}

\begin{figure}[]
    \centering
     
     \subfigure[FOSSA Half-Box]
    {\includegraphics[width=0.45\textwidth]{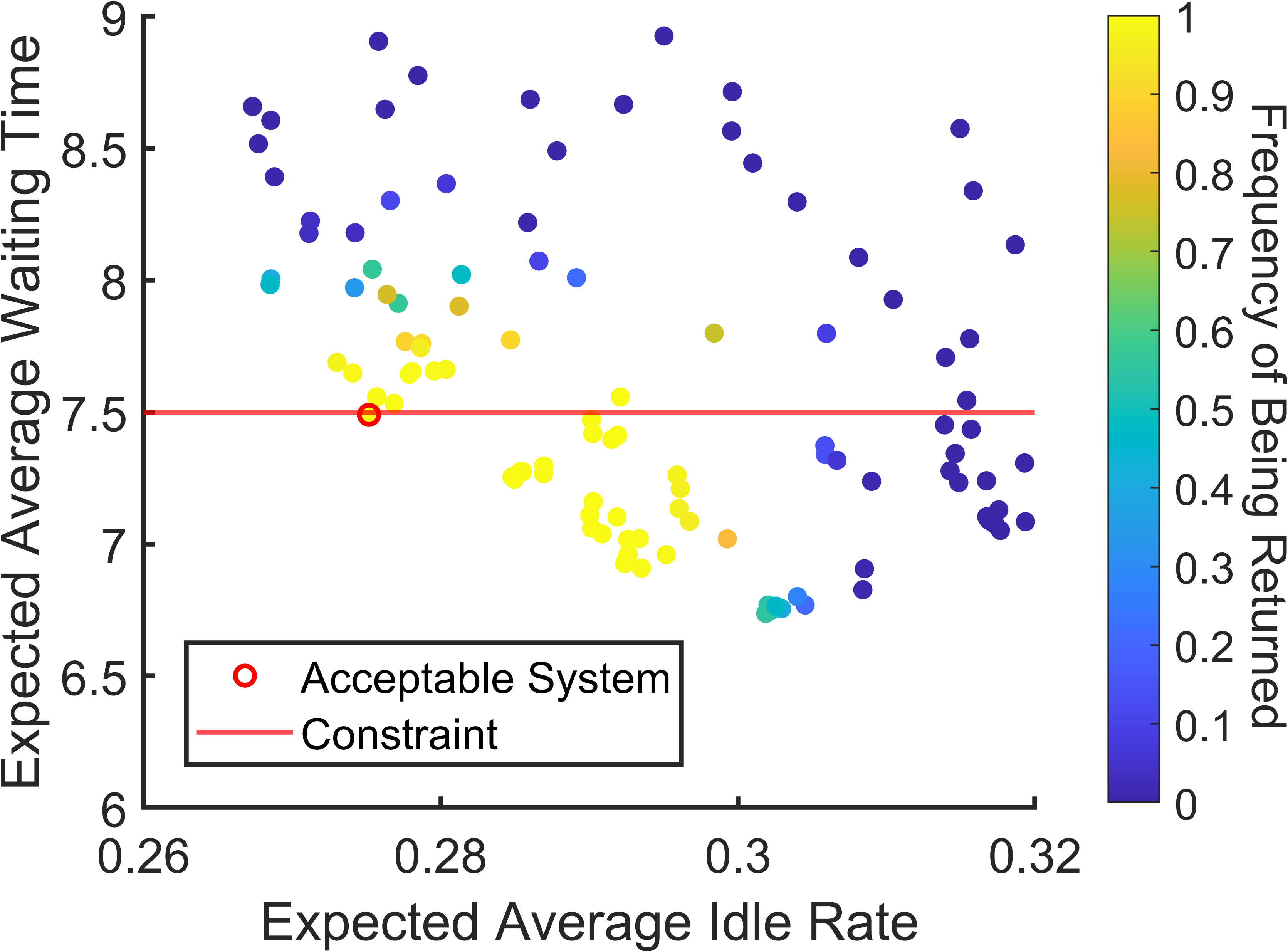}}
    \hfill
    \subfigure[BootComp]
     {\includegraphics[width=0.45\textwidth]{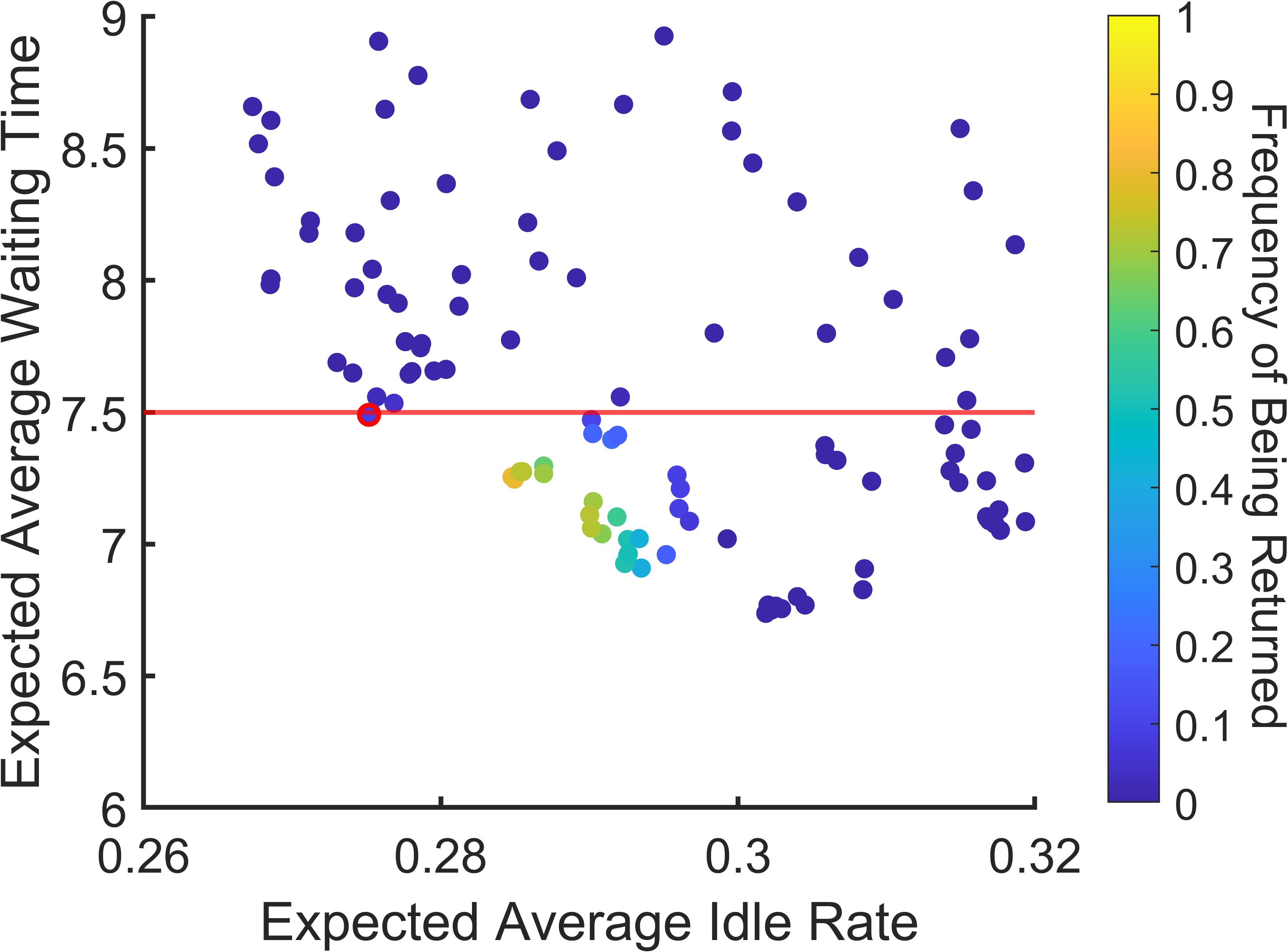}}

    \caption{Scatters plot of the estimated response vectors of systems in a neighborhood of the optimal system. The region below the horizontal red line is the feasible region and the true (constrained) optimal system is circled in red. The color of each dot indicates its frequency of being returned by (a) FOSSA Half-Box and (b) BootComp. Systems not shown were never returned in any of the 1000 macroreplications.}
    \label{fig:Scatter Plot_Constrained}
\end{figure}

\begin{figure}[]
    \centering
    \includegraphics[width=1\textwidth]{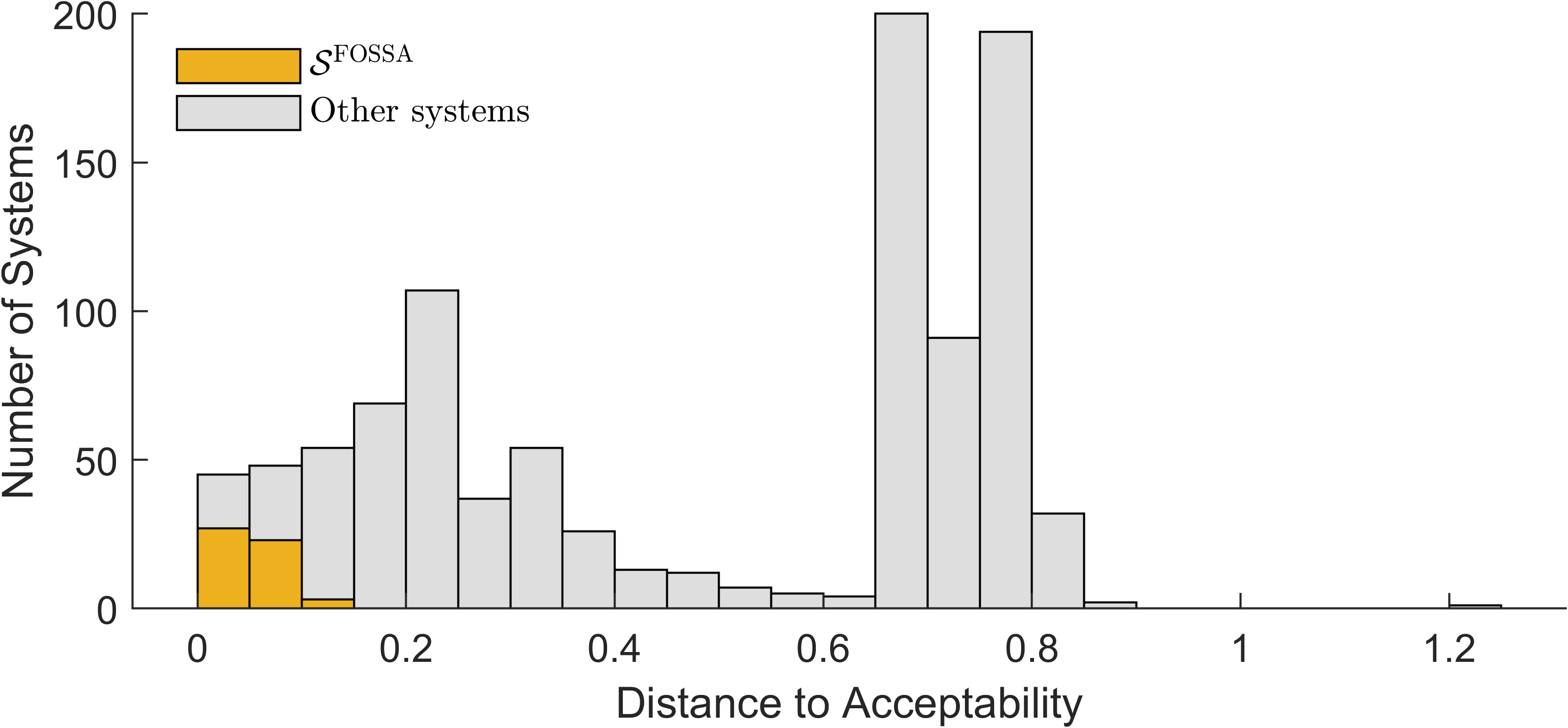}
    \caption{Histogram of the distance to acceptability of the systems returned by FOSSA Half-Box (yellow) on a representative macroreplication and all other systems under consideration (grey). Smaller distance to acceptability indicates a higher quality system.} 
    \label{fig: Quality of Returned Systems (Constrained)}
\end{figure}

\subsection{Pareto Optimality}
We next consider the case where acceptability is defined as Pareto optimality and the decision maker prefers systems with lower expected average waiting time and lower expected average idle rates.
Based on the large-sample runs, there are 72 acceptable systems.
We compare the three FOSSA procedures for Pareto optimality with the generalized sequential probability ratio test (GSPRT) of \cite{wang2019sequential}. 
GSPRT is a sequential selection procedure that maintains a common sample size across all systems and guarantees to return the exact efficient set with a probability of at least $1-\alpha$ as $\alpha$ approaches zero \citep{wang2017sequential}.
When GSPRT is stopped before its termination criteria, one can extract a subset that delivers the set-wise PAS guarantee in the aforementioned asymptotic sense. 
The most powerful screening version of GSPRT conducts hypothesis tests as observations are collected and continuously monitors a subset of systems believed to be in the efficient set until some stopping condition is met.
Due to the relatively high computational cost of running GSPRT, we choose to test the hypothesis only once, using all available samples; we believe this should only marginally weaken the procedure's screening power.
The guarantees of GSPRT rely on two assumptions: that each system's true covariance matrix is known and that an explicit likelihood function for the outputs generated by all systems is available. The latter assumption complicates the procedure's potential to be extended to non-parametric settings or cases involving the use of common random numbers.

Table \ref{table: Performance Pareto} shows the results for the three FOSSA procedures and GSPRT. All four procedures return all Pareto optimal systems on all macroreplications.
GSPRT returns subsets that are about 10\% smaller than those returned by FOSSA. The large sizes of the returned sets for all procedures (about 700 systems) is a consequence of how for this problem instance many systems are close to the true Pareto front.
The heatmaps in Figure \ref{fig:Scatter Plot_Pareto} show that the procedures frequently return systems that are close to the true Pareto front. (Again, the results for all FOSSA procedures were similar, thus we display the heatmap for only FOSSA Box.)
Figure \ref{fig:Scatter Plot_Pareto_Box} also depicts a $1-\alpha$ confidence region for the Pareto front provided by FOSSA Box.
The quality of a returned system is measured in terms of how close its response vector is to being non-dominated by all other systems, with smaller values being better. More details about this metric are given in the
electronic companion.
Figure \ref{fig:Quality of Returned Systems(Pareto)} shows the quality of individual systems returned by GSPRT and FOSSA Box on a representative macroreplication; both sets consist mainly of near-optimal systems.
We note that the returned set of GSPRT procedure turned out to be a subset of that of all FOSSA procedures on all macroreplications.

Table~\ref{table: Performance Pareto} also shows that the FOSSA procedures are computationally efficient, with FOSSA Box and FOSSA Half-Box screening all 1000 systems in less than a hundredth of a second and FOSSA Ellipsoid taking about 11 seconds on average.
On the other hand, the GSPRT procedure is more computationally expensive due to its need to solve harder optimization problems.
For this reason, GSPRT is run on a 48-core high-performance computing node and takes a total of nine hours and five minutes for the 1000 macroreplications.
The nested relationship between the returned subsets of GSPRT and FOSSA observed empirically and the timing results suggest that the FOSSA procedure could be used to ``pre-screen'' systems before running GSPRT, without compromising any screening power. Pre-screening refers to the combined approach of first applying FOSSA on all systems and then applying another, typically more computationally expensive, screening procedure on the surviving systems, without taking additional replications. Pre-screening may be promising in large-scale instances, but should be viewed as a heuristic in the case of GSPRT since we were unable to formalize a nested relationship between the FOSSA and GSPRT subsets.

Although the numerical results demonstrate that GSPRT can effectively screen out unacceptable systems on this particular problem instance, several issues hinder the procedure's wider use.
These include its assumption of known covariance matrices, considerable computational expense, and its asymptotic guarantee as $\alpha \to 0$.
In addition, GSPRT relies on maximizing the likelihood function for pairwise comparisons, which makes it less suitable for instances in which the normality assumption does not hold.
FOSSA procedures, on the other hand, suffer from none of these issues and offer only marginally less screening power than GSPRT.
FOSSA procedures' confidence guarantees, fast speed and  adaptability to non-parametric settings make them a compelling choice for screening with multiple objectives.

\begin{table}[]
\caption{Performance metrics for the three FOSSA procedures and GSPRT: set-wise probability of acceptable selection (PAS), size of returned subset, average quality of systems in the returned subset and wall-clock time. The $90$\% confidence intervals for PAS are two-tailed Clopper-Pearson confidence intervals. The averages and 10\% and 90\% percentiles are calculated from 1000 macroreplications.}
\label{table: Performance Pareto}
\begin{center}
\resizebox{\columnwidth}{!}{
\begin{tabular}{cc|cccc}
\multicolumn{2}{c|}{\multirow{2}{*}{Metric}}                                                                                                                                          & \multicolumn{4}{c}{Procedure}                                                             \\ \cline{3-6} 
\multicolumn{2}{c|}{}                                                                                                                                                                      & Ellipsoid            & Box                 & Half-Box            & GSPRT               \\ \hline
\multicolumn{1}{c|}{\multirow{2}{*}{PAS}}                                                       & \multirow{2}{*}{\begin{tabular}[c]{@{}c@{}}Avg\\ 90\% CI\end{tabular}}                & 1                    & 1                   & 1                   & 1                   \\
\multicolumn{1}{c|}{}                                                                           &                                                                                          & {[}0.9992, 1{]}       & {[}0.9992, 1{]}      & {[}0.9992, 1{]}      & {[}0.9992, 1{]}      \\ \hline
\multicolumn{1}{c|}{\multirow{2}{*}{\begin{tabular}[c]{@{}c@{}}Subset\\ Size\end{tabular}}}     & \multirow{2}{*}{\begin{tabular}[c]{@{}c@{}}Avg\\ {[}10\%, 90\%{]}\end{tabular}} & 740.25               & 739.62              & 733.02              & 676.77              \\
\multicolumn{1}{c|}{}                                                                           &                                                                                          & {[}735, 745{]}        & {[}734, 745{]}       & {[}727, 738{]}       & {[}669, 684{]}       \\ \hline

\multicolumn{1}{c|}{\multirow{2}{*}{\begin{tabular}[c]{@{}c@{}}Average\\ Quality\end{tabular}}} & \multirow{2}{*}{\begin{tabular}[c]{@{}c@{}}Avg\\ {[}10\%, 90\%{]}\end{tabular}} & 0.0043               & 0.0042              & 0.0040              & 0.0025              \\
\multicolumn{1}{c|}{}                                                                           &                                                                                          & {[}0.0040, 0.0045{]}  & {[}0.0040, 0.0044{]} & {[}0.0037, 0.0042{]} & {[}0.0023, 0.0027{]} \\ \hline

\multicolumn{1}{c|}{\multirow{2}{*}{\begin{tabular}[c]{@{}c@{}}Wall-Clock\\ Time (s) \end{tabular}}}    & \multirow{2}{*}{\begin{tabular}[c]{@{}c@{}}Avg\\ {[}10\%, 90\%{]}\end{tabular}} & 11.1479              & 0.0082              & 0.0076              & 1596.6$^*$    \\
\multicolumn{1}{c|}{}                                                                           &                                                                                          & {[}7.2991, 15.1787{]} & {[}0.0078, 0.0085{]} & {[}0.0072, 0.0080{]} & --    
\end{tabular}}
\end{center}
\end{table}

\begin{figure}[]
    \centering
     \subfigure[FOSSA Box\label{fig:Scatter Plot_Pareto_Box}]
    {\includegraphics[width=0.45\textwidth]{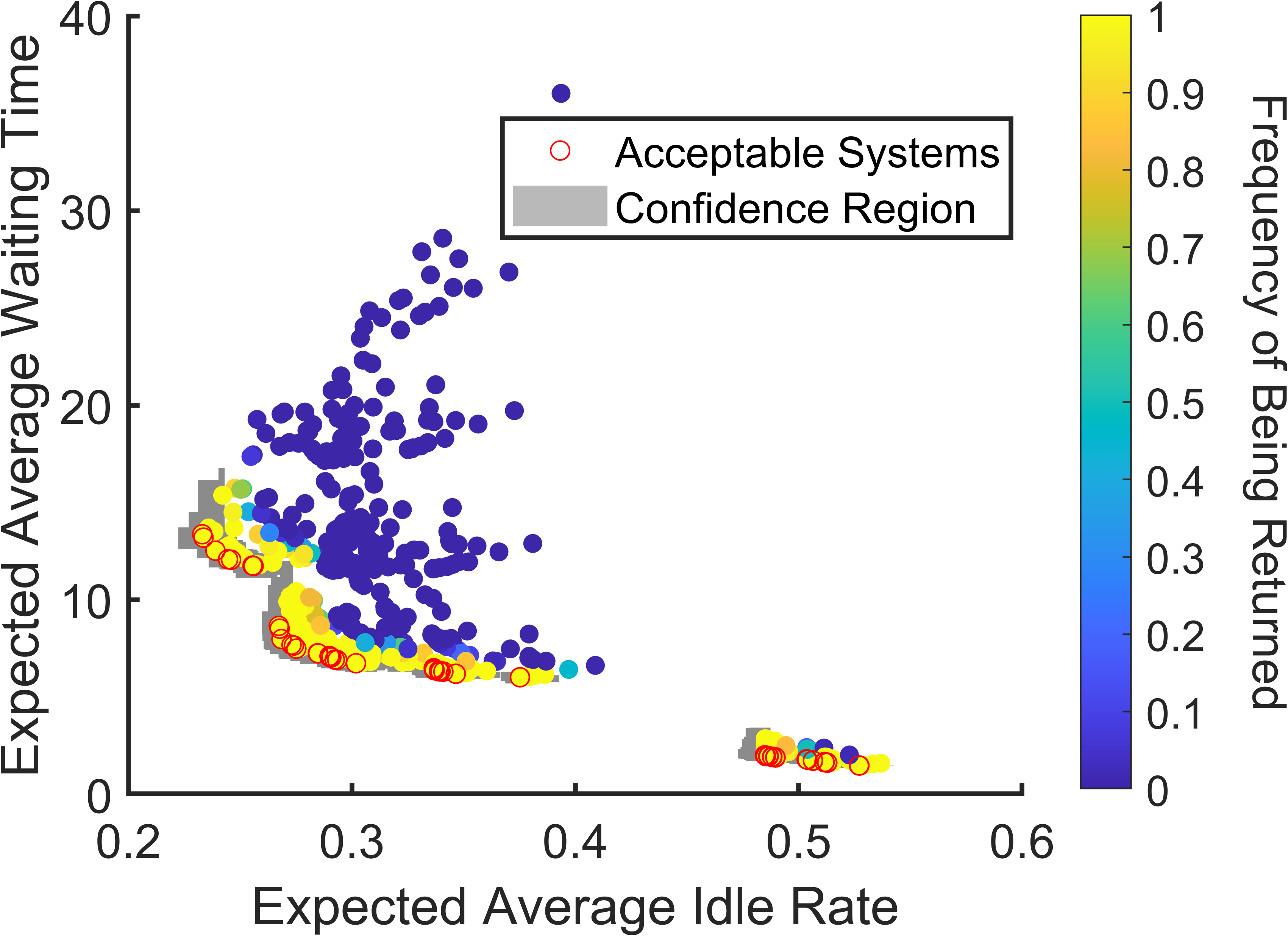}
    }
    \hfill
    \subfigure[GSPRT]
     {\includegraphics[width=0.45\textwidth]{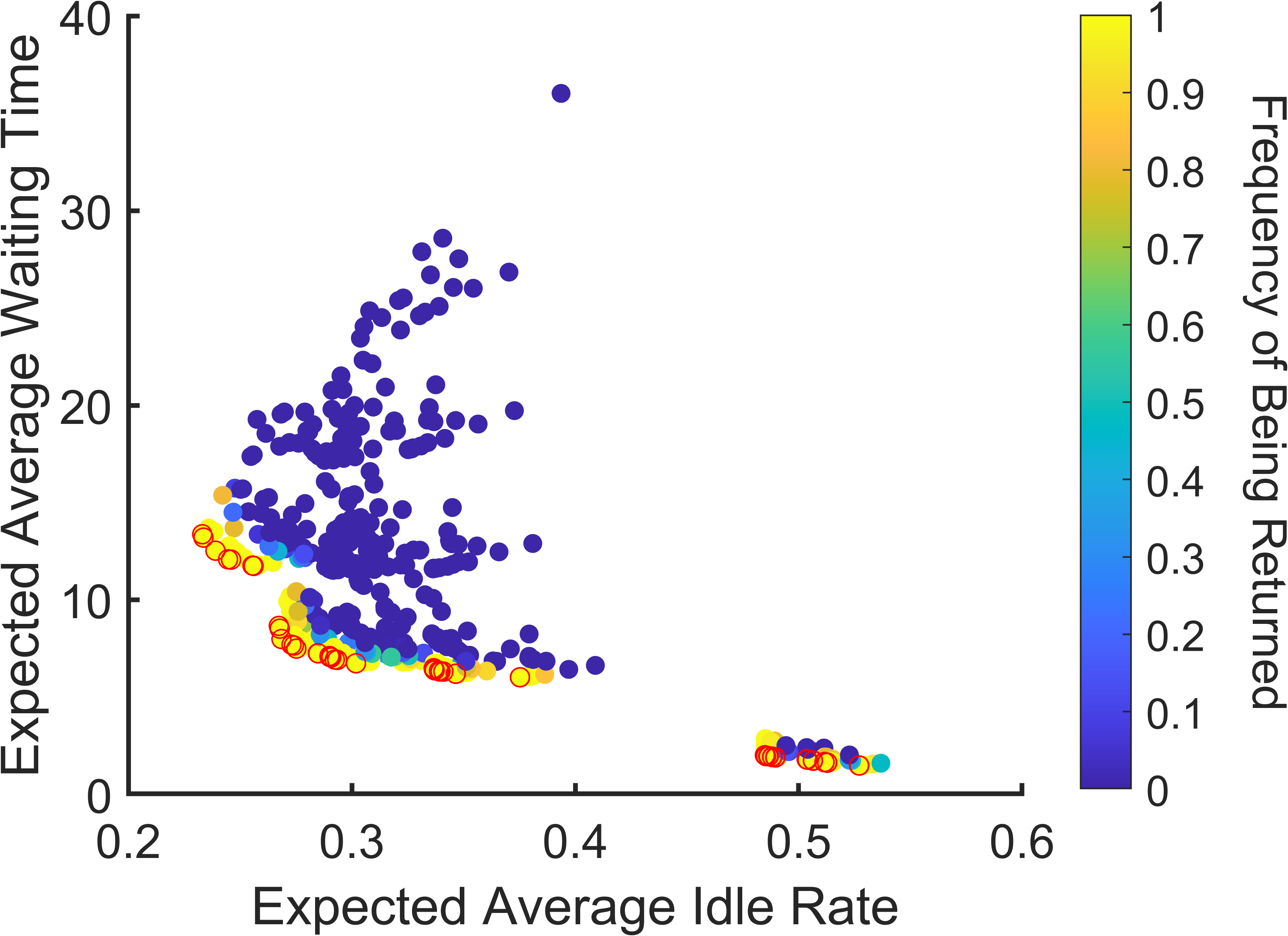}}
    \caption{Scatter plots of the response vectors of all systems under consideration. The color of each dot indicates its frequency of being returned by (a) FOSSA Box and (b) GSPRT. The 72 acceptable systems are circled in red. The grey-shaded area is a 95\% confidence region for the true Pareto front, provided by FOSSA Box on a representative macroreplication.}
    \label{fig:Scatter Plot_Pareto}
\end{figure}

\begin{figure}[]
    \centering
    \includegraphics[width=1\textwidth]{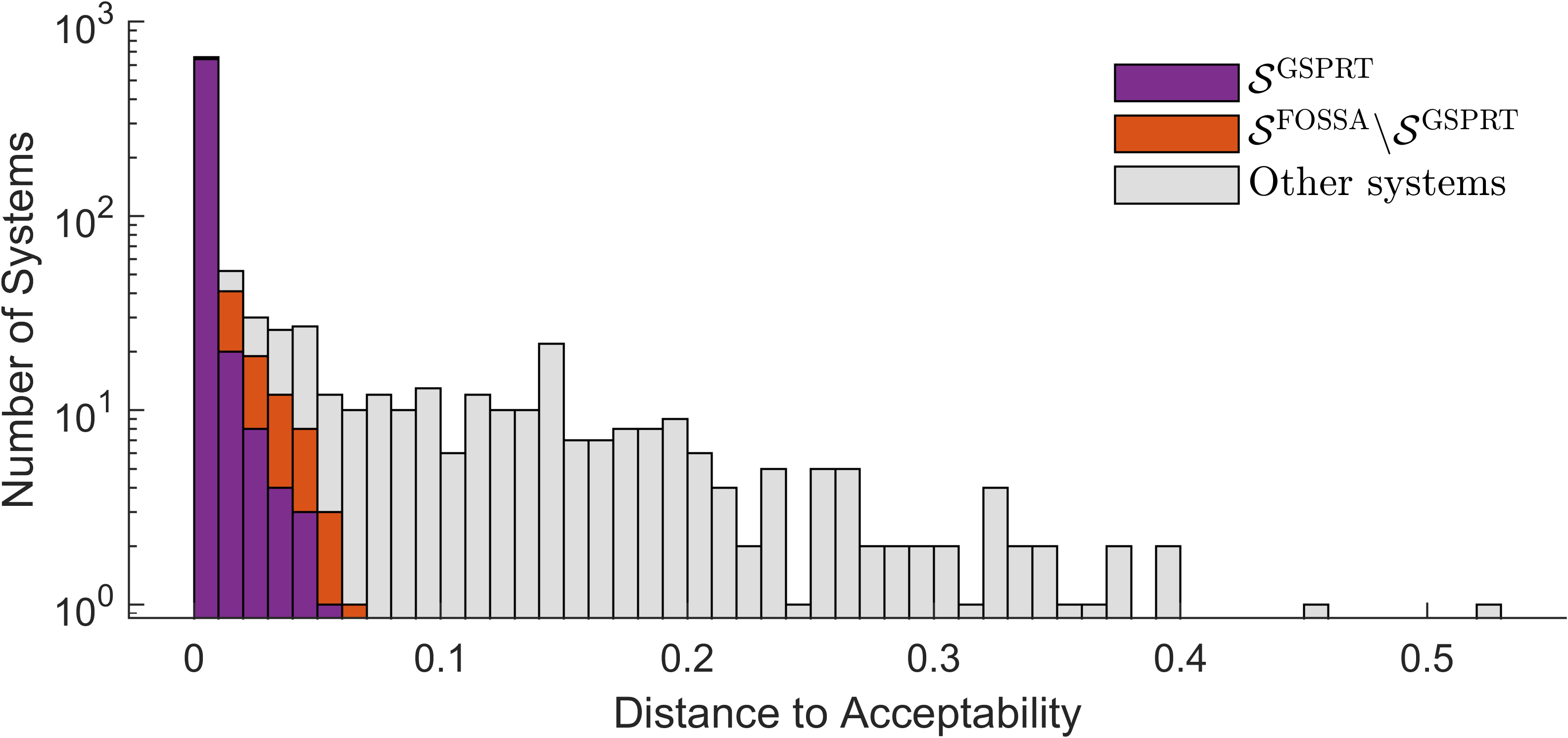}
     \caption{Histogram of the distance to acceptability of systems returned by GSPRT (purple) and additional systems returned by FOSSA Box (orange) on a representative macroreplication and all other systems under consideration (grey). Smaller distance to acceptability indicates a higher quality system.
    }
    \label{fig:Quality of Returned Systems(Pareto)}
\end{figure}

\section{Conclusion}
\label{sec:conclusion}

This article introduces OSSA, a new design framework for one-shot screening procedures that accommodates many commonly studied definitions of acceptability. The FOSSA subclass of the framework facilitates the design of procedures that are computationally efficient and easily parallelized. We introduce several FOSSA procedure that offer finite-sample confidence guarantees for optimization under stochastic constraints and multi-objective optimization, which we believe to be the first of their kind. Based on a numerical study, we recommend using the FOSSA Box procedure when the set-wise PAS guarantee is sought and the FOSSA Half-Box procedure for the system-wise PAS guarantee. Both procedures offer comparable screening power to FOSSA Ellipsoid but are significantly faster and do not require sophisticated optimization software. 

There are several promising directions for extending the OSSA framework. These include, but are not limited to, incorporating CRN within the framework, possibly through the development of specialized confidence regions that handle correlation across systems; constructing non-parametric confidence regions for FOSSA procedures; investigating more elaborate, yet practical, definitions of acceptability, such as Pareto optimality combined with constraints; and tailoring confidence regions to the definition of acceptability.
Another exciting direction for further development is leveraging functional information over the decision variable space to bolster the methods' screening power and to exclude unsimulated systems, in a manner similar to plausible screening.

\section*{Acknowledgments}
This work was partially supported by National Science Foundation Grant CMMI-2206972.
Some of the numerical experiments were conducted with the advanced computing resources provided by Texas A\&M High Performance Research Computing.
We also wish to thank Barry Nelson for helpful discussions and the creators of PyMOSO for their functions for finding phantom Pareto systems.

\bibliographystyle{plainnat}
\bibliography{bibliography}

@article{nelson2001simple,
  title={Simple Procedures for Selecting the Best Simulated System when the Number of Alternatives is Large},
  author={Nelson, Barry L and Swann, Julie and Goldsman, David and Song, Wheyming},
  journal={Operations Research},
  volume={49},
  number={6},
  pages={950--963},
  year={2001},
  publisher={INFORMS}
}

@article{boesel2003using,
  title={Using Ranking and Selection to ``Clean Up'' after Simulation Optimization},
  author={Boesel, Justin and Nelson, Barry L and Kim, Seong-Hee},
  journal={Operations Research},
  volume={51},
  number={5},
  pages={814--825},
  year={2003},
  publisher={INFORMS}
}

@article{gupta1965some,
  title={On Some Multiple Decision (Selection and Ranking) Rules},
  author={Gupta, Shanti S},
  journal={Technometrics},
  volume={7},
  number={2},
  pages={225--245},
  year={1965},
  publisher={Taylor \& Francis}
}

@INCOLLECTION{miescke:bayesdesigns99,
  author={Miescke, Klaus J},
  title = {Bayes Sampling Designs for Selection Procedures},
  booktitle = {Multivariate analysis, design of experiments, and survey sampling},
  publisher={CRC Press},
  year = {1999},
  editor = {S Ghosh},
  address = {New York},
  page = {117--142}
}

@ARTICLE{rinott:twostage,
  author = {Y Rinott},
  title = {On Two-Stage Selection Procedures and Related Probability-Inequalities},
  journal = {Communications in Statistics--Theory and Methods},
  volume = {7},
  number = {8},
  pages = {799-811},
  year = {1978},
  publisher = {Taylor \& Francis},
}

@article{dudewicz:twostage,
  title={Allocation of observations in ranking and selection with unequal variances},
  author={Edward J. Dudewicz and Siddhartha R. Dalal},
  journal={Sankhy{\=a}: The Indian Journal of Statistics},
  volume={37},
  number={1},
  pages={28--78},
  year={1975},
  publisher={Springer}
}

@BOOK{anderson:mvstats,
  title = {An Introduction to Multivariate Statistical Analysis},
  publisher = {John Wiley \& Sons},
  year = {1984},
  author = {T. W. Anderson},
  address = {New York},
  edition = {2nd}
}

@inproceedings{batur2010mean,
  author={Batur, Demet and Choobineh, F. Fred},
  booktitle={Proceedings of the 2010 Winter Simulation Conference}, 
  title={Mean-variance based ranking and selection}, 
  year={2010},
  editor={Johansson, Bj{\"o}rn and Jain, Sanjay amd Montoya-Torres, Jairo R},
  pages={1160--1166},
  publisher={Institute of Electrical and Electronics Engineers, Inc.},
  address={Piscataway, New Jersey}
}

@article{Ni2017,
author = {Ni, Eric C. and Ciocan, Dragos F. and Hunter, Susan R. and Henderson, Shane G.},
journal = {Operations Research},
number = {3},
pages = {821--836},
title = {{Efficient ranking and selection in parallel computing environments}},
volume = {65},
year = {2017}
}

@article{baturkim2010,
author = {Batur, Demet and Kim, Seong-Hee},
title = {Finding feasible systems in the presence of constraints on multiple performance measures},
journal = {ACM Transactions on Modeling and Computer Simulation (TOMACS)},
volume = {20},
number = {3},
pages = {Article 13. 1--26},
year = {2010}
}

@article{EckmanHenderson2021,
  title={Fixed-confidence, fixed tolerance guarantees for selection-of-the-best procedures},
  author={David J. Eckman and Shane G. Henderson},
  journal={ACM Transactions on Modeling and Computer Simulation (TOMACS)},
  year={2021},
  volume={31},
  number={2},
  pages={Article 7. 1--33}
}

@inproceedings{eckman2020revisiting,
  title={Revisiting subset selection},
  author={Eckman, David J and  Plumlee, Matthew and  Nelson, Barry L},
  booktitle={2020 Winter Simulation Conference (WSC)},
  pages={2972-2983},
  year={2020},
  editor={Bae, K-H and Feng, B and Kim, S and Lazarova-Molnar, S and Zheng, Z and Roeder, T and Thiesing, R},
  organization={Institute of Electrical and Electronics Engineers},
  address={Piscataway, NJ}
}

@ARTICLE{andradottir2010fully,
  title={Fully Sequential Procedures for Comparing Constrained Systems via Simulation},
  author={Andrad{\'o}ttir, Sigr{\'u}n and Kim, Seong-Hee},
  journal={Naval Research Logistics (NRL)},
  volume={57},
  number={5},
  pages={403--421},
  year={2010},
  publisher={Wiley Online Library}
}

@ARTICLE{hong2015chance,
  title={Chance Constrained Selection of the Best},
  author={Hong, L Jeff and Luo, Jun and Nelson, Barry L},
  journal={INFORMS Journal on Computing},
  volume={27},
  number={2},
  pages={317--334},
  year={2015},
  publisher={INFORMS}
}

@ARTICLE{gupta1985subset,
  title={Subset Selection Procedures: Review and Assessment},
  author={Gupta, Shanti S and Panchapakesan, S},
  journal={American Journal of Mathematical and Management Sciences},
  volume={5},
  number={3--4},
  pages={235--311},
  year={1985},
  publisher={Taylor \& Francis}
}

@ARTICLE{eckman2022plausible,
  title={Plausible Screening Using Functional Properties for Simulations With Large Solution Spaces},
  author={Eckman, David J and Plumlee, Matthew and Nelson, Barry L},
  journal={Operations Research},
  volume={70},
  number={6},
  pages={3473--3489},
  year={2022},
  publisher={INFORMS}
}

@ARTICLE{gao2015efficient,
  title={Efficient Subset Selection for the Expected Opportunity Cost},
  author={Gao, Siyang and Chen, Weiwei},
  journal={Automatica},
  volume={59},
  pages={19--26},
  year={2015},
  publisher={Elsevier}
}

@ARTICLE{lam1986new,
  title={A New Procedure for Selecting Good Populations},
  author={Lam, K},
  journal={Biometrika},
  volume={73},
  number={1},
  pages={201--206},
  year={1986},
  publisher={Oxford University Press}
}

@ARTICLE{sullivan1989restricted,
  title={Restricted Subset Selection Procedures for Simulation},
  author={Sullivan, David W and Wilson, James R},
  journal={Operations Research},
  volume={37},
  number={1},
  pages={52--71},
  year={1989},
  publisher={INFORMS}
}

@article{pei2022parallel,
  title={Parallel Adaptive Survivor Selection},
  author={Pei, Linda and Nelson, Barry L and Hunter, Susan R},
  journal={Operations Research},
  publisher={INFORMS},
  volume = {72},
  number = {1},
  pages = {336-354},
  year = {2024},
}

@INPROCEEDINGS{zhao23screening,
  title={Screening Simulated Systems for Optimization},
  author={Zhao, Jinbo and Gatica, Javier and Eckman, David J},
  booktitle={Proceedings of the 2023 Winter Simulation Conference},
  year={2023},
  address={Piscataway, New Jersey},
  organization={Institute of Electrical and Electronics Engineers, Inc.},
  editor = {G Corlu and S R Hunter and H Lam and B S Onggo and J Shortle and B Biller},
pages={1-15}
}

@article{batur2010quantile,
  title={A quantile-based approach to system selection},
  author={Batur, Demet and Choobineh, F},
  journal={European Journal of Operational Research},
  volume={202},
  number={3},
  pages={764--772},
  year={2010},
  publisher={Elsevier}
}

@article{shin2022practical,
  title={Practical nonparametric sampling strategies for quantile-based ordinal optimization},
  author={Shin, Dongwook and Broadie, Mark and Zeevi, Assaf},
  journal={INFORMS Journal on Computing},
  volume={34},
  number={2},
  pages={752--768},
  year={2022},
  publisher={INFORMS}
}

@article{bechhofer1954single,
  title={A single-sample multiple decision procedure for ranking means of normal populations with known variances},
  author={Bechhofer, Robert E},
  journal={The Annals of Mathematical Statistics},
  volume={25},
  number={1},
  pages={16--39},
  year={1954},
  publisher={JSTOR}
}

@article{solow2021novel,
  title={Novel approaches to feasibility determination},
  author={Solow, Daniel and Szechtman, Roberto and Y{\"u}cesan, Enver},
  journal={ACM Transactions on Modeling and Computer Simulation (TOMACS)},
  volume={31},
  number={1},
  pages={1--25},
  year={2021},
  publisher={ACM New York, NY, USA}
}

@article{applegate2020multi,
  title={Multi-objective ranking and selection: Optimal sampling laws and tractable approximations via {SCORE}},
  author={Applegate, Eric A and Feldman, Guy and Hunter, Susan R and Pasupathy, Raghu},
  journal={Journal of Simulation},
  volume={14},
  number={1},
  pages={21--40},
  year={2020},
  publisher={Taylor \& Francis}
}

@article{andradottir2021pareto,
  title={Pareto set estimation with guaranteed probability of correct selection},
  author={Andrad{\'o}ttir, Sigr{\'u}n and Lee, Judy S},
  journal={European Journal of Operational Research},
  volume={292},
  number={1},
  pages={286--298},
  year={2021},
  publisher={Elsevier}
}

@article{pasupathy2014stochastically,
  title={Stochastically constrained ranking and selection via {SCORE}},
  author={Pasupathy, Raghu and Hunter, Susan R and Pujowidianto, Nugroho A and Lee, Loo Hay and Chen, Chun-Hung},
  journal={ACM Transactions on Modeling and Computer Simulation (TOMACS)},
  volume={25},
  number={1},
  pages={1--26},
  year={2014},
  publisher={ACM New York, NY, USA}
}

@article{chen2008efficient,
  title={Efficient simulation budget allocation for selecting an optimal subset},
  author={Chen, Chun-Hung and He, Donghai and Fu, Michael and Lee, Loo Hay},
  journal={INFORMS Journal on Computing},
  volume={20},
  number={4},
  pages={579--595},
  year={2008},
  publisher={INFORMS}
}

@article{koenig1985procedure,
  title={A procedure for selecting a subset of size $m$ containing the $l$ best of $k$ independent normal populations, with applications to simulation},
  author={Koenig, Lloyd W and Law, Averill M},
  journal={Communications in Statistics--Simulation and Computation},
  volume={14},
  number={3},
  pages={719--734},
  year={1985},
  publisher={Taylor \& Francis}
}

@inbook{smith2018simio,
  author={Smith, Jeffrey S and Sturrock, David T},
  title={Simio and Simulation: Modeling, Analysis, Applications},
  chapter={9},
  publisher={Simio LLC},
  year={2021},
  edition={6th},
  address={Pittsburgh, PA},
  pages={355--404}
}

@article{peng2021efficient,
  title={Efficient sampling allocation procedures for optimal quantile selection},
  author={Peng, Yijie and Chen, Chun-Hung and Fu, Michael C and Hu, Jian-Qiang and Ryzhov, Ilya O},
  journal={INFORMS Journal on Computing},
  volume={33},
  number={1},
  pages={230--245},
  year={2021},
  publisher={INFORMS}
}

@incollection{chen2015ranking,
  title={Ranking and selection: Efficient simulation budget allocation},
  author={Chen, Chun-Hung and Chick, Stephen E and Lee, Loo Hay and Pujowidianto, Nugroho A},
  booktitle={Handbook of Simulation Optimization},
  editor={Michael C. Fu},
  pages={45--80},
  year={2015},
  publisher={Springer},
  volume={216},
  chapter={3}
}

@article{gao2017efficient,
  title={Efficient feasibility determination with multiple performance measure constraints},
  author={Gao, Siyang and Chen, Weiwei},
  journal={IEEE Transactions on Automatic Control},
  volume={62},
  number={1},
  pages={113--122},
  year={2017},
  publisher={IEEE}
}

@article{healey2014selection,
  title={Selection procedures for simulations with multiple constraints under independent and correlated sampling},
  author={Healey, Christopher and Andrad{\'o}ttir, Sigr{\'u}n and Kim, Seong-Hee},
  journal={ACM Transactions on Modeling and Computer Simulation (TOMACS)},
  volume={24},
  number={3},
  pages={1--25},
  year={2014},
  publisher={ACM New York, NY, USA}
}

@article{butler2001multiple,
  title={A multiple attribute utility theory approach to ranking and selection},
  author={Butler, John and Morrice, Douglas J and Mullarkey, Peter W},
  journal={Management Science},
  volume={47},
  number={6},
  pages={800--816},
  year={2001},
  publisher={INFORMS}
}

@inproceedings{szechtman2008new,
  title={A new perspective on feasibility determination},
  author={Szechtman, Roberto and Yucesan, Enver},
  booktitle={Proceedings of the 2008 Winter Simulation Conference},
  pages={273--280},
  year={2008},
  address={Piscataway, New Jersey},
  organization={Institute of Electrical and Electronics Engineers, Inc.},
editor = {Scott J. Mason and Raymond R. Hill and Lars M{\"o}nch and Oliver Rose and Thomas Jefferson and John W. Fowler},
}

@article{lee2010finding,
  title={Finding the non-dominated {P}areto set for multi-objective simulation models},
  author={Lee, Loo Hay and Chew, Ek Peng and Teng, Suyan and Goldsman, David},
  journal={IIE Transactions},
  volume={42},
  number={9},
  pages={656--674},
  year={2010},
  publisher={Taylor \& Francis}
}

@article{li2018optimal,
  title={Optimal computing budget allocation to select the nondominated systems---A large deviations perspective},
  author={Li, Juxin and Liu, Weizhi and Pedrielli, Giulia and Lee, Loo Hay and Chew, Ek Peng},
  journal={IEEE Transactions on Automatic Control},
  volume={63},
  number={9},
  pages={2913--2927},
  year={2018},
  publisher={IEEE}
}

@article{yan2012efficient,
  title={Efficient selection of a set of good enough designs with complexity preference},
  author={Yan, Shen and Zhou, Enlu and Chen, Chun-Hung},
  journal={IEEE Transactions on Automation Science and Engineering},
  volume={9},
  number={3},
  pages={596--606},
  year={2012},
  publisher={IEEE}
}

@article{jia2013efficient,
  title={Efficient computing budget allocation for finding simplest good designs},
  author={Jia, Qing-Shan and Zhou, Enlu and Chen, Chun-Hung},
  journal={IIE Transactions},
  volume={45},
  number={7},
  pages={736--750},
  year={2013},
  publisher={Taylor \& Francis}
}

@article{hunter2019introduction,
  title={An introduction to multiobjective simulation optimization},
  author={Hunter, Susan R and Applegate, Eric A and Arora, Viplove and Chong, Bryan and Cooper, Kyle and Rinc{\'o}n-Guevara, Oscar and Vivas-Valencia, Carolina},
  journal={ACM Transactions on Modeling and Computer Simulation (TOMACS)},
  volume={29},
  number={1},
  pages={1--36},
  year={2019},
  publisher={ACM New York, NY, USA}
}

@article{lee2012approximate,
  title={Approximate simulation budget allocation for selecting the best design in the presence of stochastic constraints},
  author={Lee, Loo Hay and Pujowidianto, Nugroho A and Li, Ling-Wei and Chen, Chun-Hung and Yap, Chee Meng},
  journal={IEEE Transactions on Automatic Control},
  volume={57},
  number={11},
  pages={2940--2945},
  year={2012},
  publisher={IEEE}
}

@inproceedings{dobkin1190determining,
author={Dobkin, David P and Kirkpatrick, David G},
editor={Paterson, Michael S},
title={Determining the separation of preprocessed polyhedra---A unified approach},
booktitle={Automata, Languages and Programming},
year={1990},
publisher={Springer},
address={Berlin, Heidelberg},
pages={400--413},
}

@article{wang2020distance,
title={The distance between convex sets with {M}inkowski sum structure: application to collision detection},
author={Wang, Xiangfeng and Zhang, Junping and Zhang, Wenxing},
journal={Computational Optimization and Applications},
volume={77},
pages={465--490},
year={2020},
publisher={Springer}
}

@book{efron1994introduction,
  title={An Introduction to the Bootstrap},
  author={Efron, Bradley and Tibshirani, Robert J},
  year={1994},
  publisher={CRC press, Boca Raton, FL}
}

@inproceedings{barba2014optimal,
author = {Luis Barba and Stefan Langerman},
title = {Optimal detection of intersections between convex polyhedra},
booktitle = {Proceedings of the 2015 Annual ACM-SIAM Symposium on Discrete Algorithms (SODA)},
year={2014},
pages = {1641-1654},
}

@book{toth2017handbook,
  title={Handbook of Discrete and Computational Geometry},
  author={Toth, Csaba D and O'Rourke, Joseph and Goodman, Jacob E},
  year={2017},
  publisher={CRC press, Boca Raton, FL}
}

@article{sidak1971rectangle,
  title={On probabilities of rectangles in multivariate {S}tudent distributions: their dependence on correlations},
  author={{\v{S}}id{\'a}k, Zbyn{\v{e}}k},
  journal={The Annals of Mathematical Statistics},
  volume={42},
  number={1},
  pages={169--175},
  year={1971},
  publisher={Institute of Mathematical Statistics}
}

@article{kung1975finding,
  title={On finding the maxima of a set of vectors},
  author={Kung, H T and Luccio, F and Preparata, F P},
  journal={Journal of the ACM (JACM)},
  volume={22},
  number={4},
  pages={469--476},
  year={1975},
  publisher={ACM New York, NY, USA}
}

@article{Owen1988empirical,
    author = {Owen, Art B},
    title = {Empirical likelihood ratio confidence intervals for a single functional},
    journal = {Biometrika},
    volume = {75},
    number = {2},
    pages = {237-249},
    year = {1988},
    publisher={Oxford University Press}
}

@article{patsis1997simd,
  title={{SIMD} parallel discrete-event dynamic system simulation},
   author={Patsis, Nikos T and Chen, Chun-Hung and Larson, Michael E},
  journal={IEEE Transactions on Control Systems Technology},
  volume={5},
  number={1},
  pages={30--41},
  year={1997},
  publisher={IEEE}
}

@article{currie2021practical,
  title = {A Practical Approach to Subset Selection for Multi-objective Optimization via Simulation},
  author = {Currie, Christine S M and Monks, Thomas},
  journal={ACM Transactions on Modeling and Computer Simulation (TOMACS)},
  volume={31},
  number={4},
  year={2021},
  publisher={ACM New York, NY, USA},
 pages={Article 20. 1--15}
}

@phdthesis{wang2019sequential,
  title={Sequential Procedures for the ``Selection'' Problems in Discrete Simulation Optimization},
  author={Wang, Wenyu},
  year={2019},
  school={Purdue University}
}

@phdthesis{frahm2004generalized,
  title={Generalized elliptical distributions: theory and applications},
  author={Frahm, Gabriel},
  year={2004},
  school={Universit{\"a}t zu K{\"o}ln}
}

@phdthesis{ma2018sequential,
  title={Sequential Ranking and Selection Procedures and Sample Complexity},
  author={Ma, Sijia},
  year={2018},
  school={Cornell University}
}

@inproceedings{wang2017sequential,
  title={Sequential probability ratio test for multiple-objective ranking and selection},
  author={Wang, Wenyu and Wan, Hong},
  booktitle={Proceedings of the 2017 Winter Simulation Conference},
  pages={1998--2009},
  year={2017},
editor={Victor W K Chan and Andrea D'Ambrogio and Gregory Zacharewicz and Navonil Mustafee and Gabriel Wainer and Ernest H Page},
address = {Piscataway, New Jersey},
  organization={Institute of Electrical and Electronics Engineers, Inc.},
}

@inproceedings{applegate2017phantom,
  title={Phantom pareto systems for multi-objective ranking and selection},
  author={Applegate, Eric A and Hunter, Susan R and Feldman, Guy and Pasupathy, Raghu},
  booktitle={Proceedings of the 2017 Winter Simulation Conference},
  pages={4576--4577},
  year={2017},
    editor={Victor W.K. Chan and Andrea D'Ambrogio and Gregory Zacharewicz and Navonil Mustafee and Gabriel Wainer and Ernest H. Page},
    address = {Piscataway, New Jersey},
    organization={Institute of Electrical and Electronics Engineers, Inc.},
}

@phdthesis{avci2024simulation,
  title={Simulation Optimization: Cache and Credit for Parallel Ranking \& Selection, and Dice and Slice for High-Dimensional Problems},
  author={Avci, Harun},
  year={2024},
  school={Northwestern University}
}

@article{hunter2016maximizing,
  title={Maximizing quantitative traits in the mating design problem via simulation-based Pareto estimation},
  author={Hunter, Susan R and McClosky, Benjamin},
  journal={IIE Transactions},
  volume={48},
  number={6},
  pages={565--578},
  year={2016},
  publisher={Taylor \& Francis}
}

@misc{li2025efficientbudgetallocationlargescale,
      title={Efficient Budget Allocation for Large-Scale {LLM}-Enabled Virtual Screening}, 
      author={Zaile Li and Weiwei Fan and L. Jeff Hong},
      year={2025},
      eprint={2408.09537},
      archivePrefix={arXiv},
      primaryClass={stat.ML},
      url={https://arxiv.org/abs/2408.09537}, 
}

@inproceedings{Russac2021,
 author = {Russac, Yoan and Katsimerou, Christina and Bohle, Dennis and Capp\'{e}, Olivier and Garivier, Aur\'{e}lien and Koolen, Wouter},
 booktitle = {Advances in Neural Information Processing Systems},
 editor = {M. Ranzato and A. Beygelzimer and Y. Dauphin and P.S. Liang and J. Wortman Vaughan},
 pages = {25100--25110},
 publisher = {Curran Associates, Inc.},
 title = {A/B/n Testing with Control in the Presence of Subpopulations},
 url = {https://proceedings.neurips.cc/paper_files/paper/2021/file/d35a29602005cb55aa57a5f683c8e0c2-Paper.pdf},
 volume = {34},
 year = {2021}
}

@article{holm1979simple,
  title={A simple sequentially rejective multiple test procedure},
  author={Holm, Sture},
  journal={Scandinavian Journal of Statistics},
  pages={65--70},
  volume={6},
  number={2},
  year={1979},
  publisher={JSTOR}
}

@misc{wang2026rankingandselectionmultiplecorrectanswers,
      title={Ranking-and-Selection with Multiple Correct Answers and Non-Answerable Estimates}, 
      author={Qiaoqiao Wang and Wei You},
      year={2026},
      eprint={2606.21889},
      archivePrefix={arXiv},
      primaryClass={cs.LG},
      url={https://arxiv.org/abs/2606.21889}, 
}

\newpage

\noindent{\LARGE \textbf{Electronic Companion to One-Shot Screening of Simulated Systems for Acceptability}}

\setcounter{section}{0}
\renewcommand{\thesection}{EC.\arabic{section}}

\setcounter{page}{1}
\renewcommand{\thepage}{ec\arabic{page}}
This electronic companion contains the following material:
\begin{itemize}
    \item Section~\ref{sec:Illustrating Confidence Regions}: three specific confidence regions for the expected value—ellipsoidal, box-shaped, and half-box-shaped—under the normality assumption; 
    \item Section~\ref{sec:the Envelope CI}: details of OSSA in the adaptive sampling setting;
    \item Section~\ref{sec:single_resp}: results for the single-response case;
    \item Section~\ref{sec: two acceptability with ellipsoid}: results for FOSSA using confidence regions beyond (half-)box shapes, under both stochastically constrained and Pareto optimality definitions;
    \item Section~\ref{sec:proofs}: proofs of the theoretical results;
     
    \item Section~\ref{sec:parallel}: an analysis of how FOSSA can be implemented in a parallel computing environment;
    \item Section~\ref{sec:misc_results}: miscellaneous results, including a confidence region for the responses of acceptable systems, a definition of phantom Pareto systems, and details about the metrics used to measure the quality of individual systems in the numerical experiments.
\end{itemize}

\section{Illustrating Confidence Regions Assuming Normality}
\label{sec:Illustrating Confidence Regions}

In this section, we provide detailed examples of confidence regions with three different shapes, under the assumption that the simulation outputs are normally distributed and independent across systems and the system responses are their expected values.
These assumptions allow us to present specific numerical examples of confidence regions.
\begin{assumption} \label{assump:normal}
For each System $i$, $i=1,2,\ldots,k$, the output vectors $\boldsymbol{X}_{i1},\boldsymbol{X}_{i2},\ldots,\boldsymbol{X}_{in_{i}}$ follow a multivariate normal distribution with unknown mean $\boldsymbol{\mu}_{i}=(\mu_{i1},\mu_{i2},\dots,\mu_{id})^{\intercal}$ and unknown positive definite covariance matrix $\boldsymbol{\Sigma}_{i}$.
\end{assumption}

Assumption \ref{assump:normal} is standard in the  R\&S literature as one can, in certain situations, otherwise deal with non-normally distributed outputs by batching and appealing to the Central Limit Theorem to assert that the batched means are approximately normally distributed.
We make this assumption for concreteness; other output distributions, e.g., Bernoulli, can be handled as long as valid confidence regions for the response vector can be constructed.
The mean vector $\boldsymbol{\mu}_{i}=(\mu_{i1},\mu_{i2},\dots,\mu_{id})^{\intercal}=\mathrm{E}[\boldsymbol{X}_{i1}]$ and covariance matrix $\boldsymbol{\Sigma}_{i}=\mathrm{Var}[\boldsymbol{X}_{i1}]$ are unknown but can be estimated by the sample mean vector $
\widehat{\boldsymbol{\mu}}_{i}
=n_{i}^{-1}\sum_{\ell=1}^{n_{i}}\boldsymbol{X}_{i\ell}
$ and the sample covariance matrix $
\widehat{\boldsymbol{\Sigma}}_{i}
=(n_{i}-1)^{-1}\sum_{\ell=1}^{n_{i}}(\boldsymbol{X}_{i\ell}-\widehat{\boldsymbol{\mu}}_{i})(\boldsymbol{X}_{i\ell}-\widehat{\boldsymbol{\mu}}_{i})^{\intercal}$, respectively, for $i=1,2,\dots, k$.

We will make one other assumption about how simulation outputs are generated.
\begin{assumption} \label{assump:indep}
The outputs obtained from different systems are independent, i.e., $\boldsymbol{X}_{i\ell}$ and  $\boldsymbol{X}_{jq}$ are independent for all $i,j=1,2,\ldots,k$, $i \neq j$, $\ell=1,2,\ldots,n_{i}$ and $q=1,2,\ldots,n_{j}$.
\end{assumption}

We present three specific constructions of $\mathds{C}_{ij}$ for $i, j=1,2,\dots,k$, namely, an ellipsoid, a box, and a half-box in $\mathds{R}^d$ under Assumptions~\ref{assump:normal} and \ref{assump:indep}.
Under Assumption~\ref{assump:indep}, each confidence region is constructed to achieve a confidence level of $(1-\alpha)^{1/k}$.
Table~\ref{table: confidence region} summarizes these confidence regions along with the corresponding critical values selected to ensure this per-system confidence level.

\begin{table}[tb]
\caption{A summary of the ellipsoid, box, and half-box confidence regions in different dimensions. The critical values are functions of the confidence level ($1-\alpha$), the number of systems ($k$), the number of responses ($d$) and the sample size for System $j$ ($n_{j}$).}
\label{table: confidence region}
\begin{center}
\begin{tabular}{c | c | c  | c | c} 
 Responses & Guarantee & Confidence Region & Critical Value &  Exact\\
  \hline
  \multirow{3}{*}{$d=2$} & System-wise & Half-Box ($\mathds{C}_{ii}^\mathrm{hb}$ and $\mathds{C}_{ij}^\mathrm{hb}$) & $\Lambda^\mathrm{hb}=t_{(1-\alpha)^{1/2k},n_{j}-1}$ & No \\
  \cline{2-5}
  & \multirow{2}{*}{Set-wise} & Box ($\mathds{C}_{uj}^\mathrm{box}$) & $\Lambda^\mathrm{box}=t_{1-\frac{1-(1-\alpha)^{1/2k}}{2},n_{j}-1}$ & No \\
  & & Ellipsoid ($\mathds{C}_{uj}^\mathrm{ell}$) & $\Lambda^\mathrm{ell}=F_{(1-\alpha)^{1/k},d,n_{j}-d}$ & Yes \\
  \hline
  \multirow{3}{*}{$d \geq 3$} & System-wise & Half-Box ($\mathds{C}_{ii}^\mathrm{hb}$ and $\mathds{C}_{ij}^\mathrm{hb}$) & $\Lambda^\mathrm{hb}=t_{1-\frac{1-(1-\alpha)^{1/k}}{d},n_{j}-1}$ & No \\
  \cline{2-5}
  & \multirow{2}{*}{Set-wise} & Box ($\mathds{C}_{uj}^\mathrm{box}$) & $\Lambda^\mathrm{box}=t_{1-\frac{1-(1-\alpha)^{1/k}}{2d},n_{j}-1}$ & No \\
  & & Ellipsoid ($\mathds{C}_{uj}^\mathrm{ell}$) & $\Lambda^\mathrm{ell}=F_{(1-\alpha)^{1/k},d,n_{j}-d}$ & Yes \\
\end{tabular}
\end{center}
\end{table}

We first consider the ellipsoid
$$\mathds{C}_{uj}^\mathrm{ell}=\left\{\boldsymbol{m}\in\mathds{R}^{d}\colon n_{j}(\widehat{\boldsymbol{\mu}}_{j}-\boldsymbol{m})^{\intercal}\widehat{\boldsymbol{\Sigma}}_{j}^{-1}(\widehat{\boldsymbol{\mu}}_{j}-\boldsymbol{m})\leq \frac{d(n_{j}-1)}{n_{j}-d}\Lambda^\mathrm{ell}  \right\},$$
where $\Lambda^\mathrm{ell}= F_{(1-\alpha)^{1/k},d,n_{j}-d}$, the $(1-\alpha)^{1/k}$ quantile of the $F$ distribution with $d$ numerator degrees of freedom and $n_{j}-d$ denominator degrees of freedom.
This ellipsoid is centered at $\widehat{\boldsymbol{\mu}}_{j}$ and its orientation is characterized by $\widehat{\boldsymbol{\Sigma}}_{j}^{-1}$. (To ensure the existence of $\widehat{\boldsymbol{\Sigma}}_{j}^{-1}$, it is necessary that $n_j \geq d+1$.)
Under Assumptions~\ref{assump:normal} and \ref{assump:indep}, $\mathds{C}_{uj}^\mathrm{ell}$ is a $(1-\alpha)^{1/k}$ confidence region for $\boldsymbol{\mu}_{j}$ \citep{anderson:mvstats}.

Another commonly used confidence region is the hyperrectangle
$$\mathds{C}_{uj}^\mathrm{box}=\left\{\boldsymbol{m}\in \mathds{R}^{d} \colon \widehat{\mu}_{jr}-\Lambda^\mathrm{box}\sqrt{\widehat{\Sigma}_{j}(r,r)/n_{j}} \leq 
m_{r}
\leq \widehat{\mu}_{jr}+\Lambda^\mathrm{box}\sqrt{\widehat{\Sigma}_{j}(r,r)/n_{j}} \ \mathrm{for}\  r=1,2\dots,d \right\},$$ 
where $\Lambda^\mathrm{box}=t_{\beta^\mathrm{box},n_{j}-1}$ is the $\beta^\mathrm{box}$ quantile of a Student's $t$-distribution with $n_{j}-1$ degrees of freedom.
When $d=2$, Corollary 2 in \cite{sidak1971rectangle} offers a way to split the confidence multiplicatively, resulting in $\beta^\mathrm{box}=1-\frac{1-(1-\alpha)^{1/2k}}{2}$. More generally, when $d \geq 3$, we use a Bonferroni correction with $\beta^\mathrm{box}=1-\frac{1-(1-\alpha)^{1/k}}{2d}$. 
$\mathds{C}_{uj}^\mathrm{box}$ is centered at $\boldsymbol{\widehat{\mu}}_{j}$ and formed by taking the Cartesian product of $d$ two-sided confidence intervals, one for each component of $\boldsymbol{\mu}_{j}$.

Both $\mathds{C}_{uj}^\mathrm{ell}$ and $\mathds{C}_{uj}^\mathrm{box}$ are better suited to deliver the stronger set-wise PAS guarantee.
For the system-wise PAS guarantee, we turn to the confidence regions
$$\mathds{C}_{ii}^\mathrm{hb}=\left\{\boldsymbol{m}\in \mathds{R}^{d} \colon m_{r}\geq \widehat{\mu}_{ir}-\Lambda^\mathrm{hb}\sqrt{\widehat{\Sigma}_{i}(r,r)/n_{i}} \ \mathrm{for}\  r=1,2\dots,d \right\},$$ 
and 
$$\mathds{C}_{ij}^\mathrm{hb}=\left\{\boldsymbol{m}\in \mathds{R}^{d} \colon m_{r}\leq \widehat{\mu}_{jr}+\Lambda^\mathrm{hb}\sqrt{\widehat{\Sigma}_{j}(r,r)/n_{j}} \ \mathrm{for}\  r=1,2\dots,d  \right\} \text{ for all } j\neq i,$$
where $\Lambda^\mathrm{hb}=t_{\beta^\mathrm{hb},n_{j}-1}$.
When $d=2$, $\beta^\mathrm{hb}=(1-\alpha)^{1/2k}$, and when $d\geq3$, $\beta^\mathrm{hb}=1-\frac{1-(1-\alpha)^{1/k}}{d}$.

\section{OSSA with Adaptive Sampling}
\label{sec:the Envelope CI}

In this section, we detail how OSSA procedures can be extended to the adaptive sampling setting by considering a sequence of confidence regions for the true configuration $\truemean$ that hold simultaneously over time.
To make this assertion more concrete, we suppose that an adaptive sampling procedure proceeds in discrete stages, indexed by $t\in\mathds{N}$, where in each stage a single additional observation is obtained from some selected system.

\begin{assumption}
\label{assump:conf_envelope}
For any user specified $\alpha \in (0, 1)$, for each System $i=1,2,\dots,k$, there exists a sequence of confidence regions for $\truemean$, denoted by $\{\mathds{C}_{i}(t)\}_{t\in\mathds{N}}$ such that 
 $$\mathrm{P}(\truemean\in\mathds{C}_{i}(t) \text{ for all } t \in \mathds{N})\geq1-\alpha.$$
\end{assumption}

In the adaptive sampling setting, we consider the repeated use of an OSSA procedure at each stage, using the available data, to produce a sequence of subsets over time, denoted by $\{\mathcal{S}(t)\}_{t \in \mathbb{N}}$.
If the OSSA procedure uses a sequence of always-valid confidence regions satisfying Assumption~\ref{assump:conf_envelope} in the same way it would for the one-shot setting, i.e., for the intersection-detection subproblem, then it can be shown that the procedure's PAS guarantees are valid at all times.
\begin{theorem} 
\label{theorem5}
For all $t\in\mathds{N}$, the subset $\mathcal{S}^{system}(t)=\{i\colon \mathds{A}_{i} \cap \mathds{C}_{i}(t) \neq \emptyset \}$ delivers the system-wise PAS guarantee.
\end{theorem}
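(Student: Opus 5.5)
The argument follows the proof of Theorem~\ref{theorem1}, with the fixed-sample confidence region replaced by the always-valid sequence from Assumption~\ref{assump:conf_envelope}. Fix an arbitrary configuration $\truemean$, an arbitrary stage $t\in\mathds{N}$, and an acceptable system $i\in\mathcal{A}(\truemean)$. By definition of the acceptable region, $i\in\mathcal{A}(\truemean)$ means $\truemean\in\mathds{A}_i$, and $\mathds{A}_i$ is deterministic.

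The key step is the same event inclusion as before. If $\truemean\in\mathds{C}_i(t)$, then $\truemean\in\mathds{A}_i\cap\mathds{C}_i(t)$, so this intersection is nonempty and $i\in\mathcal{S}^{system}(t)$. Hence
\begin{align*}
\mathrm{P}(i\in\mathcal{S}^{system}(t)) &= \mathrm{P}(\mathds{A}_i\cap\mathds{C}_i(t)\neq\emptyset)\\
&\geq \mathrm{P}(\truemean\in\mathds{C}_i(t))\\
&\geq \mathrm{P}(\truemean\in\mathds{C}_i(s)\text{ for all } s\in\mathds{N})\\
&\geq 1-\alpha .
\end{align*}
The second inequality holds because the event that the true configuration lies in every region of the sequence is contained in the event that it lies in the region at stage $t$. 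The last inequality is Assumption~\ref{assump:conf_envelope}.

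The only point needing care is that $t$ may be a data-dependent (random) stage, for example one chosen by an adaptive stopping rule. For fixed deterministic $t$, a marginal coverage guarantee at each $t$ would already suffice. The simultaneous coverage in Assumption~\ref{assump:conf_envelope} is what makes the statement hold uniformly in $t$, including at any random time $T$. On the event $\{\truemean\in\mathds{C}_i(s)\text{ for all } s\}$, we have $\truemean\in\mathds{C}_i(T)$ whatever value $T$ takes, so the same chain of inequalities applies with $t$ replaced by $T$.

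Because $\truemean$, $t$ and $i\in\mathcal{A}(\truemean)$ were arbitrary, the system-wise PAS guarantee holds at every stage. I expect no real obstacle beyond stating precisely which event inclusion is used.
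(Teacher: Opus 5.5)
Your proposal is correct and matches the intended argument. The paper states Theorem~\ref{theorem5} without a separate written proof and means it to follow the chain in the proof of Theorem~\ref{theorem1}, with the coverage step supplied by the simultaneous coverage in Assumption~\ref{assump:conf_envelope}, which is exactly your chain. Your remark that the same event inclusion gives validity at a data-dependent stage $T$ is also sound, and it explains why the always-valid assumption is the natural hypothesis rather than per-stage coverage.
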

When a common sequence of confidence regions is used for screening all systems, i.e., $\mathds{C}_{u}(t)=\mathds{C}_{1}(t)=\mathds{C}_{2}(t)=\dots=\mathds{C}_{k}(t)$ for all $t\in\mathds{N}$, the resulting subsets each achieve the set-wise PAS guarantee.
\begin{theorem}
\label{theorem6}
For all $t\in\mathds{N}$, the subset $\mathcal{S}^{set}(t)=\{i\colon \mathds{A}_{i} \cap \mathds{C}_{u}(t) \neq \emptyset \}$ delivers the set-wise PAS guarantee.
\end{theorem}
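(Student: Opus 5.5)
The argument mirrors the proof of Theorem~\ref{theorem2}, with the single confidence region $\mathds{C}_u$ replaced by the time-indexed family $\{\mathds{C}_u(t)\}_{t\in\mathds{N}}$, and with Assumption~\ref{assump:conf_envelope} supplying the final probability bound. Fix any $\truemean\in\mathds{R}^{k\times d}$ and any stage $t\in\mathds{N}$. Because $\mathcal{A}(\truemean)$ is deterministic, the event $\{\mathcal{A}(\truemean)\subseteq\mathcal{S}^{set}(t)\}$ coincides with the event that $\mathds{A}_i\cap\mathds{C}_u(t)\neq\emptyset$ for every $i\in\mathcal{A}(\truemean)$.

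The key containment is as follows. For each $i\in\mathcal{A}(\truemean)$, the definition of the acceptable region gives $\truemean\in\mathds{A}_i$. Hence on the event $\{\truemean\in\mathds{C}_u(t)\}$ we have $\truemean\in\mathds{A}_i\cap\mathds{C}_u(t)$ simultaneously for all acceptable $i$. This yields
\begin{align*}
\mathrm{P}(\mathcal{A}(\truemean)\subseteq\mathcal{S}^{set}(t))
&\geq \mathrm{P}(\truemean\in\mathds{C}_u(t))\\
&\geq \mathrm{P}(\truemean\in\mathds{C}_u(t')\text{ for all }t'\in\mathds{N})\\
&\geq 1-\alpha .
\end{align*}
The second inequality holds because the event for the single index $t$ contains the event for all indices. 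The last inequality is Assumption~\ref{assump:conf_envelope}, applied to the common sequence $\mathds{C}_u(t)=\mathds{C}_i(t)$. Since $\truemean$ and $t$ were arbitrary, the set-wise PAS guarantee holds at every stage.

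The one delicate point is that $t$ is a stage index and, under adaptive sampling, the data available at stage $t$ depend on earlier random allocation decisions. Per-stage coverage of $\mathds{C}_u(t)$ therefore need not hold for a naively built region. This is exactly why the argument passes through the always-valid event in Assumption~\ref{assump:conf_envelope}, which is stated directly in terms of the adaptively generated regions. No further measure-theoretic work should be needed beyond noting that the events are well defined.

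In fact, the same chain applied to the intersection over all $t$ would give the stronger conclusion $\mathrm{P}(\mathcal{A}(\truemean)\subseteq\mathcal{S}^{set}(t)\text{ for all }t)\geq1-\alpha$. I would mention this as a remark, since it is the natural form to use if the procedure is stopped at a data-dependent time.
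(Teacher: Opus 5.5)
Your proof is correct and is the argument the paper intends: it repeats the proof of Theorem~\ref{theorem2} with $\mathds{C}_u$ replaced by $\mathds{C}_u(t)$, and the only change is that the final coverage bound comes from Assumption~\ref{assump:conf_envelope} via $\mathrm{P}(\truemean\in\mathds{C}_u(t))\geq\mathrm{P}(\truemean\in\mathds{C}_u(t')\text{ for all }t'\in\mathds{N})$. Your closing remark is also correct: the same containment gives the guarantee simultaneously over all $t$, and hence at data-dependent stopping times.
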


We draw upon the idea of confidence \textit{envelopes} utilized in the Envelope Procedure of \cite{ma2018sequential} to exemplify a sequence of always-valid confidence intervals for each system's true response when simulation outputs are normally distributed and the response is the expected output. These sequences of confidence intervals for each system can be combined to produce a sequence of confidence regions (a confidence envelope) for $\truemean$ satisfying Assumption~\ref{assump:conf_envelope}.
For clarity and ease of application, the construction of these so-called confidence \textit{envelopes} is reorganized here to provide explicit confidence regions.
For notational convenience, we assume $d=1$ throughout this section.
For $d \geq 2$, confidence envelopes can be constructed separately for each component of the response and then combined into a higher-dimensional confidence envelope by taking their Cartesian product.

We first study the problem of constructing a confidence envelope for the expected output $\mu_i$ (formerly $\theta_i$) of a given System $i$.
We require that a maximum sample size $N_i$ for System $i$ is specified for all $i = 1, 2, \ldots, k$ and that some number of initial observations are obtained to estimate the variance of System $i$.
Let $n_{0i}\geq2$ denote the number of initial observations and let $\hat{\mu}_{0i}=(n_{0i})^{-1}\sum_{\ell=1}^{n_{0i}}X_{i\ell}$ and $S_i^2=(n_{0i}-1)^{-1}\sum_{\ell=1}^{n_{0i}}(X_{i\ell}-\hat{\mu}_{0i})^2$ denote the sample mean and sample variance, respectively, of these observations.
We denote the sample size of System $i$ at stage $t$ as $n_{i}(t)$ and the corresponding sample mean by $\hat{\mu}_{i}(t)=(n_{i}(t))^{-1}\sum_{\ell=1}^{n_{i}(t)}X_{i\ell}$.
Consider a sequence of lower bounds on $\mu_{i}$ of the form 
$$L_i(t)=\hat{\mu}_{i}(t)-\frac{\eta_i S_i}{\sqrt{n_i(t)}} \text{ for all } t\in\mathds{N} \text{ such that } n_{i}(t)\leq N_i,$$
where $\eta$ is a critical value that depends on the initial sample size $n_{0i}$, the desired confidence level $1-\alpha$ and the maximum sample size $N_i$.
When $\eta_i$ is chosen to satisfy, $$\eta_i^2\geq8(n_0-1)\left[\left(\frac{\mathrm{log}_2N}{2^{n_{0i}/2-1}\alpha}\right)^{2/n_{0i}}-\frac{1}{2}\right],$$
the sequence $\{L_i(t)\colon t \in \mathds{N}\}$ forms a lower bound for a $1-\alpha/2$ confidence envelope for $\mu_i$, i.e.,
$$\mathrm{P}(L_i(t) \leq \mu_i \text{ for all } t \in \mathds{N} \text{ such that } n_i(t) \leq N_i) \geq 1-\alpha/2.$$ 
Likewise, the sequence of upper bounds of the form
$$U_i(t)=\hat{\mu}_{i}(t)+\frac{\eta_i S_i}{\sqrt{n_i(t)}} \text{ for all } t\in\mathds{N} \text{ such that } n_{i}(t)\leq N_i$$
forms an upper bound for a $1-\alpha/2$ confidence envelope for $\mu_i$.
These results can be combined to give a single $1-\alpha$ confidence envelope for $\mu_i$, thereby fulfilling Assumption~\ref{assump:conf_envelope}.
\begin{theorem}
$$\mathrm{P}\left(L_i(t) \leq \mu_i\leq U_i(t) \text{ for all } t\in\mathds{N} \text{ such that } n_i(t)\leq N_i \right) \geq 1-\alpha.$$
\end{theorem}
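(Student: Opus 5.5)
The plan is to combine the two one-sided envelope statements with a union bound. Let $T_i=\{t\in\mathds{N}\colon n_i(t)\leq N_i\}$. Define the failure events
$$E_L=\{L_i(t)>\mu_i \text{ for some } t\in T_i\},\qquad E_U=\{U_i(t)<\mu_i \text{ for some } t\in T_i\}.$$
The event in the theorem is exactly $E_L^c\cap E_U^c$, because $L_i(t)\leq\mu_i\leq U_i(t)$ holds for every $t\in T_i$ if and only if neither one-sided bound fails at any such $t$. Boole's inequality then gives
$$\mathrm{P}(E_L^c\cap E_U^c)\geq 1-\mathrm{P}(E_L)-\mathrm{P}(E_U).$$

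Next I would invoke the two one-sided results stated just before the theorem, each under the given choice of $\eta_i$. The lower-bound result gives $\mathrm{P}(E_L)\leq\alpha/2$. The analogous upper-bound result gives $\mathrm{P}(E_U)\leq\alpha/2$. Substituting both into the display yields the claimed $1-\alpha$.

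No independence between the two one-sided events is needed, since both are driven by the same data $\hat\mu_i(t)$ and $S_i$ and the union bound does not care. The only care required is checking that both one-sided statements are indexed by the same set $T_i$. Otherwise the intersection would not match the event in the theorem, so I would verify this indexing explicitly.

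The main obstacle lies in the one-sided envelopes themselves, not in this combination step. If one wanted the argument self-contained, the symmetric case would follow by applying the lower-bound argument to $-X_{i\ell}$, which has mean $-\mu_i$ and the same $S_i$. Doing so turns $U_i(t)$ into the negated lower bound.

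The lower bound itself rests on the Envelope Procedure of \cite{ma2018sequential}. Its core ingredients are a maximal inequality for the Gaussian random walk over dyadic blocks of sample sizes up to $N_i$, which accounts for the $\log_2 N$ factor, followed by integration over the chi-square distribution of $S_i^2$, which produces the stated bound on $\eta_i^2$. I would simply cite that result rather than rederive it.
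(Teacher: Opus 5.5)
Your proposal is correct and takes the same route as the paper. The paper states, without further detail, that the lower and upper $1-\alpha/2$ envelope statements ``can be combined'' into a single $1-\alpha$ envelope, and that is exactly your Boole's-inequality bound $\mathrm{P}(E_L\cup E_U)\leq\alpha/2+\alpha/2$ over the common index set $\{t\colon n_i(t)\leq N_i\}$; like the paper, you take the one-sided envelopes from the Envelope Procedure as given, and your observation that the upper bound follows by applying the lower-bound result to $-X_{i\ell}$ is a valid way to cover the symmetric case.
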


\section{Optimization of a Single Response}
\label{sec:single_resp}

In this section, we illustrate the main aspects of the \ourfrmwrk\ framework on the well-studied setting where the simulation model has a single response of interest, i.e., $d=1$.
Meanwhile, the response of each system is the mean of normally distributed outputs.
We represent the scalar response of System $i$ by $\mu_{i}$ for $i=1,2,\dots,k$.
The conventional definition of acceptability is optimality, i.e., having the best (smallest) response.
The set of acceptable systems is $\mathcal{A}(\truemean)=\{i \colon {\mu}_{i}\leq\mu_{j} \ \textrm{for all} \ j\neq i\}$, where $\mathbf{M}_0=(\mu_{1},\mu_{2},\dots,\mu_{k})$.
The acceptable region for a given System $i$ is the polyhedron $\mathds{A}_{i}=\{\mathbf{M}\in\mathds{R}^{k}\colon {m}_{i}\leq {m}_{j} \mathrm{\ for\ all\ }j\neq i\}$, which satisfies Assumption \ref{assump: Acceptability} where $\mathds{F}=\mathds{R}$ and $\prec$ is equivalent to $<$. 
Other definitions of acceptability in this setting include the following: having an response that exceeds the best response by no more than a given tolerance; that is less than or equal to a given standard; whose difference relative to a given standard is less than a given tolerance; that is within the top $m$ responses; and that exceeds some fraction of the best response. \ourfrmwrk/\oursimplerfrmwk\ procedures can be devised for all of these cases.

An example of an existing screening procedure designed to deliver the system-wise PAS guarantee for this setting and definition of acceptability as optimality is the Screen-to-the-Best (STTB) procedure of \cite{nelson2001simple}. The STTB procedure requires a common sample size across all systems but has been extended in \cite{boesel2003using} to handle unequal sample sizes. The extended version returns the subset 
\begin{equation*}
\mathcal{S}^{\mathrm{STTB}}=\left\{i\colon \widehat{\mu}_{i}\leq \widehat{\mu}_{j}+\sqrt{t_{(1-\alpha)^{{1}/{(k-1)}},n_{i}-1}^{2}{\widehat{\sigma}^{2}_{i}}/{n_{i}} + t_{(1-\alpha)^{{1}/{(k-1)}},n_{j}-1}^{2}{\widehat{\sigma}^{2}_{j}}/{n_{j}}} \text{\ for \ all} \ j \neq i \right\},    
\end{equation*}
where $\widehat{\sigma}^{2}_{i}$ and $\widehat{\sigma}^{2}_{j}$ are the sample variances of the outputs from System $i$ and System $j$, respectively.
The STTB procedure controls the differences between the response of a given system being screened and those of all the other systems. 
This necessitates making pairwise comparisons, leading to a time complexity of $O(k^{2})$ for constructing the returned subset $\mathcal{S}^{\mathrm{STTB}}$.
By defining the confidence region for screening System $i$ as
$$
\mathds{C}_{i}^{\mathrm{STTB}}=\left\{\mathbf{M}\in\mathds{R}^{k}\colon m_{j}-m_{i}
\leq
\widehat{\mu}_{j}-\widehat{\mu}_{i}
+
\sqrt{t_{(1-\alpha)^{{1}/{(k-1)}},n_{i}-1}^{2}{\widehat{\sigma}^{2}_{i}}/{n_{i}} + t_{(1-\alpha)^{{1}/{(k-1)}},n_{j}-1}^{2}{\widehat{\sigma}^{2}_{j}}/{n_{j}}} \text{\ for\ all\ }j\neq i \right\},
$$
it can seen that the STTB procedure fits within the \ourfrmwrk\ framework.

We next illustrate two FOSSA procedures for this setting, one for the system-wise PAS guarantee and one for the set-wise PAS guarantee.
When the system-wise PAS guarantee is desired, the confidence region $\mathds{C}^\mathrm{os}_{i}=\bigtimes\limits_{j=1}^{k}\mathds{C}^\mathrm{os}_{ij}$ is used where $$\mathds{C}^\mathrm{os}_{ii}=\left\{m\in \mathds{R}\colon m\geq\widehat{\mu}_{i}-\Lambda^\mathrm{os}\sqrt{\widehat{\sigma}^{2}_{i}/n_{i}}\right\}$$ 
and 
$$\mathds{C}^\mathrm{os}_{ij}=\left\{m\in \mathds{R}\colon m\leq\widehat{\mu}_{j}+\Lambda^\mathrm{os}\sqrt{\widehat{\sigma}^{2}_{j}/n_{j}}\right\} \text{ for all } j\neq i$$
and $\Lambda^\mathrm{os}=t_{(1-\alpha)^{{1}/{k}},n_{j}-1} $.
$\mathds{C}^\mathrm{os}_{ii}$ and $\mathds{C}^\mathrm{os}_{ij}$ are one-sided confidence intervals for $\mu_i$ and $\mu_j$, $j \neq i$, respectively, with confidence level $(1-\alpha)^{1/k}$.
The procedure returns those systems for which $\mathds{A}_{i}\cap\mathbb{M}^{*,\mathrm{os}}_{i}\neq\emptyset$, where $\mathbb{M}^{*,\mathrm{os}}_{i}$ is defined according to Proposition~\ref{Proposition: Best and Worst }.
The set $\mathbb{M}^{*,\mathrm{os}}_{i}$ consists of a single point,
$\mathbf{M}_{i}^{*}=(m^{*}_{i1},m^{*}_{i2},\dots,m^{*}_{ik})$, where
$m^{*}_{ii}=\widehat{\mu}_{i}-\Lambda^\mathrm{os}\sqrt{\widehat{\sigma}^{2}_{i}/n_{i}}$ and $m^{*}_{ij}=\widehat{\mu}_{j}+\Lambda^\mathrm{os}\sqrt{\widehat{\sigma}^{2}_{j}/n_{j}}$ for all $j\neq i$.
The system-wise FOSSA procedure returns the subset
\begin{align*}
\mathcal{S}^{\mathrm{system}}
&=\{i\colon \mathbf{M}^{*}_{i}\in\mathds{A}_{i}\}
=\{i\colon{m}_{ii}^{*}\leq{m}_{ij}^{*}\ \mathrm{for\ all\ } j\neq i\}\\
&=\left\{i\colon \widehat{\mu}_{i} -\Lambda^\mathrm{os}_{i} \sqrt{{\widehat{\sigma}^{2}_{i}}/{n_{i}} }\leq \widehat{\mu}_{j}+\Lambda^\mathrm{os}_{j}\sqrt{{\widehat{\sigma}^{2}_{j}}/{n_{j}}}  \ {\rm for\ all} \ j\neq i\right\}.
\end{align*}
\begin{proposition} \label{proposition1}
 $\mathcal{S}^{\mathrm{system}}$ delivers the system-wise PAS guarantee with confidence level $1-\alpha$.
\end{proposition}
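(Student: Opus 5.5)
The plan is to reduce Proposition~\ref{proposition1} to Theorem~\ref{theorem1} via Theorem~\ref{theorem: FOSSA}. Two facts are needed: (i) $\mathds{C}^\mathrm{os}_{i}$ is a $1-\alpha$ confidence region for $\truemean$ for every $i$, and (ii) $\mathcal{S}^{\mathrm{system}}$ coincides with $\{i\colon \mathds{A}_i\cap\mathds{C}^\mathrm{os}_i\neq\emptyset\}$. Given both, Theorem~\ref{theorem1} yields $\mathrm{P}(i\in\mathcal{S}^{\mathrm{system}})\geq 1-\alpha$ for all $i\in\mathcal{A}(\truemean)$ and any $\truemean$.

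For (i), Assumption~\ref{assump:normal} (with $d=1$) gives that $(\widehat{\mu}_j-\mu_j)/\sqrt{\widehat{\sigma}_j^2/n_j}$ has a Student $t$ distribution with $n_j-1$ degrees of freedom. Hence each one-sided interval has exact coverage $(1-\alpha)^{1/k}$:
\[
\mathrm{P}(\mu_i\in\mathds{C}^\mathrm{os}_{ii})=\mathrm{P}(\mu_j\in\mathds{C}^\mathrm{os}_{ij})=(1-\alpha)^{1/k}.
\]
Here the critical value $\Lambda^\mathrm{os}$ is understood to use the degrees of freedom $n_j-1$ of the system it is attached to, matching the $\Lambda^\mathrm{os}_i,\Lambda^\mathrm{os}_j$ notation in the display. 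By Assumption~\ref{assump:indep}, the events $\{\mu_j\in\mathds{C}^\mathrm{os}_{ij}\}$ for $j=1,\dots,k$ are independent. Taking the product as in Section~\ref{subsetion:section 2.4} gives $\mathrm{P}(\truemean\in\mathds{C}^\mathrm{os}_i)=1-\alpha$.

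For (ii), I would verify the hypotheses of Theorem~\ref{theorem: FOSSA} in this setting.
\begin{itemize}
\item Assumption~\ref{assump: Acceptability} holds with $\mathds{F}=\mathds{R}$ and $\prec$ equal to $<$.
\item Assumption~\ref{assump: existence} holds because a closed bounded subset of $\mathds{R}$ attains its minimum and maximum.
\item Assumption~\ref{assump: closed & bounded} holds because $\mathds{B}(m)\cap\mathds{C}^\mathrm{os}_{ii}=[m^*_{ii},m]$ and $\mathds{W}(m)\cap\mathds{C}^\mathrm{os}_{ij}=[m,m^*_{ij}]$ are compact intervals.
\end{itemize}
Since $\mathds{F}=\mathds{R}$, the condition $\mathds{B}^*(\mathds{C}^\mathrm{os}_{ii})\cap\mathds{F}\neq\emptyset$ is trivial. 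Proposition~\ref{Proposition: Best and Worst } then gives $\mathbb{M}^{*,\mathrm{os}}_i=\{\mathbf{M}^*_i\}$, because $\mathds{B}^*(\mathds{C}^\mathrm{os}_{ii})=\{m^*_{ii}\}$ and $\mathds{W}^*(\mathds{C}^\mathrm{os}_{ij})=\{m^*_{ij}\}$. Theorem~\ref{theorem: FOSSA} then says $\mathds{A}_i\cap\mathds{C}^\mathrm{os}_i\neq\emptyset$ if and only if $\mathbf{M}^*_i\in\mathds{A}_i$, which means $m^*_{ii}\leq m^*_{ij}$ for all $j\neq i$. This is exactly the defining condition of $\mathcal{S}^{\mathrm{system}}$.

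Instead of going through Theorem~\ref{theorem: FOSSA}, one could argue directly. If $\mathbf{M}\in\mathds{A}_i\cap\mathds{C}^\mathrm{os}_i$, then $m^*_{ii}\leq m_i\leq m_j\leq m^*_{ij}$ for all $j\neq i$. Conversely, $\mathbf{M}^*_i$ itself lies in $\mathds{C}^\mathrm{os}_i$. I would include this one-line argument to keep the proof self-contained.

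The main point requiring care is the coverage computation in (i), specifically the system-specific degrees of freedom in $\Lambda^\mathrm{os}$ and the use of independence across systems. Everything else is bookkeeping.
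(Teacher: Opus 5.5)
Your proposal is correct and takes essentially the paper's route. The paper gives no separate proof of this proposition; it treats it as an instance of Theorem~\ref{theorem1}, once $\mathds{C}^\mathrm{os}_i$ is shown to be a $1-\alpha$ confidence region (via independence and the product of $(1-\alpha)^{1/k}$ one-sided $t$-intervals, as in Section~\ref{subsetion:section 2.4}) and $\mathbb{M}^{*,\mathrm{os}}_i=\{\mathbf{M}^*_i\}$ is identified through Proposition~\ref{Proposition: Best and Worst } and Theorem~\ref{theorem: FOSSA}. Your direct sandwich argument $m^*_{ii}\leq m_i\leq m_j\leq m^*_{ij}$ is a valid self-contained substitute for invoking Theorem~\ref{theorem: FOSSA}.
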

The system-wise FOSSA procedure is a special case of the Decoupled STTB procedure introduced in \cite{zhao23screening} with the tolerance $\delta$ set as 0.

When the set-wise PAS guarantee is desired, the common confidence region $\mathds{C}^\mathrm{ts}_{u} = \bigtimes\limits_{j=1}^{k}\mathds{C}^\mathrm{ts}_{uj}$ is used where $$
\mathds{C}^\mathrm{ts}_{uj}=
\left\{m\in \mathds{R}\colon\widehat{\mu}_{j}-\Lambda^\mathrm{ts}\sqrt{{\widehat{\sigma}^{2}_{j}}/{n_{j}}}
\leq m \leq
\widehat{\mu}_{j}+\Lambda^\mathrm{ts}\sqrt{{\widehat{\sigma}^{2}_{j}}/{n_{j}}}\right\} \text{ for } j = 1, 2, \ldots, k,$$
and $\Lambda^\mathrm{ts}=t_{\frac{1+(1-\alpha)^{{1}/{k}}}{2},n_{j}-1}$. 
Each $\mathds{C}^\mathrm{ts}_{uj}$ is a two-sided confidence interval for $\mu_j$ with confidence level $(1-\alpha)^{1/k}$. Determining the non-emptiness of $\mathds{A}_{i} \cap \mathbb{M}_i^{*,\mathrm{ts}}$ again involves checking if a single configuration $\mathbf{M}_{i}^{*}=(m^{*}_{i1},m^{*}_{i2},\dots,m^{*}_{ik})$ is in $\mathds{A}_{i}$, where
$m^{*}_{ii}=\widehat{\mu}_{i}-\Lambda^\mathrm{ts}\sqrt{\widehat{\sigma}^{2}_{i}/n_{i}}$ and $m^{*}_{ij}=\widehat{\mu}_{j}+\Lambda^\mathrm{ts}\sqrt{\widehat{\sigma}^{2}_{j}/n_{j}}$ for all $j\neq i$.
The set-wise FOSSA procedure returns the subset 
\begin{align*}
\mathcal{S}^{\mathrm{set}}
&=\{i\colon \mathbf{M}^{*}_{i}\in\mathds{A}_{i}\}
=\{i\colon{m}_{ii}^{*}\leq{m}_{ij}^{*}\ \mathrm{for\ all\ } j\neq i\}\\
&=\left\{i\colon \widehat{\mu}_{i} -\Lambda^\mathrm{ts}_{i} \sqrt{{\widehat{\sigma}^{2}_{i}}/{n_{i}} }\leq \widehat{\mu}_{j}+\Lambda^\mathrm{ts}_{j}\sqrt{{\widehat{\sigma}^{2}_{j}}/{n_{j}}} \ {\rm for\ all} \ j\neq i\right\}.
\end{align*}
\begin{proposition} \label{proposition2}
$\mathcal{S}^{\mathrm{set}}$ delivers the set-wise PAS guarantee with confidence level $1-\alpha$.
\end{proposition}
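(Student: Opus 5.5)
The plan is to show that $\mathcal{S}^{\mathrm{set}}$ is exactly the OSSA subset $\{i\colon \mathds{A}_{i}\cap\mathds{C}^\mathrm{ts}_{u}\neq\emptyset\}$ built from a common $1-\alpha$ confidence region. Theorem~\ref{theorem2} then gives the set-wise PAS guarantee directly. The argument has three steps.

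First, I will verify that $\mathds{C}^\mathrm{ts}_{u}=\bigtimes_{j=1}^{k}\mathds{C}^\mathrm{ts}_{uj}$ is a $1-\alpha$ confidence region for $\truemean$. Under Assumption~\ref{assump:normal} with $d=1$, the statistic $\sqrt{n_j}(\widehat{\mu}_j-\mu_j)/\widehat{\sigma}_j$ has a Student's $t$ distribution with $n_j-1$ degrees of freedom. Hence the symmetric interval with critical value $t_{(1+(1-\alpha)^{1/k})/2,\,n_j-1}$ covers $\mu_j$ with probability exactly $(1-\alpha)^{1/k}$. By Assumption~\ref{assump:indep}, the events $\{\mu_j\in\mathds{C}^\mathrm{ts}_{uj}\}$ are independent across $j$. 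The product computation in Section~\ref{subsetion:section 2.4} then gives $\mathrm{P}(\truemean\in\mathds{C}^\mathrm{ts}_{u})=\bigl((1-\alpha)^{1/k}\bigr)^{k}=1-\alpha$.

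Second, I will identify the OSSA subset with the displayed formula by invoking Theorem~\ref{theorem: FOSSA} and Proposition~\ref{Proposition: Best and Worst }.
\begin{itemize}
\item Assumption~\ref{assump: Acceptability} holds with $\mathds{F}=\mathds{R}$ and $\prec$ equal to $<$.
\item Assumption~\ref{assump: existence} holds because every closed bounded subset of $\mathds{R}$ has a minimum and a maximum.
\item Assumption~\ref{assump: closed & bounded} holds because, for $m$ in a compact interval, the sets $(-\infty,m]\cap\mathds{C}^\mathrm{ts}_{uj}$ and $[m,\infty)\cap\mathds{C}^\mathrm{ts}_{uj}$ are compact.
\end{itemize}
Since $\mathds{F}=\mathds{R}$, the condition $\mathds{B}^{*}(\mathds{C}_{ii})\cap\mathds{F}\neq\emptyset$ is automatic. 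Proposition~\ref{Proposition: Best and Worst } therefore gives $\mathbb{M}^{*}_{i}$ as the singleton $\mathbf{M}^{*}_{i}$, with $m^{*}_{ii}$ the lower endpoint of $\mathds{C}^\mathrm{ts}_{ui}$ and $m^{*}_{ij}$ the upper endpoint of $\mathds{C}^\mathrm{ts}_{uj}$ for $j\neq i$. By Theorem~\ref{theorem: FOSSA}, $\mathds{A}_i\cap\mathds{C}^\mathrm{ts}_u\neq\emptyset$ if and only if $\mathbf{M}^{*}_{i}\in\mathds{A}_i$, that is, $m^*_{ii}\le m^*_{ij}$ for all $j\ne i$. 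This is the displayed description of $\mathcal{S}^{\mathrm{set}}$.

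The equivalence can also be checked directly without the general machinery. If $\mathbf{M}\in\mathds{A}_i\cap\mathds{C}^\mathrm{ts}_u$, then for each $j\neq i$ we have $m^*_{ii}\le m_i\le m_j\le m^*_{ij}$. Conversely, $\mathbf{M}^*_i$ itself lies in $\mathds{C}^\mathrm{ts}_u$, so if it lies in $\mathds{A}_i$ the intersection is nonempty.

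Third, I will apply Theorem~\ref{theorem2} with $\mathds{C}_u=\mathds{C}^\mathrm{ts}_u$ to conclude that $\mathrm{P}(\mathcal{A}(\truemean)\subseteq\mathcal{S}^{\mathrm{set}})\ge 1-\alpha$.

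The only real care point is the first step. The same region must serve every $i$, which it does because its endpoints do not depend on which system is being screened. The critical value must also match the $(1-\alpha)^{1/k}$ two-sided level; I read $\Lambda^\mathrm{ts}_i$ as using $n_i-1$ degrees of freedom. Everything else is a direct citation of earlier results.
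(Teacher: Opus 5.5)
Your proposal is correct and follows the route the paper implicitly relies on; the paper gives no separate proof. The set-wise procedure is the FOSSA instance built on the common region $\mathds{C}^{\mathrm{ts}}_u$, whose coverage is $\bigl((1-\alpha)^{1/k}\bigr)^{k}=1-\alpha$ by the $t$ pivot and independence (Section~\ref{subsetion:section 2.4}); Theorem~\ref{theorem: FOSSA} with Proposition~\ref{Proposition: Best and Worst } reduces the intersection check to the singleton $\mathbf{M}^{*}_{i}$; and Theorem~\ref{theorem2} then gives $\mathrm{P}(\mathcal{A}(\truemean)\subseteq\mathcal{S}^{\mathrm{set}})\geq 1-\alpha$. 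Your direct check $m^{*}_{ii}\le m_i\le m_j\le m^{*}_{ij}$ is a nice self-contained substitute for the general FOSSA machinery in this one-dimensional case.
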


Algorithm \ref{Algorithm1: Single Response Procedure} shows the steps for constructing $\mathcal{S}^{\mathrm{system}}$ and $\mathcal{S}^{\mathrm{set}}$.
By setting $\Lambda$ equal to $\Lambda^\mathrm{os}$ or $\Lambda^\mathrm{ts}$, the returned subset $\mathcal{S}$ of Algorithm \ref{Algorithm1: Single Response Procedure} corresponds to $\mathcal{S}^{\mathrm{system}}$ or $\mathcal{S}^{\mathrm{set}}$, respectively.
From the serial ``for'' loops in Algorithm \ref{Algorithm1: Single Response Procedure}, it is evident that constructing $\mathcal{S}^{\mathrm{system}}$ or $\mathcal{S}^{\mathrm{set}}$ has time complexity $O(k)$.

\begin{algorithm}[tb]
\caption{FOSSA for optimizing a single response}
\label{Algorithm1: Single Response Procedure}
\begin{algorithmic}[1]
\Require $\widehat{\mu}_{i},\widehat{\sigma}^{2}_{i},n_{i}$ for $i=1,2,\dots,k$ and $\Lambda$
\State $\mathcal{S}\gets\{1,2,\dots,k\}$
\For{$i=1,2,\dots,k$}
    \State $l_i\gets\widehat{\mu}_{i}-\Lambda\sqrt{\widehat{\sigma}^{2}_{i}/n_{i}}$
    \State $u_i\gets\widehat{\mu}_{i}+\Lambda\sqrt{\widehat{\sigma}^{2}_{i}/n_{i}}$
\EndFor
\State $\tau\gets\min_{i\in\{1,2,\dots,k\}}u_i$
\For{$i=1,2,\dots,k$}
    \If{$l_i>\tau$}
        \State $\mathcal{S}\gets\mathcal{S}\setminus\{i\}$
    \EndIf
\EndFor
\State \Return $\mathcal{S}$
\end{algorithmic}
\end{algorithm}

\begin{figure}
    \centering
    \includegraphics[width=14cm]{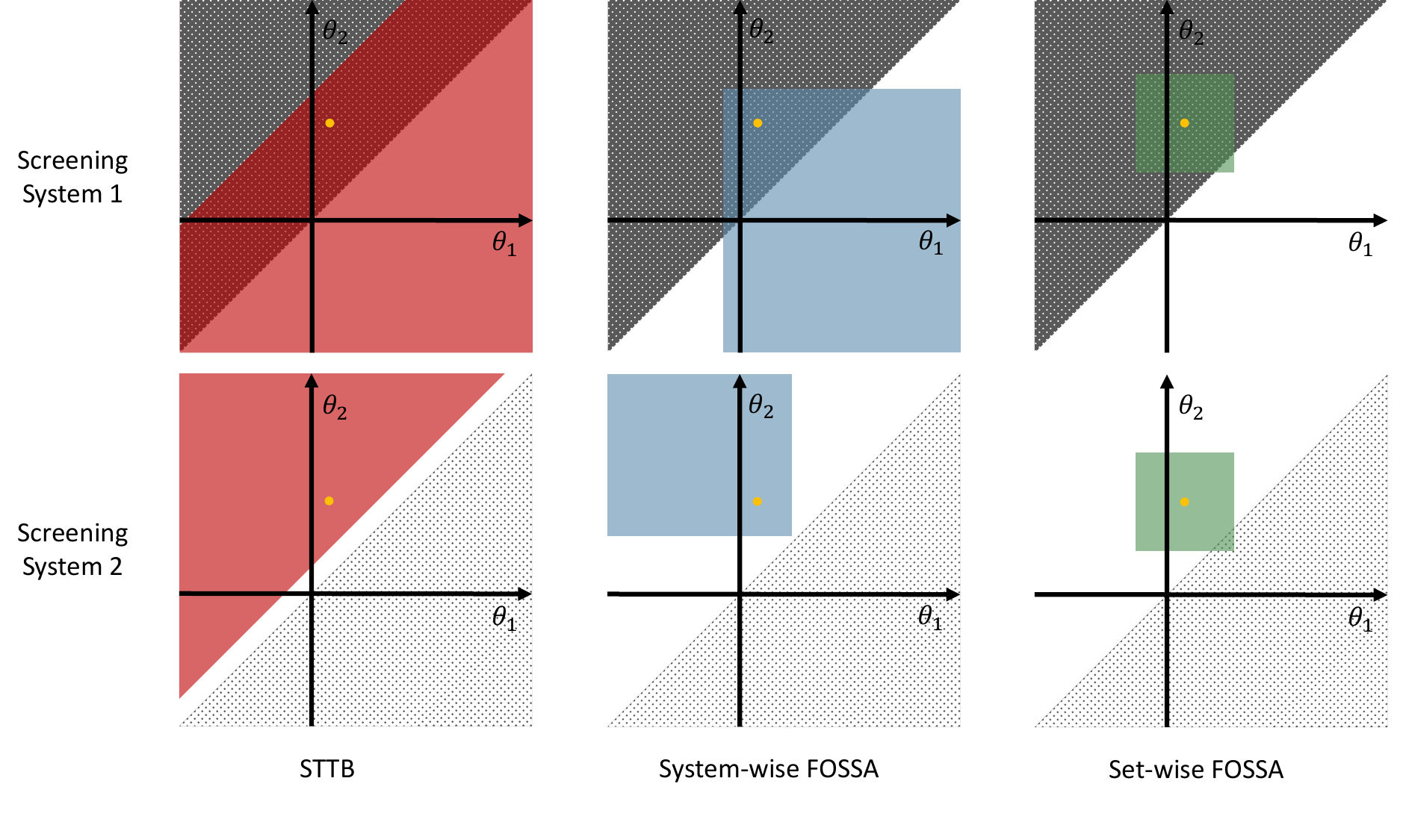}
    
    \caption{Depiction of STTB, system-wise FOSSA and set-wise FOSSA procedures for a problem with $k=2$ systems and $d=1$ response. The yellow dot indicates the sample mean vector $\widehat{\mathbf{M}}=(\widehat{\mu}_{1},\widehat{\mu}_{2})$. For each procedure, the colored areas represent the confidence regions for $\mathbf{M}_0=(\theta_{1},\theta_{2})$, i.e., $\mathds{C}_1$ and $\mathds{C}_2$, the gray region with white dots is $\mathds{A}_{1}$ and the white region with gray dots is $\mathds{A}_{2}$.
    In the case of screening System 1, the top-left corners of the blue and green regions are $\mathbb{M}^{*}_{1}$; whereas in the case of screening System 2, the bottom-right corners of these regions are $\mathbb{M}^{*}_{2}$.}
    \label{fig:Fig1}
\end{figure}

Figure \ref{fig:Fig1} shows the confidence regions used by STTB and the two FOSSA procedures. 
It can be seen from the figure that STTB will be more powerful at screening systems than system-wise FOSSA because of how the geometry of $\mathds{C}^{\mathrm{STTB}}_{i}$ better aligns with that of $\mathds{A}_{i}$.
This observation is a consequence of a nested relationship between $\mathcal{S}^{\mathrm{STTB}}$ and $\mathcal{S}^{\mathrm{system}}$, as described in Proposition \ref{Prop:STTB subset system}.
\begin{proposition} \label{Prop:STTB subset system}
For any fixed data, $\widehat{\mu}_{i}$ and $\widehat{\sigma}^{2}_{i}$ for $i=1,2,\dots,k$, and confidence level $1-\alpha$,  $\mathcal{S}^{\mathrm{STTB}} \subseteq \mathcal{S}^{\mathrm{system}}$.
\end{proposition}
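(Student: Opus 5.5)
The plan is to compare the two returned subsets through the pairwise inequalities that define them, for each fixed pair $(i,j)$. Let $\mathrm{Dec}_i = \widehat{\mu}_i - \widehat{\mu}_j$. Define
$$a_i = t_{(1-\alpha)^{1/(k-1)},n_i-1}\sqrt{\widehat{\sigma}^2_i/n_i}, \qquad b_i = t_{(1-\alpha)^{1/k},n_i-1}\sqrt{\widehat{\sigma}^2_i/n_i},$$
and define $a_j, b_j$ analogously. In this notation, $i\in\mathcal{S}^{\mathrm{STTB}}$ means $\mathrm{Dec}_i \le \sqrt{a_i^2+a_j^2}$ for all $j\neq i$. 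Likewise, $i\in\mathcal{S}^{\mathrm{system}}$ means $\mathrm{Dec}_i \le b_i + b_j$ for all $j\neq i$, reading $\Lambda^{\mathrm{os}}_i$ as the $t$ quantile with $n_i-1$ degrees of freedom. It therefore suffices to prove the chain
$$\sqrt{a_i^2+a_j^2}\le a_i+a_j\le b_i+b_j$$
for every $j\neq i$. Then any $i$ satisfying every STTB inequality also satisfies every system-wise FOSSA inequality.

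The first step is to establish nonnegativity of the $a$'s. Since $0<1-\alpha<1$ and $k\ge 2$, we have $(1-\alpha)^{1/(k-1)} \ge 1-\alpha$. Under the standing convention that $1-\alpha\ge 1/2$, every quantile level involved is at least $1/2$. Because the Student $t$ distribution is symmetric about zero, the corresponding quantiles are nonnegative, so $a_i,a_j\ge 0$. With nonnegativity in hand, the inequality $\sqrt{a_i^2+a_j^2}\le a_i+a_j$ follows by squaring, since the cross term $2a_ia_j$ is nonnegative.

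The second step compares the $a$'s with the $b$'s. Because $1-\alpha<1$, the map $m\mapsto(1-\alpha)^{1/m}$ is increasing in $m$, so $(1-\alpha)^{1/k}>(1-\alpha)^{1/(k-1)}$. The quantile function of the $t$ distribution with fixed degrees of freedom is increasing, and the scale factors $\sqrt{\widehat{\sigma}^2_i/n_i}$ are common to $a_i$ and $b_i$. Hence $b_i\ge a_i$ and $b_j\ge a_j$, which gives the second inequality of the chain.

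The only real obstacle is the sign issue in the first step. If $1-\alpha$ were below $1/2$, the STTB quantiles could be negative. The square root would then hide the sign, and $|a_i|$ need not be bounded by $b_i$. I would therefore state explicitly that the comparison is made for the practically relevant range $1-\alpha\ge 1/2$, and perhaps remark that the containment can fail outside it. Everything else is an elementary monotonicity argument that holds for any fixed data, as the proposition requires.
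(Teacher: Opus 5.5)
Your proof is correct and follows the paper's argument essentially verbatim. It uses the same chain $\sqrt{a_i^2+a_j^2}\le a_i+a_j\le b_i+b_j$, justified by nonnegativity of the STTB $t$ quantiles and by monotonicity of both $m\mapsto(1-\alpha)^{1/m}$ and the $t$ quantile function. One difference is worth noting. Your restriction $1-\alpha\ge 1/2$ is stronger than needed: the paper assumes only $1-\alpha>(1/2)^{k-1}$, which is exactly what makes $(1-\alpha)^{1/(k-1)}>1/2$. Consequently, your suggested remark that the containment can fail for $1-\alpha<1/2$ is inaccurate when $k\ge 3$ and $2^{-(k-1)}\le 1-\alpha<1/2$, where the containment still holds.
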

\noindent {\bf Proof of Proposition ~\ref{Prop:STTB subset system}:}

Fix any $1-\alpha> (1/2)^{k-1}$ , $k\geq2$, and data $\widehat{\mu}_{i}$ and $\widehat{\sigma}^{2}_{i}$ for $i=1,2,\dots,k$.
For any $i$ and $j$, $i \neq j$,
\begin{align*}
   \sqrt{t_{(1-\alpha)^{{1}/{(k-1)}},n_{i}-1}^{2}{\widehat{\sigma}^{2}_{i}}/{n_{i}} + t_{(1-\alpha)^{{1}/{(k-1)}},n_{j}-1}^{2}{\widehat{\sigma}^{2}_{j}}/{n_{j}}}
   &\leq t_{(1-\alpha)^{{1}/{(k-1)}},n_{i}-1}\sqrt{{\widehat{\sigma}^{2}_{i}}/{n_{i}}}+t_{(1-\alpha)^{{1}/{(k-1)}},n_{j}-1}\sqrt{{\widehat{\sigma}^{2}_{j}}/{n_{j}}}\\
   &\leq t_{(1-\alpha)^{{1}/{k}},n_{i}-1}\sqrt{{\widehat{\sigma}^{2}_{i}}/{n_{i}}}+t_{(1-\alpha)^{{1}/{k}},n_{j}-1}\sqrt{{\widehat{\sigma}^{2}_{j}}/{n_{j}}}\\ &=\Lambda^{os}_{i}\sqrt{{\widehat{\sigma}^{2}_{i}}/{n_{i}}}+\Lambda^{os}_{j}\sqrt{{\widehat{\sigma}^{2}_{j}}/{n_{j}}}.
\end{align*}
Thus the event
$$\left\{\widehat{\mu}_{i}\leq \widehat{\mu}_{j}+ \sqrt{t_{(1-\alpha)^{{1}/{(k-1)}},n_{i}-1}^{2}{\widehat{\sigma}^{2}_{i}}/{n_{i}} + t_{(1-\alpha)^{{1}/{(k-1)}},n_{j}-1}^{2}{\widehat{\sigma}^{2}_{j}}/{n_{j}}} \text{ for all } j \neq i \right\}$$ implies that $$\left\{\widehat{\mu}_{i} -\Lambda^\mathrm{os}_{i} \sqrt{{\widehat{\sigma}^{2}_{i}}/{n_{i}} }\leq \widehat{\mu}_{j}+\Lambda^\mathrm{os}_{j}\sqrt{{\widehat{\sigma}^{2}_{j}}/{n_{j}}} \text{ for all } j \neq i\right\}.$$
Therefore, if $i\in\mathcal{S}^\mathrm{STTB}$, then $i\in\mathcal{S}^\mathrm{system}$.
Because the choice of $i$ is arbitrary, we have shown that $\mathcal{S}^{\mathrm{STTB}} \subseteq \mathcal{S}^{\mathrm{system}}$.
$\square$

We previously remarked that the time complexity of obtaining $\mathcal{S}^{\mathrm{STTB}}$ is $O(k^{2})$ and that of obtaining $\mathcal{S}^{\mathrm{system}}$ is $O(k)$. 
Coupled with Proposition~\ref{Prop:STTB subset system}, this observation suggests that system-wise FOSSA procedure could be useful for \emph{pre-screening}, to accelerate the construction of $\mathcal{S}^{\mathrm{STTB}}$ without compromising any screening power.
Our use of the term pre-screening refers to the application of a screening procedure to a set of systems and their respective outputs obtained from prior sampling followed by applying another screening procedure on only those systems that survived the initial screening, without taking additional replications.
\cite{zhao23screening} provide experimental timing results showing the pre-screening potential of system-wise FOSSA along with a version of Proposition \ref{Prop:STTB subset system} (without proof). That tutorial also contains numerical experiments comparing STTB, system-wise FOSSA, and set-wise FOSSA.

\section{FOSSA with a General Confidence Region}
\label{sec: two acceptability with ellipsoid}
In this section, we use $\mathds{C}_u^{\mathrm{ell}} = \bigtimes\limits_{j=1}^{k} \mathds{C}_{uj}^{\mathrm{ell}}$ as an example to demonstrate how FOSSA with multiple responses works when the confidence regions have a general shape rather than a box or half-box.
\subsection{Optimization with Stochastic Constraints}
When $\mathds{C}_{ii}$ has a general shape other than a (half-)box, determining $\tau_{ii}$, i.e., checking whether $\mathds{C}_{ii} \cap \mathds{F} = \emptyset$ and solving $\min m_1$ s.t.\ $\boldsymbol{m} \in \mathds{C}_{ii} \cap \mathds{F}$ can be challenging.
Similar difficulties arise when determining $\tau_{ij}$.
However, when $\mathds{C}_{ij}$ is an ellipsoid for all $i, j = 1, 2, \dots, k$, checking whether the intersection is empty can be determined analytically, and the associated optimization problems reduce to well-structured convex optimization problems.
We will use $\mathds{C}_{u}^\mathrm{ell}=\bigtimes\limits_{j=1}^{k}\mathds{C}_{uj}^\mathrm{ell}$ as an example to demonstrate how the geometry of an ellipsoid and the acceptable regions can be exploited to solve the feasibility and optimization problems that arise.

A FOSSA procedure using the confidence region  $\mathds{C}_{u}^\mathrm{ell}=\bigtimes\limits_{j=1}^{k}\mathds{C}_{uj}^\mathrm{ell}$ will deliver the set-wise PAS guarantee. We call this procedure FOSSA Ellipsoid. In this case, one can cheaply and exactly determine the non-emptiness of $\mathds{C}_{ui}^\mathrm{ell}\cap\mathds{F}^{c}$ in (\ref{optimization problem 2_Constrained Opt}) for $i=1,2,\dots,k$ by making use of the projection of the ellipsoid on each dimension. 
\begin{proposition}
\label{Proposition: ellipsoid projection}
$\mathds{C}_{ui}^\mathrm{ell}\cap\mathds{F}^{c}=\emptyset$ if and only if $\widehat{{\mu}}_{ir} +\sqrt{\frac{d(n_{i}-1)}{n_{i}(n_{i}-d)} \Lambda^\mathrm{ell}\widehat{\Sigma}_{i}(r,r)}\leq\mu^{\dagger}_{r}$ for all $r=2,3,\dots,d$.
\end{proposition}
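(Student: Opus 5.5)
The plan is to reduce the statement to a componentwise support-function computation for the ellipsoid. Since $\mathds{F}=\{\boldsymbol{m}\colon m_r\le \mu^{\dagger}_r,\ r=2,\dots,d\}$, its complement is the union of the open half-spaces $\{m_r>\mu^{\dagger}_r\}$ for $r=2,\dots,d$. Consequently $\mathds{C}_{ui}^{\mathrm{ell}}\cap\mathds{F}^c=\emptyset$ holds exactly when, for every $r=2,\dots,d$, no point of the ellipsoid has $r$th coordinate exceeding $\mu^{\dagger}_r$. Because the ellipsoid is compact, this is equivalent to
\[
\max\{m_r\colon \boldsymbol{m}\in\mathds{C}_{ui}^{\mathrm{ell}}\}\le \mu^{\dagger}_r \quad\text{for all } r=2,\dots,d.
\]
Each direction is then handled separately.

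The key step is to evaluate this maximum in closed form. Write the ellipsoid as $\{\boldsymbol{m}\colon (\boldsymbol{m}-\widehat{\boldsymbol{\mu}}_i)^{\intercal} A^{-1}(\boldsymbol{m}-\widehat{\boldsymbol{\mu}}_i)\le 1\}$ with
\[
A=c\,\widehat{\boldsymbol{\Sigma}}_i, \qquad c=\frac{d(n_i-1)}{n_i(n_i-d)}\Lambda^{\mathrm{ell}},
\]
obtained by dividing the defining inequality by $n_i$ and by the constant on its right-hand side. Here $\widehat{\boldsymbol{\Sigma}}_i$ is positive definite, which holds almost surely under Assumption~\ref{assump:normal} when $n_i\ge d+1$.

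To maximize $\boldsymbol{e}_r^{\intercal}\boldsymbol{m}$, substitute $\boldsymbol{m}=\widehat{\boldsymbol{\mu}}_i+A^{1/2}\boldsymbol{z}$ with $\|\boldsymbol{z}\|\le 1$. The objective becomes $\widehat{\mu}_{ir}+(A^{1/2}\boldsymbol{e}_r)^{\intercal}\boldsymbol{z}$, and by Cauchy--Schwarz its maximum is
\[
\widehat{\mu}_{ir}+\|A^{1/2}\boldsymbol{e}_r\| = \widehat{\mu}_{ir}+\sqrt{A(r,r)} = \widehat{\mu}_{ir}+\sqrt{c\,\widehat{\Sigma}_i(r,r)}.
\]
The maximum is attained at $\boldsymbol{z}=A^{1/2}\boldsymbol{e}_r/\|A^{1/2}\boldsymbol{e}_r\|$, so it is a true maximum and not merely a supremum. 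Substituting this value into the componentwise condition gives exactly the inequality in the statement, and both directions of the ``if and only if'' follow at once.

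The only delicate point is the boundary case, namely that a non-strict inequality is the correct one. Because $\mathds{F}^c$ uses strict inequalities and the maximum is attained, a maximizer lying exactly on the hyperplane $m_r=\mu^{\dagger}_r$ remains feasible. So the condition ``maximum $\le\mu^{\dagger}_r$'' is precisely right, and this closes the argument.
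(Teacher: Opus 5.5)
Your proof is correct and follows the same overall route as the paper. Both arguments reduce $\mathds{C}_{ui}^{\mathrm{ell}}\cap\mathds{F}^{c}=\emptyset$ to the requirement that the ellipsoid's projection onto each constrained coordinate $r=2,\dots,d$ stays at or below $\mu^{\dagger}_r$, and then compute that projection in closed form.

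The difference is in how the projection is obtained. The paper invokes Lemma~\ref{lem:ellipse_proj}, which uses the precision-matrix form $\boldsymbol{x}^{\intercal}\mathbf{A}\boldsymbol{x}\le b$ (here $\mathbf{A}=n_i\widehat{\boldsymbol{\Sigma}}_i^{-1}$). It derives the endpoint $\sqrt{b\,(\mathbf{A}^{-1})_{rr}}$ from the first-order KKT condition together with activity of the constraint. You instead parametrize by the shape matrix $c\,\widehat{\boldsymbol{\Sigma}}_i$, whiten, and apply Cauchy--Schwarz. This gives the upper bound and its attainment in one step, without choosing among KKT candidates. It also makes $\widehat{\Sigma}_i(r,r)$ appear directly rather than through an inverse.

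Your version also states both directions of the equivalence explicitly. The paper's write-up only spells out the sufficiency direction, namely containment of the product of projections in $\mathds{F}$.

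One small remark: the ``delicate point'' about attainment is not needed. The condition $\sup\{m_r\colon \boldsymbol{m}\in\mathds{C}_{ui}^{\mathrm{ell}}\}\le\mu^{\dagger}_r$ is equivalent to there being no point with $m_r>\mu^{\dagger}_r$, whether or not the supremum is attained.
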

As for (\ref{optimization problem 1_Constrained Opt}), the non-emptiness of $\mathds{C}_{ui}^\mathrm{ell}\cap\mathds{F}$ for any $i=1,2,\dots,k$ can be determined by solving a linearly constrained quadratic program.
More specifically, $\mathds{C}_{ui}^\mathrm{ell}\cap\mathds{F}=\emptyset$ if and only if $\min\limits_{\boldsymbol{m}\in\mathds{F}}\ n_{i}(\widehat{\boldsymbol{\mu}}_{i}-\boldsymbol{m})^{\intercal}\widehat{\boldsymbol{\Sigma}}_{i}^{-1}(\widehat{\boldsymbol{\mu}}_{i}-\boldsymbol{m})>\frac{d(n_{i}-1)}{n_{i}-d} \Lambda^\mathrm{ell}$.
For the case where $\mathds{C}_{ui}^\mathrm{ell}\cap\mathds{F}\neq\emptyset$, the optimization problem defining $\tau_{ii}$ is a quadratically constrained linear program, and therefore amounts to solving a second-order conic program assuming $\widehat{\boldsymbol{\Sigma}}_{i}^{-1}$ is positive definite. Both quadratic programs can be solved by the interior-point method with iteration complexity of $O(\sqrt{d}\log (1/\epsilon))$, i.e., it takes at most $O(\sqrt{d}\log(1/\epsilon))$ iterations to find an $\epsilon$-precise solution. 
Meanwhile, the time complexity of each iteration of the interior-point method is $O(d^3)$. The two optimization problems in (\ref{optimization problem 1_Constrained Opt}) can alternatively be solved exactly by exhaustively enumerating all possible active sets, with time complexity $O(2^{d}d^{3})$. This approach does not require any numerical optimization algorithms and can be more efficient for instances with few responses.

The \oursimplerfrmwk\ Ellipsoid procedure for constrained optimality is summarized in Algorithm~\ref{Constrained-Optimal Procedure(Ellipsoid)}. It is evident from the serial ``for'' loops that the procedure's time complexity in terms of the number of systems is $O(k)$.

\begin{algorithm}[tb]
\caption{FOSSA Ellipsoid for Constrained Optimality}
\label{Constrained-Optimal Procedure(Ellipsoid)}
\begin{algorithmic}[1]
\Require $\widehat{\boldsymbol{\mu}}_{i},\widehat{\boldsymbol{\Sigma}}_{i},n_i$ for $i=1,2,\dots,k$
\State $\mathcal{S}\gets\{1,2,\dots,k\}$
\For{$i=1,2,\dots,k$}
    \If{$\mathds{C}_{ui}^{\mathrm{ell}}\cap\mathds{F}\neq\emptyset$}
        \State $l_i\gets\min m_{i1}\ \mathrm{s.t.}\ \boldsymbol{m}_{i}\in\mathds{C}_{ui}^{\mathrm{ell}}\cap\mathds{F}$
    \Else
        \State $l_i\gets\infty$
    \EndIf
    \If{$\mathds{C}_{ui}^{\mathrm{ell}}\cap\mathds{F}^{c}=\emptyset$}
        \State $u_i\gets\widehat{\mu}_{ir}+\sqrt{\frac{d(n_i-1)}{n_i(n_i-d)}\Lambda^\mathrm{ell}\widehat{\Sigma}_{i}(r,r)}$
    \Else
        \State $u_i\gets\infty$
    \EndIf
\EndFor
\State $\tau\gets\min_{i\in\{1,2,\dots,k\}}u_i$
\For{$i=1,2,\dots,k$}
    \If{$l_i>\tau$}
        \State $\mathcal{S}\gets\mathcal{S}\setminus\{i\}$
    \EndIf
\EndFor
\State \Return $\mathcal{S}$
\end{algorithmic}
\end{algorithm}

\subsection{Pareto Optimality}
\label{subsec:pareto_ellipsoid}

In this subsection, we present an analogous FOSSA Ellipsoid procedure for Pareto optimality where the responses are expected values and the outputs are normally distributed. 
For the $\mathds{C}_{u}^\mathrm{ell}$ confidence region, the corresponding $\mathbb{M}^{*}_{i}$ is a non-convex region, suggesting that checking the non-emptiness of $\mathds{A}_i \cap \mathbb{M}^{*}_{i}$ is no easier than checking the non-emptiness of $\mathds{A}_i \cap \mathds{C}_{i}$. 
In this case, we find it convenient to work with the set of response vectors dominated by (or equal to) all points in $\mathds{C}_{uj}^{\mathrm{ell}}$ for $j = 1, 2, \ldots, k$. We define $\boldsymbol{w}_{j}=(w_{j1},w_{j2},\dots,w_{jd})$, where $w_{jr}=\max m_{jr} \text{ s.t. } \boldsymbol{m}_{j}\in \mathds{C}_{ij}$ for $r=1,2,\dots,d$, and $\mathds{W}(\boldsymbol{w}_{j})=\{\boldsymbol{m}\colon m_{r}\geq w_{jr}\ \mathrm{for\ }r=1,2,\dots,d\}$.
We refer to $\boldsymbol{w}_{j}$ as the \emph{dominated vector} and $\mathds{W}(\boldsymbol{w}_{j})$ as the \emph{dominated region} of the confidence region $\mathds{C}_{uj}^{\mathrm{ell}}$.
Proposition \ref{Pareto_Proposition} shows the connection between the sets $\mathds{W}(\boldsymbol{w}_{j})$ for $j \neq i$ and the task of checking the non-emptiness of $\mathds{A}_{i} \cap \mathds{C}_{u}^{\mathrm{ell}}$.
\begin{proposition}
\label{Pareto_Proposition}
For any $i=1,2,\dots,k$, 
$\mathds{A}_{i}\cap\mathds{C}_{u}^{\mathrm{ell}}\neq\emptyset$ if and only if $\mathds{C}_{ui}^{\mathrm{ell}}\cap(\bigcup\limits_{j\neq i} \mathds{W}(\boldsymbol{w}_{j}))^{c}\neq\emptyset$.
\end{proposition}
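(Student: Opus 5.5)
The argument rests on the Cartesian product structure $\mathds{C}_u^{\mathrm{ell}}=\bigtimes_{j=1}^k\mathds{C}_{uj}^{\mathrm{ell}}$, so that the coordinates $\boldsymbol{m}_i$ and $\boldsymbol{m}_j$, $j\neq i$, can be chosen independently. Write $\mathds{A}_i\cap\mathds{C}_u^{\mathrm{ell}}\neq\emptyset$ as the existence of $\boldsymbol{m}_i\in\mathds{C}_{ui}^{\mathrm{ell}}$ and, for each $j\neq i$, $\boldsymbol{m}_j\in\mathds{C}_{uj}^{\mathrm{ell}}$ with $\boldsymbol{m}_j\nprec_p\boldsymbol{m}_i$. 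Because the choices separate, this reduces to a statement about a single point $\boldsymbol{m}_i\in\mathds{C}_{ui}^{\mathrm{ell}}$: for every $j\neq i$ there must be some $\boldsymbol{m}_j\in\mathds{C}_{uj}^{\mathrm{ell}}$ that does not dominate $\boldsymbol{m}_i$.

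The key lemma, for a fixed point $\boldsymbol{m}$ and a fixed $j$, is that every point of $\mathds{C}_{uj}^{\mathrm{ell}}$ dominates $\boldsymbol{m}$ if and only if $\boldsymbol{m}\in\mathds{W}(\boldsymbol{w}_j)$.

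For ($\Leftarrow$), suppose $\boldsymbol{m}\ge\boldsymbol{w}_j$ componentwise. Every $\boldsymbol{m}'\in\mathds{C}_{uj}^{\mathrm{ell}}$ satisfies $\boldsymbol{m}'\le\boldsymbol{w}_j\le\boldsymbol{m}$ by definition of $\boldsymbol{w}_j$. Hence $\boldsymbol{m}'$ weakly dominates $\boldsymbol{m}$. Strict improvement in some coordinate requires $\boldsymbol{m}'\neq\boldsymbol{m}$. The possible failure is $\boldsymbol{m}'=\boldsymbol{m}$, which forces $\boldsymbol{m}=\boldsymbol{w}_j\in\mathds{C}_{uj}^{\mathrm{ell}}$. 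For a nondegenerate ellipsoid with $d\ge2$, the componentwise maxima are attained at distinct boundary points, so $\boldsymbol{w}_j\notin\mathds{C}_{uj}^{\mathrm{ell}}$ (this uses positive definiteness of $\widehat{\boldsymbol{\Sigma}}_j$). Thus every point of the ellipsoid is $\le\boldsymbol{m}$ and differs from it, so it dominates $\boldsymbol{m}$.

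For ($\Rightarrow$), suppose $m_r<w_{jr}$ for some $r$. Take the maximizer $\boldsymbol{m}'$ of coordinate $r$ over the compact ellipsoid. It has $m'_r=w_{jr}>m_r$, so $\boldsymbol{m}'$ does not dominate $\boldsymbol{m}$.

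With the lemma in hand, $\mathds{A}_i\cap\mathds{C}_u^{\mathrm{ell}}\neq\emptyset$ iff there is $\boldsymbol{m}_i\in\mathds{C}_{ui}^{\mathrm{ell}}$ with $\boldsymbol{m}_i\notin\mathds{W}(\boldsymbol{w}_j)$ for all $j\neq i$. That is exactly $\mathds{C}_{ui}^{\mathrm{ell}}\cap(\bigcup_{j\neq i}\mathds{W}(\boldsymbol{w}_j))^c\neq\emptyset$.

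For the reverse direction, the witnesses $\boldsymbol{m}_j$ are the coordinate maximizers from ($\Rightarrow$). Different $j$ impose no joint constraints, so together they give a configuration in $\mathds{A}_i\cap\mathds{C}_u^{\mathrm{ell}}$.

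The main obstacle is the boundary case in ($\Leftarrow$): showing that the equality $\boldsymbol{m}'=\boldsymbol{m}$ cannot occur, i.e., that $\boldsymbol{w}_j\notin\mathds{C}_{uj}^{\mathrm{ell}}$. I would settle it by computing the maximizer of coordinate $r$ explicitly, $\widehat{\boldsymbol{\mu}}_j+c\,\widehat{\boldsymbol{\Sigma}}_j\boldsymbol{e}_r/\sqrt{\widehat{\Sigma}_j(r,r)}$ for a constant $c>0$ set by the ellipsoid's radius. This point is unique, and maximizers for different $r$ differ, so no point of the ellipsoid attains all the coordinate maxima simultaneously.
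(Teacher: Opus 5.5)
Your proof is correct and takes essentially the same route as the paper. The paper obtains this result as a corollary of its general Pareto proposition, using the lemma that some point of $\mathds{C}_{ij}$ fails to dominate $\boldsymbol{m}_{ii}$ if and only if $\boldsymbol{m}_{ii}\notin\mathds{W}(\boldsymbol{w}_{ij})$ when $\boldsymbol{w}_{ij}\notin\mathds{C}_{ij}$, and then assembling witnesses through the Cartesian product. Your explicit check that $\boldsymbol{w}_j\notin\mathds{C}_{uj}^{\mathrm{ell}}$ for $d\ge 2$, via distinct unique coordinate maximizers under positive definiteness, supplies the hypothesis the paper uses without verification when calling this a corollary.
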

Proposition~\ref{Pareto_Proposition} allows us to screen systems by instead checking the non-emptiness of $\mathds{C}_{ii}\cap(\bigcup\limits_{j\neq i} \mathds{W}(\boldsymbol{w}_{j}))^{c}$, which is more tractable.
An analogous result for generic confidence regions is given in Proposition~\ref{proposition: Pareto_opt_full} in \ref{sec:proofs}.
Although the region $(\bigcup\limits_{j\neq i} \mathds{W}(\boldsymbol{w}_{j}))^{c}$ is non-convex, it can be expressed as the finite union of half-boxes whose corner points are associated with \emph{phantom Pareto systems}.
Phantom Pareto systems are dummy response vectors created by combining the responses of systems in the Pareto front of a set of points; we provide a more rigorous definition in~\ref{ec:phantom_pareto}.
Phantom Pareto systems have been used in multi-objective ranking and selection to aid in analyzing the probability of misclassifying systems relative to a known Pareto front \citep{applegate2017phantom}.
An algorithm for identifying phantom Pareto systems can be found in the online supplement of \cite{applegate2020multi}.

We represent the phantom Pareto systems associated with the Pareto front of the set $\{\boldsymbol{w}_{j}\colon j\neq i\}$ by $\{\boldsymbol{p}_{1},\boldsymbol{p}_{2},\dots,\boldsymbol{p}_{h}\}$, where $\boldsymbol{p}_{q}=(p_{q1},p_{q2},\dots,p_{qd})^{\intercal}$ for $q = 1, 2, \ldots, h$. 
It can be shown that $(\bigcup\limits_{j\neq i} \mathds{W}(\boldsymbol{w}_{j}))^{c} = \bigcup\limits_{q=1}^{h} \mathds{B}_{p}(\boldsymbol{p}_{q})$ where $\mathds{B}_{p}(\boldsymbol{p}_{q})=\{\boldsymbol{m}\colon m_{r}<p_{qr}\  \text{for}\ r=1,2,\dots,d\}$; full details can be found in the proof of Theorem 2 in the online supplement of \cite{hunter2016maximizing}. 
The non-emptiness of $\mathds{C}_{ui}\cap\bigcup\limits_{q=1}^{h} \mathds{B}_{p}(\boldsymbol{p}_{q})$ can then be determined by solving $h$ linearly constrained quadratic programs, where the number of Phantom Pareto systems, $h$, is $O(k^{\lfloor d/2 \rfloor})$.
To be more specific, $\mathds{C}_{ui}$ intersects a given $\mathds{B}_{p}(\boldsymbol{p}_{q})$ if and only if the optimal value of the optimization problem $\min\ (\widehat{\boldsymbol{\mu}}_{i}-\boldsymbol{m})^{\intercal}\widehat{\boldsymbol{\Sigma}}_{i}^{-1}(\widehat{\boldsymbol{\mu}}_{i}-\boldsymbol{m})\ \mathrm{s.t.}\  \boldsymbol{m} \leq \boldsymbol{p}_{q}$ is less than  $\frac{d(n_{i}-1)}{n_{i}(n_{i}-d)}\Lambda^\mathrm{ell}$.
This decomposition by phantom Pareto systems is shown in Figure~\ref{fig:Fig3} for the case where $d=2$.

\begin{figure}
    \centering
     \includegraphics[width=0.7\textwidth]{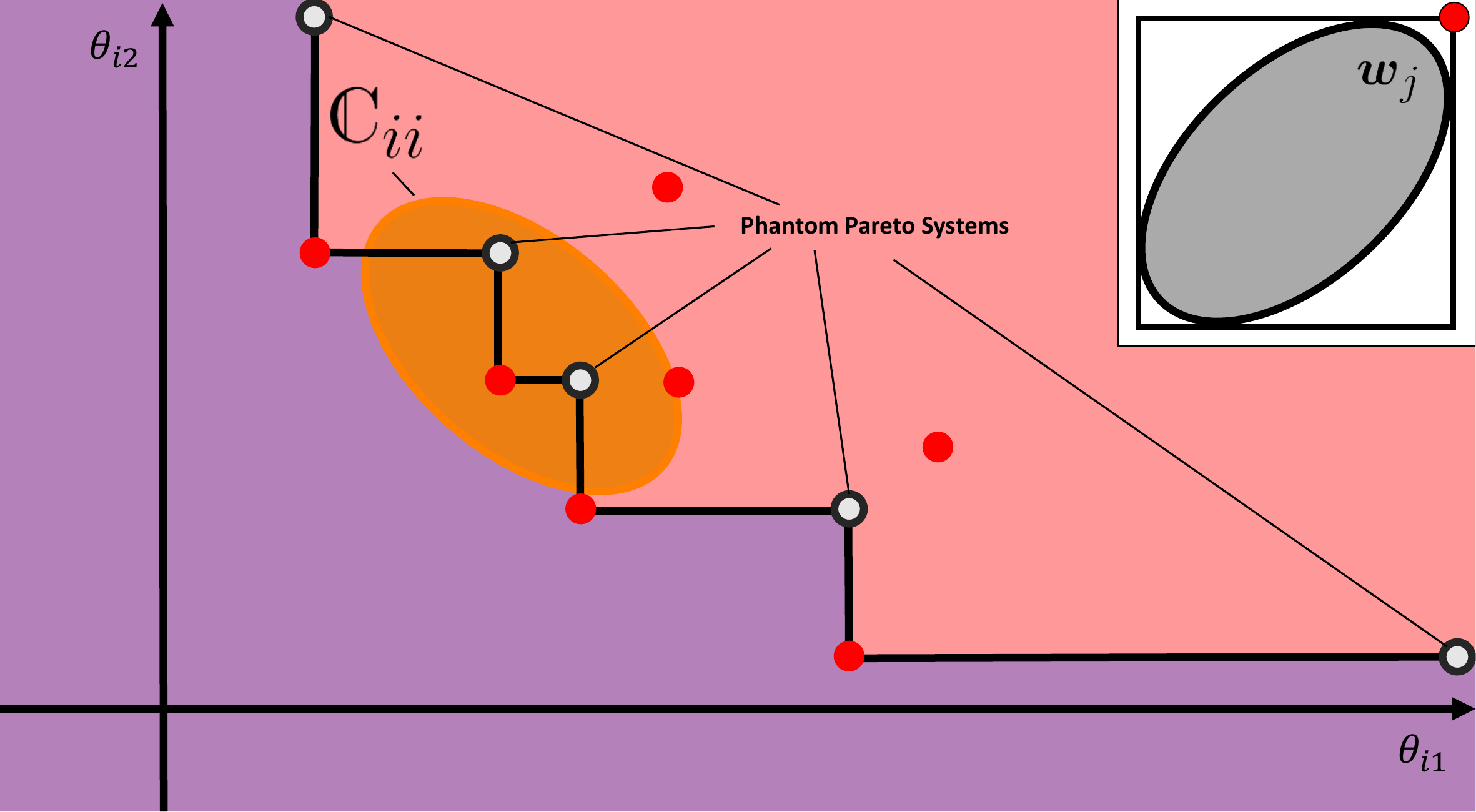}  
    
    \caption{Screening System $i$ for Pareto optimality using the FOSSA procedure with ellipsoid confidence regions. The vectors $\boldsymbol{w}_{j}$ are the dominated corners of the smallest rectangles containing the confidence regions $\mathds{C}_{uj}$ for $j\neq i$ and are shown as red dots. 
    The region $\cup_{j\neq i} \mathds{W}(\boldsymbol{w}_{j})$ is shaded a light red. The complementary region, shaded purple, can be decomposed in terms of the phantom Pareto systems, represented as grey dots. (The $\theta_{i1}$ coordinate value of the lower-right phantom Pareto system and the $\theta_{i2}$ coordinate value of the upper-left phantom Pareto system are both infinity.) Because $\mathds{C}_{ii}$ (shown as an orange ellipsoid) overlaps $(\cup_{j\neq i} \mathds{W}(\boldsymbol{w}_{j}))^{c}$ (the purple region), System $i$ will be returned.
    }
    \label{fig:Fig3}
\end{figure}

Algorithm~\ref{alg:pareto-ellipse} summarizes the FOSSA Ellipsoid procedure for Pareto optimality, which delivers the set-wise PAS guarantee.
In Line 5 of Algorithm~\ref{alg:pareto-ellipse}, the projections of the ellipsoid $\mathds{C}_{uj}^{\mathrm{ell}}$ on each dimension are used to determine the components of the vector $\boldsymbol{w}_{j}$.

\begin{algorithm}[tb]
\caption{FOSSA Ellipsoid for Pareto Optimality}
\label{alg:pareto-ellipse}
\begin{algorithmic}[1]
\Require $\widehat{\boldsymbol{\mu}}_{i},\widehat{\boldsymbol{\Sigma}}_{i},n_i$ for $i=1,2,\dots,k$, and $\Lambda^{\mathrm{ell}}$
\State $\mathcal{S}\gets\{1,2,\dots,k\}$
\For{$i=1,2,\dots,k$}
    \For{$r=1,2,\dots,d$}
        \State $w_{ir}\gets\widehat{\mu}_{ir}+\sqrt{\frac{d(n_i-1)}{n_i(n_i-d)}\Lambda^{\mathrm{ell}}\widehat{\Sigma}_{i}(r,r)}$
    \EndFor
\EndFor
\For{$i=1,2,\dots,k$}
    \If{$\mathds{C}_{ui}^{\mathrm{ell}}\cap\left(\displaystyle\bigcup_{j\neq i}\mathds{W}(\boldsymbol{w}_{j})\right)^c=\emptyset$}
        \State $\mathcal{S}\gets\mathcal{S}\setminus\{i\}$
        \State \textbf{break}
    \EndIf
\EndFor
\State \Return $\mathcal{S}$
\end{algorithmic}
\end{algorithm}

\section{Proofs of Theoretical Results}
\label{sec:proofs}

\begin{lemma}
    \label{lemma:more_fav}
    For any definition of acceptability and any set of non-empty confidence regions $\mathds{C}_{i1}, \mathds{C}_{i2}, \ldots, \mathds{C}_{ik}$ collectively satisfying Assumptions~\ref{assump: Acceptability}--\ref{assump: closed & bounded},
    \begin{enumerate}[(i)]
        \item $\mathds{B}^{*}(\mathds{C}_{ii})\neq\emptyset$, and for any $\boldsymbol{m} \in \mathds{C}_{ii}\setminus \mathds{B}^{*}(\mathds{C}_{ii})$, there exists some $\underline{\boldsymbol{m}} \in \mathds{B}^{*}(\mathds{C}_{ii})$ such that $\underline{\boldsymbol{m}} \prec \boldsymbol{m}$; and
        \item for any $j \neq i$, $\mathds{W}^{*}(\mathds{C}_{ij}) \neq \emptyset$, and for any $\boldsymbol{m}\in\mathds{C}_{ij}\setminus\mathds{W}^{*}(\mathds{C}_{ij})$, there exists some $\overline{\boldsymbol{m}} \in \mathds{W}^{*}(\mathds{C}_{ij})$ such that $\overline{\boldsymbol{m}} \succ \boldsymbol{m}$.
    \end{enumerate} 
\end{lemma}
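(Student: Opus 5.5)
Part (ii) follows from part (i) by exchanging the roles of $\mathds{B}$ and $\mathds{W}$ (reversing the preference relation), so the plan concentrates on (i). The natural tool is Assumption~\ref{assump: existence}, applied not to $\mathds{C}_{ii}$ itself (which need not be bounded; e.g.\ a half-box) but to the sets $\mathds{B}(\boldsymbol{m})\cap\mathds{C}_{ii}$, which are closed and bounded by Assumption~\ref{assump: closed & bounded}.

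\textbf{Step 1 (non-emptiness).} Since $\mathds{C}_{ii}$ is non-empty, pick any $\boldsymbol{m}\in\mathds{C}_{ii}$ and let $\mathbb{S}=\mathds{B}(\boldsymbol{m})\cap\mathds{C}_{ii}$. Then $\mathbb{S}$ is closed and bounded and contains $\boldsymbol{m}$, so Assumption~\ref{assump: existence} gives some $\underline{\boldsymbol{m}}\in\mathds{B}^{*}(\mathbb{S})$, i.e., $\mathds{B}(\underline{\boldsymbol{m}})\cap\mathbb{S}=\{\underline{\boldsymbol{m}}\}$.

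\textbf{Step 2 (minimality transfers to $\mathds{C}_{ii}$).} Show $\underline{\boldsymbol{m}}\in\mathds{B}^{*}(\mathds{C}_{ii})$. Suppose $\boldsymbol{m}'\in\mathds{B}(\underline{\boldsymbol{m}})\cap\mathds{C}_{ii}$, so $\boldsymbol{m}'\preccurlyeq\underline{\boldsymbol{m}}$. Since $\underline{\boldsymbol{m}}\in\mathbb{S}$, we have $\underline{\boldsymbol{m}}\preccurlyeq\boldsymbol{m}$. By transitivity of $\prec$ (together with the trivial cases involving equality), $\boldsymbol{m}'\preccurlyeq\boldsymbol{m}$. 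Hence $\boldsymbol{m}'\in\mathbb{S}\cap\mathds{B}(\underline{\boldsymbol{m}})=\{\underline{\boldsymbol{m}}\}$, and therefore $\mathds{B}(\underline{\boldsymbol{m}})\cap\mathds{C}_{ii}=\{\underline{\boldsymbol{m}}\}$. This proves $\mathds{B}^{*}(\mathds{C}_{ii})\neq\emptyset$.

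\textbf{Step 3 (strict domination).} Now take $\boldsymbol{m}\in\mathds{C}_{ii}\setminus\mathds{B}^{*}(\mathds{C}_{ii})$ and construct $\underline{\boldsymbol{m}}$ as in Steps 1--2. We have $\underline{\boldsymbol{m}}\preccurlyeq\boldsymbol{m}$, and $\underline{\boldsymbol{m}}\neq\boldsymbol{m}$ because $\underline{\boldsymbol{m}}\in\mathds{B}^{*}(\mathds{C}_{ii})$ while $\boldsymbol{m}\notin\mathds{B}^{*}(\mathds{C}_{ii})$. By the definition of $\preccurlyeq$, this gives $\underline{\boldsymbol{m}}\prec\boldsymbol{m}$.

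\textbf{Part (ii).} Repeat the argument with $\mathbb{S}=\mathds{W}(\boldsymbol{m})\cap\mathds{C}_{ij}$, using the $\mathds{W}$ clause of Assumption~\ref{assump: closed & bounded} and $\mathds{W}^{*}$ from Assumption~\ref{assump: existence}.

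\textbf{Main obstacle.} The delicate point is the transitivity in Step 2. Assumption~\ref{assump: Acceptability} states transitivity only for the strict relation $\prec$, so one must check that $\preccurlyeq$ is also transitive by casework on equalities. This casework is easy: if either relation in the chain is an equality, the conclusion is immediate, and otherwise transitivity of $\prec$ applies directly. No other nontrivial issue arises, since feasibility plays no role in this lemma.
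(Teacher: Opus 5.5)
Your proof is correct and follows essentially the same route as the paper. Like the paper, you apply Assumption~\ref{assump: existence} to the closed, bounded slice $\mathds{B}(\boldsymbol{m})\cap\mathds{C}_{ii}$ supplied by Assumption~\ref{assump: closed & bounded}, then use transitivity to show the resulting minimal element is minimal in all of $\mathds{C}_{ii}$, with part (ii) following by symmetry. The only difference is the order of steps: the paper first extracts $\underline{\boldsymbol{m}}\neq\boldsymbol{m}$ and then uses strict transitivity to get $\mathds{B}(\underline{\boldsymbol{m}})\subseteq\mathds{B}(\boldsymbol{m})$, whereas you transfer minimality first via transitivity of $\preccurlyeq$ and obtain $\underline{\boldsymbol{m}}\neq\boldsymbol{m}$ for free, which is slightly cleaner.
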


\noindent {\bf Proof of Lemma~\ref{lemma:more_fav}:} To prove (i), fix an arbitrary $\boldsymbol{m} \in \mathds{C}_{ii}$. If $\boldsymbol{m} \in \mathds{B}^{*}(\mathds{C}_{ii})$, then $\mathds{B}^{*}(\mathds{C}_{ii})$ is non-empty. If instead $\boldsymbol{m}\notin \mathds{B}^{*}(\mathds{C}_{ii})$, then by the definition of $ \mathds{B}^{*}(\mathds{C}_{ii})$, $\mathds{B}(\boldsymbol{m})\cap \mathds{C}_{ii}\neq\{\boldsymbol{m}\}$. 
Also, $\mathds{B}(\boldsymbol{m})\cap \mathds{C}_{ii}\neq\emptyset$ because $\boldsymbol{m} \in \mathds{B}(\boldsymbol{m})$ and $\boldsymbol{m}\in\mathds{C}_{ii}$.
By Assumption \ref{assump: closed & bounded}, $\mathds{B}(\boldsymbol{m})\cap \mathds{C}_{ii} $ is a closed bounded set. 
Thus, by Assumption \ref{assump: existence}, $\mathds{B}^{*}(\mathds{B}(\boldsymbol{m})\cap \mathds{C}_{ii})\neq \emptyset$. 

Meanwhile, the event $\{\mathds{B}^{*}(\mathds{B}(\boldsymbol{m})\cap \mathds{C}_{ii})=\{\boldsymbol{m}\}\}$ is equivalent to the event $\{\boldsymbol{m}\in \mathds{B}^{*}(\mathds{C}_{ii})\}$, which contradicts the fact that $\boldsymbol{m} \notin \mathds{B}^{*}(\mathds{C}_{ii})$. 
So, there must exist some element $\underline{\boldsymbol{m}} \in \mathds{B}^{*}(\mathds{B}(\boldsymbol{m})\cap \mathds{C}_{ii})$ such that $\underline{\boldsymbol{m}} \neq \boldsymbol{m}$.
Because $\underline{\boldsymbol{m}} \in \mathds{B}(\boldsymbol{m})$ and $\underline{\boldsymbol{m}} \neq \boldsymbol{m}$, it follows that $\underline{\boldsymbol{m}} \prec \boldsymbol{m}$. Then by the transitive property in Assumption~\ref{assump: Acceptability}, $\mathds{B}(\underline{\boldsymbol{m}}) \subseteq \mathds{B}(\boldsymbol{m})$. This then implies that $\mathds{B}(\underline{\boldsymbol{m}})\cap \mathds{C}_{ii}=\mathds{B}(\underline{\boldsymbol{m}})\cap \mathds{B}(\boldsymbol{m}) \cap\mathds{C}_{ii}=\{\underline{\boldsymbol{m}}\}$
where the last equality comes from the fact that $\underline{\boldsymbol{m}} \in \mathds{B}^{*}(\mathds{B}(\boldsymbol{m})\cap \mathds{C}_{ii})$.
Thus, $\underline{\boldsymbol{m}}\in \mathds{B}(\underline{\boldsymbol{m}}) \cap \mathds{C}_{ii}$, and by the definition of $\mathds{B}^{*}(\mathds{C}_{ii})$, $ \underline{\boldsymbol{m}}\in \mathds{B}^{*}(\mathds{C}_{ii})$.

The proof for (ii) is similar. $\square$

\bigskip

\noindent {\bf Proof of Proposition~\ref{Proposition: Best and Worst }:}
Fix an abitrary $i \in \{1, 2, \ldots, k\}$. For notational convenience, we denote $\mathds{B}^{*}(\mathds{C}_{ii})$ by $\mathbb{M}^{*}_{ii}$ and $\mathds{W}^{*}(\mathds{C}_{ij})$ by $\mathbb{M}^{*}_{ij}$ for all $j\neq i$, hence,
$$\bigtimes\limits_{j=1}^{k}\mathbb{M}^{*}_{ij} \equiv \bigtimes\limits_{j=1}^{i-1}\mathds{W}^{*}(\mathds{C}_{ij})\times \mathds{B}^{*}(\mathds{C}_{ii})\times \bigtimes\limits_{j=i+1}^{k}\mathds{W}^{*}(\mathds{C}_{ij}).$$

When $\mathds{B}^{*}(\mathds{C}_{ii})\cap\mathds{F}\neq\emptyset$, it follows directly from the definitions of $\mathds{B}^{*}(\mathds{C}_{ii})$ and $\mathds{W}^{*}(\mathds{C}_{ij})$ that $\bigtimes\limits_{j=1}^{k}\mathbb{M}^{*}_{ij}\subseteq\mathbb{M}^{*}_{i}$. 

To show that $\mathbb{M}^{*}_{i}\subseteq\bigtimes\limits_{j=1}^{k}\mathbb{M}^{*}_{ij}$, we show that $\mathbb{M}^{*}_{i}\setminus\bigtimes\limits_{j=1}^{k}\mathbb{M}^{*}_{ij}=\emptyset$.
Suppose to the contrary that there exists some $\mathbf{M}\in\mathbb{M}^{*}_{i}\setminus\bigtimes\limits_{j=1}^{k}\mathbb{M}^{*}_{ij}$.
Then the set of indices $\mathcal{K}=\{\kappa\in\{1,2,\dots,k\}\ \mathrm{s.t.}\ \boldsymbol{m}_{\kappa}\notin \mathbb{M}^{*}_{i\kappa}\}$ must be non-empty. 
By Lemma~\ref{lemma:more_fav}, for each $\kappa \in \mathcal{K}$, if $\kappa = i$, then there exists a vector $\underline{\boldsymbol{m}}_{i} \in \mathbb{M}^*_{ii}$ such that $\underline{\boldsymbol{m}}_{i} \prec \boldsymbol{m}_{i}$, whereas if $\kappa = j \neq i$, then there exists $\overline{\boldsymbol{m}}_{j} \in \mathbb{M}^*_{ij}$ such that $\overline{\boldsymbol{m}}_{j} \succ \boldsymbol{m}_{j}$.
Define the matrix
\[ \mathbf{M}' = (\boldsymbol{m}'_{1}, \boldsymbol{m}'_{2}, \ldots, \boldsymbol{m}'_{k}) \quad \text{where} \quad \boldsymbol{m}'_{\kappa} = \begin{cases} m_{\kappa} & \text{if } \kappa \notin \mathcal{K}, \\ \overline{\boldsymbol{m}}_{i} & \text{if } \kappa \in \mathcal{K} \text{ and } \kappa = i, \\ \underline{\boldsymbol{m}}_{j} & \text{if } \kappa \in \mathcal{K} \text{ and } \kappa = j \neq i. \end{cases} \]
The matrix $\mathbf{M}'$ resembles $\mathbf{M}$, but the rows corresponding to indices in $\mathcal{K}$ have been replaced with other vectors.
By construction, $\mathbf{M}'\in \mathds{C}_{i}\setminus\{\mathbf{M}\} \ $ and $\boldsymbol{m}'_{i} \preccurlyeq  \boldsymbol{m}_{i}$ and $\boldsymbol{m}'_{j} \succcurlyeq \boldsymbol{m}_{j} \mathrm{\ for\ all\ }j\neq i$. Therefore, $\mathbf{M}\notin\mathbb{M}^{*}_{i}$, a contradiction.

Combining the two implications, we have shown that $\mathbb{M}^{*}_{i} =\bigtimes\limits_{j=1}^{k}\mathbb{M}^{*}_{ij}$. $\square$

\bigskip

\noindent {\bf Proof of Theorem~\ref{theorem: FOSSA}:}
Because $\mathbb{M}^{*}_{i}\subseteq\mathds{C}_{i}$, $\mathds{A}_{i} \cap \mathbb{M}^{*}_{i}\neq\emptyset$ implies $\mathds{A}_{i} \cap \mathds{C}_{i}\neq\emptyset$.
We prove the converse implication by contradiction.
Suppose that $\mathds{A}_{i}\cap\mathds{C}_{i}\neq\emptyset$, but $\mathds{A}_{i} \cap \mathbb{M}^{*}_{i}=\emptyset$.
Hence, there exists some $\mathbf{M}\in\mathds{A}_{i}\cap\mathds{C}_{i}\ \mathrm{s.t.}\ \mathbf{M}\notin\mathbb{M}^{*}_{i}$.

We consider two cases. Suppose $\mathds{B}^{*}(\mathds{C}_{ii})\cap\mathds{F}=\emptyset$. By the definition of $\mathds{B}^{*}(\mathds{C}_{ii})$, there is no configuration in $\mathds{C}_{i}$ for which the response vector of System $i$ is feasible. From the definition of acceptability given in Assumption~\ref{assump: Acceptability}, $\mathds{A}_{i}\cap\mathds{C}_{i} = \emptyset$, a contradiction.

On the other hand, suppose $\mathds{B}^{*}(\mathds{C}_{ii})\cap\mathds{F}\neq\emptyset$.
Because $\mathbf{M}\in\mathds{C}_{i}\setminus\mathbb{M}^{*}_{i}$, by Proposition \ref{Proposition: Best and Worst }, there exists a non-empty set of indices $\mathcal{K}=\{\kappa\in\{1,2,\dots,k\}\ \mathrm{s.t.}\ \boldsymbol{m}_{\kappa}\notin \mathbb{M}^{*}_{i\kappa}\}$. By passing over $\kappa \in \mathcal{K}$ and replacing the $\kappa$th row of $\mathbf{M}$ with arbitrary elements in $\mathbb{M}^{*}_{i\kappa}$, we can construct a $\mathbf{M}'\in \mathds{C}_{i}\setminus\{\mathbf{M}\} \ $ such that $\boldsymbol{m}'_{i} \preccurlyeq  \boldsymbol{m}_{i}$ and $\boldsymbol{m}'_{j} \succcurlyeq \boldsymbol{m}_{j}$ for all $j\neq i$.
By construction, $\mathbf{M}' \in \mathbb{M}^*_i$.

Furthermore, by Assumption~\ref{assump: Acceptability}, $\mathbf{M}\in\mathds{A}_{i}$ implies that $\boldsymbol{m}_{i}\in\mathds{F}$ and $\boldsymbol{m}_{j}\nprec \boldsymbol{m}_{i}$ for all $j\neq i$. 
We next prove by contradiction that $\boldsymbol{m}_{j}'\nprec \boldsymbol{m}_{i}'$ for all $j\neq i$.
Suppose there exists an index $g \neq i$ such that $\boldsymbol{m}_{g}'\prec \boldsymbol{m}_{i}'$.
Because $\boldsymbol{m}'_{i} \preccurlyeq  \boldsymbol{m}_{i} \mathrm{\ and\ } \ \boldsymbol{m}'_{j} \succcurlyeq \boldsymbol{m}_{j} \mathrm{\ for\ all\ }j\neq i$, it follows that $\boldsymbol{m}_{g}\preccurlyeq\boldsymbol{m}_{g}'\prec \boldsymbol{m}_{i}'\preccurlyeq\boldsymbol{m}_{i}$.
Thus, by the transitive property of Assumption~\ref{assump: Acceptability}, $\boldsymbol{m}_{g}\prec\boldsymbol{m}_{i}$, which contradicts the fact that $\boldsymbol{m}_{j}\nprec \boldsymbol{m}_{i}$ for all $j\neq i$. Therefore $\boldsymbol{m}'_{j} \nprec \boldsymbol{m}'_{i}$ for all $j \neq i$.
In addition, $\boldsymbol{m}_{i}\in\mathds{F}$ and $\boldsymbol{m}'_{i} \preccurlyeq  \boldsymbol{m}_{i}$ together imply that $\boldsymbol{m}'_{i} \in \mathds{F}$ and $\mathbf{M}'\in \mathds{A}_{i}$, and thus $\mathbf{M}'\in \mathds{A}_{i}\cap \mathbb{M}^{*}_{i}$, a contradiction.
$\square$

\bigskip
Before proving Proposition~\ref{Proposition for Constrained Optimality}, we first prove a supporting lemma.
\begin{lemma}
\label{lemma: Opt is bounded}
For the definition of acceptability as constrained optimality,
\begin{enumerate}[(i)]
\item if $\mathds{C}_{ii}\cap\mathds{F}\neq\emptyset$, the optimization problem $\min m_{1}\mathrm{\ s.t.\ }\boldsymbol{m}\in\mathds{C}_{ii}\cap\mathds{F}$ is bounded;
\item if $\mathds{C}_{ij}\cap\mathds{F}^{c}=\emptyset$, the optimization problem $\max m_{1}\mathrm{\ s.t.\ }\boldsymbol{m}\in\mathds{C}_{ij}$ is bounded for all $j\neq i$.
\end{enumerate}
\end{lemma}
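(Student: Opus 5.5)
Part (i) and part (ii) will both be proved by contradiction, using Assumption~\ref{assump: closed & bounded} to rule out an unbounded objective. For constrained optimality the preference relation of Assumption~\ref{assump: Acceptability} works as follows. Among feasible vectors, $\boldsymbol{m}'\prec\boldsymbol{m}$ exactly when $m'_1<m_1$, and every feasible vector is preferred to every infeasible one. Hence for $\boldsymbol{m}\in\mathds{F}$, the set $\mathds{B}(\boldsymbol{m})$ is $\{\boldsymbol{m}\}\cup\{\boldsymbol{m}'\in\mathds{F}\colon m'_1<m_1\}$.

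For (i), suppose $\mathds{C}_{ii}\cap\mathds{F}\neq\emptyset$ and the problem $\min m_1$ over $\mathds{C}_{ii}\cap\mathds{F}$ is unbounded below. Fix some $\boldsymbol{m}\in\mathds{C}_{ii}\cap\mathds{F}$. Unboundedness yields a sequence $\boldsymbol{m}^{(n)}\in\mathds{C}_{ii}\cap\mathds{F}$ with $m^{(n)}_1\to-\infty$. All but finitely many of these satisfy $m^{(n)}_1<m_1$, so they lie in $\mathds{B}(\boldsymbol{m})\cap\mathds{C}_{ii}$. Since their first coordinates diverge, this set is unbounded. That contradicts Assumption~\ref{assump: closed & bounded}, which requires $\mathds{B}(\boldsymbol{m})\cap\mathds{C}_{ii}$ to be closed and bounded. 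So the infimum is finite. I would also note that the minimum is attained: the set $\mathds{B}(\boldsymbol{m})\cap\mathds{C}_{ii}$ is compact, and it contains every point with objective value at most $m_1$. This attainment is what the surrounding text implicitly uses when it writes $\min$.

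For (ii), suppose $\mathds{C}_{ij}\cap\mathds{F}^c=\emptyset$, so that $\mathds{C}_{ij}\subseteq\mathds{F}$. Also suppose $\max m_1$ over $\mathds{C}_{ij}$ is unbounded above. Fix $\boldsymbol{m}\in\mathds{C}_{ij}$, which is possible because the regions are non-empty. Every point of $\mathds{C}_{ij}$ is feasible, so every $\boldsymbol{m}'\in\mathds{C}_{ij}$ with $m'_1>m_1$ satisfies $\boldsymbol{m}'\succ\boldsymbol{m}$ and therefore lies in $\mathds{W}(\boldsymbol{m})$. A sequence with $m'_1\to\infty$ then makes $\mathds{W}(\boldsymbol{m})\cap\mathds{C}_{ij}$ unbounded, which again contradicts Assumption~\ref{assump: closed & bounded}.

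The main point needing care is the identification of $\mathds{B}(\boldsymbol{m})$ and $\mathds{W}(\boldsymbol{m})$ in the constrained-optimality setting. In particular, one must handle ties in $m_1$ among feasible vectors: these are neither preferred nor equal, so they fall outside $\mathds{B}(\boldsymbol{m})$. This is harmless, because only strictly smaller (or strictly larger) objective values are needed to produce the unbounded sequences. In (ii), infeasible vectors would lie in $\mathds{W}(\boldsymbol{m})$ with no control over $m_1$. The hypothesis $\mathds{C}_{ij}\cap\mathds{F}^c=\emptyset$ is used exactly to exclude them, and that is why the case split in the definition of $\tau_{ij}$ is needed.
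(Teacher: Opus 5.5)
Your proof is correct and is essentially the paper's argument. Both rest on Assumption~\ref{assump: closed & bounded} applied to $\mathds{B}(\boldsymbol{m})\cap\mathds{C}_{ii}$ (resp.\ $\mathds{W}(\boldsymbol{m})\cap\mathds{C}_{ij}$), a set that contains every feasible point with strictly smaller (resp.\ larger) primary response; the paper minimizes directly over this compact set, while you argue by contradiction with a divergent sequence. Your handling of ties is more careful than the paper's, which asserts $m'_1>m^a_1$ outside $\mathds{B}(\boldsymbol{m}^a)$. One small fix: your attainment remark should say the set contains only points with value strictly below $m_1$, plus $\boldsymbol{m}$ itself. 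This still suffices, since every point outside it has value at least $m_1$.
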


\noindent{\bf Proof of Lemma~\ref{lemma: Opt is bounded}:}
Proof of (i): For an arbitrary $\boldsymbol{m}^{a}\in \mathds{C}_{ii}\cap\mathds{F}$, $\mathds{B}(\boldsymbol{m}^{a})\cap\mathds{C}_{ii}$ is closed and bounded, by Assumption (\ref{assump: closed & bounded}). 
Therefore, $\mathds{B}(\boldsymbol{m}^{a})\cap\mathds{C}_{ii}\cap\mathds{F}$ is also closed and bounded.
Hence, $\min m_{1}\mathrm{\ s.t.\ }\boldsymbol{m}\in\mathds{B}({\boldsymbol{m}^{a}})\cap\mathds{C}_{ii}\cap\mathds{F}$ is bounded, because the objective function is continuous and the feasible region is compact.
Denote the optimizer of $\min m_{1}\mathrm{\ s.t.\ }\boldsymbol{m}\in\mathds{B}({\boldsymbol{m}^{a}})\cap\mathds{C}_{ii}\cap\mathds{F}$ as $\boldsymbol{m}^{*-}$ and its objective function value as $m^{*-}_{1}$.
Since $\boldsymbol{m}^{a}\in \mathds{F}$, for any $\boldsymbol{m}'\in\mathds{B}({\boldsymbol{m}^{a}})^{c}\cap\mathds{F}$, $m'_{1} > m^{*-}_{1}$.
To be specific, $\boldsymbol{m}'\in\mathds{F}$ but $\boldsymbol{m}' \notin \mathds{B}({\boldsymbol{m}^{a}})$, which indicates that $m'_{1} > {m}^{a}_{1}$ and, therefore, $m'_{1} > m^{*-}_{1}$.
Because the choice of ${\boldsymbol{m}^{a}}$ is arbitrary, we can conclude that $\boldsymbol{m}^{*-}$ is also the optimizer of the optimization problem  $\min m_{1}\mathrm{\ s.t.\ }\boldsymbol{m}\in\mathds{C}_{ii}\cap\mathds{F}$.

Proof of (ii): The event $\mathds{C}_{ij}\cap\mathds{F}^{c}=\emptyset$ implies that $\mathds{C}_{ij}\subseteq\mathds{F}$.
For an arbitrary $\boldsymbol{m}^{b}\in \mathds{C}_{ij}$, $\mathds{W}(\boldsymbol{m}^{b})\cap\mathds{C}_{ij}$ is closed and bounded, by Assumption \ref{assump: closed & bounded}. 
Hence, $\min m_{1}\mathrm{\ s.t.\ }\boldsymbol{m}\in\mathds{W}({\boldsymbol{m}^{b}})\cap\mathds{C}_{ij}$ is bounded, because the objective function is continuous and the feasible region is compact.
Denote the optimizer of $\max m_{1}\mathrm{\ s.t.\ }\boldsymbol{m}\in\mathds{W}({\boldsymbol{m}^{b}})\cap\mathds{C}_{ij}$ as $\boldsymbol{m}^{*+}$ and its objective value as $m^{*+}_{1}$.
For any vector $\boldsymbol{m}''$ in $\mathds{W}({\boldsymbol{m}^{b}})^{c}\cap\mathds{C}_{ij}$, it follows that $\boldsymbol{m}''\in\mathds{C}_{ij}\subseteq\mathds{F}$.
From the definition of acceptability, the primary response for any vector in $\mathds{W}({\boldsymbol{m}^{b}})^{c}\cap\mathds{F}$ must be less than ${m}^{b}_{1}$.
Because both $\boldsymbol{m}''$ and $\boldsymbol{m}^{b}$ are in $\mathds{F}$, $m''_{1}$ must be less than ${m}^{b}_{1}$ for any $\boldsymbol{m}''$ in $\mathds{W}({\boldsymbol{m}^{b}})^{c}\cap\mathds{C}_{ij}$.
Because the choice of ${\boldsymbol{m}^{b}}$ is arbitrary, we can conclude that $\boldsymbol{m}^{*+}$ is also the optimizer of the optimization problem  $\max m_{1}\mathrm{\ s.t.\ }\boldsymbol{m}\in\mathds{C}_{ij}$.
$\square$

\bigskip
\noindent{\bf Proof of Proposition~\ref{Proposition for Constrained Optimality}:} 
Proving Proposition~\ref{Proposition for Constrained Optimality} is equivalent to proving that for all $i=1,2,\ldots,k$, $\mathds{A}_{i}\cap\bigtimes\limits_{j=1}^{k}\mathbb{M}^{*}_{ij}\neq\emptyset$ if and only if $ \tau_{ii}\neq\infty \text{ and } \tau_{ii}\leq \tau_{ij} \ \mathrm{for\ all\ } j\neq i$.

We first prove the forward implication.
Because $\mathds{A}_{i}\cap\bigtimes\limits_{j=1}^{k}\mathbb{M}^{*}_{ij}\neq\emptyset$, there exists an $\mathbf{M}\in\mathds{A}_{i}\cap\bigtimes\limits_{j=1}^{k}\mathds{C}_{ij}$ such that $\boldsymbol{m}_{i}\in\mathds{F}$ and $m_{i1}\leq m_{j1}$ for all $j \text{ s.t. }\boldsymbol{m}_{j}\in \mathds{F}$, by the definition of $\mathds{A}_{i}$ and the fact that $\bigtimes\limits_{j=1}^{k}\mathbb{M}^{*}_{ij}\subseteq\bigtimes\limits_{j=1}^{k}\mathds{C}_{ij}$.
Thus, $\boldsymbol{m}_{j}\in\mathds{C}_{ij}$ for all $j=1,2,\dots,k$. 
Combined with the fact that $\boldsymbol{m}_{i}\in\mathds{F}$, we know that $\mathds{C}_{ii}\cap\mathds{F}\neq\emptyset$, thus $\tau_{ii}\neq\infty$ by the definition of $\tau_{ii}$ and Lemma \ref{lemma: Opt is bounded}.
Then because $\mathds{C}_{ii}\cap\mathds{F}\neq\emptyset$, $\tau_{ii}=\min m_{1}\ \mathrm{s.t.}\ \boldsymbol{m}\in \mathds{C}_{ii}\cap\mathds{F}$ and $\boldsymbol{m}_{i}$ is in the feasible region of this optimization problem, we can conclude that $\tau_{ii}\leq m_{i1}$.
Divide the remaining systems into two disjoint classes based on their feasibility: define $\mathcal{J}_{1}=\{j=1,2,\dots,k\colon j\neq i \text{ and }\boldsymbol{m}_{j}\notin\mathds{F}\}$ and $\mathcal{J}_{2}=\{ j=1,2,\dots,k\colon j\neq i \text{ and }\boldsymbol{m}_{j}\in\mathds{F}\}$. We now prove that $\tau_{ij}\geq\tau_{ii}$ for all $j\in\mathcal{J}_{1}\cup\mathcal{J}_{2}$.
For all $j\in\mathcal{J}_{1}$, $\boldsymbol{m}_{j}\in\mathds{C}_{ij}\cap\mathds{F}^{c}$ indicates that $\mathds{C}_{ij}\cap\mathds{F}^{c}\neq\emptyset$, therefore $\tau_{ij}=\infty\geq\tau_{ii}$ by definition.
For all $j\in\mathcal{J}_{2}$, $\boldsymbol{m}_{j}\in\mathds{C}_{ij}\cap\mathds{F}$ indicates that $\mathds{C}_{ij}\cap\mathds{F}\neq\emptyset$.
Therefore, $\tau_{ij}=\max m_{1}\ \mathrm{s.t.}\ \boldsymbol{m}\in\mathds{C}_{ij}$ and $\tau_{ij}\geq m_{j1}$, because $\boldsymbol{m}_{j}\in\mathds{C}_{ij}$.
Thus, for all $j\in\mathcal{J}_{2}$, $\tau_{ij}\geq m_{j1}\geq m_{i1} \geq\tau_{ii}$.
We have thus shown that $\tau_{ii}\neq\infty$ and $\tau_{ij}\geq\tau_{ii}$ for all $j\neq i$.

We next prove the reverse implication.
By the definition of $\tau_{ii}$ and Lemma \ref{lemma: Opt is bounded}, $\tau_{ii}=\infty$ if and only if $\mathds{C}_{ii}\cap\mathds{F}=\emptyset$.
Hence $\tau_{ii}\neq\infty$ indicates that $\mathds{C}_{ii}\cap\mathds{F}\neq\emptyset$, and there exists an $\boldsymbol{m}'_{i}\in\mathds{C}_{ii}\cap\mathds{F}$ with $m'_{i1}=\tau_{ii}$.
Again divide the systems other than System $i$ into two disjoint classes: $\mathcal{J}_{3}=\{j=1,2,\dots,k\colon j\neq i \text{ and }\tau_{ij}\neq\infty\}$ and   $\mathcal{J}_{4}=\{ j=1,2,\dots,k\colon j\neq i \text{ and }\tau_{ij}=\infty\}$.
For $j\in\mathcal{J}_{3}$, $\tau_{ij}\neq\infty$ implies that $\mathds{C}_{ij}\subseteq\mathds{F}$ and that there exists an $\boldsymbol{m}'_{j} \in \mathds{C}_{ij} \subseteq \mathds{F}$ for which $m'_{j1}\leq\tau_{ij}$.
For $j\in\mathcal{J}_{4}$, $\tau_{ij}=\infty$ implies that $\mathds{C}_{ij}\cap\mathds{F}^{c}\neq\emptyset$, thus there exists an $\boldsymbol{m}'_{j}\in\mathds{C}_{ij}\cap\mathds{F}^{c}$.
Let $\mathbf{M}'=(\boldsymbol{m}'_{1},\boldsymbol{m}'_{2},\dots,\boldsymbol{m}'_{k})^{\intercal}$. Because  $\boldsymbol{m}'_{i}\in\mathds{C}_{ii}$ and $\boldsymbol{m}'_{j}\in\mathds{C}_{ij}$ for all $j=1,2,\dots,k$, we have that $\mathbf{M}'\in\mathds{A}_{i}\cap\bigtimes\limits_{j=1}^{k}\mathds{C}_{ij}$.
We have thus proven that $\mathds{A}_{i}\cap\bigtimes\limits_{j=1}^{k}\mathds{C}_{ij}\neq\emptyset$, and, by Theorem \ref{theorem: FOSSA}, that $\mathds{A}_{i}\cap\bigtimes\limits_{j=1}^{k}\mathbb{M}^{*}_{ij}\neq\emptyset$.
$\square$

\bigskip

\begin{lemma}
    \label{lem:ellipse_proj}
    For any symmetric positive-definite matrix $\mathbf{A} \in \mathds{R}^{d \times d}$ and scalar $b > 0$, the projection of an ellipsoid defined by $\{\boldsymbol{x} \in \mathds{R}^{d} \colon \boldsymbol{x}^{\intercal}\mathbf{A}\boldsymbol{x}\leq b\}$ onto dimension $r$, $r = 1, 2, \ldots, d$, is $\left[-\sqrt{b a_{r, r}^{-1}}, \sqrt{b a_{r, r}^{-1}} \right]$ where $a_{i,j}$ is the $(i,j)$th element of $\mathbf{A}$.
    
\end{lemma}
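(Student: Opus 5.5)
The claim is that for symmetric positive-definite $\mathbf{A}$ and $b>0$, the coordinate-$r$ projection of the ellipsoid $\mathds{E}=\{\boldsymbol{x}\colon \boldsymbol{x}^{\intercal}\mathbf{A}\boldsymbol{x}\leq b\}$ is the interval $[-\sqrt{b a_{r,r}^{-1}},\sqrt{b a_{r,r}^{-1}}]$. Here $a_{r,r}^{-1}$ must be read as the $(r,r)$ entry of $\mathbf{A}^{-1}$, not the reciprocal of $a_{r,r}$. This reading is the one consistent with its later use: $\mathbf{A}=n_j\widehat{\boldsymbol{\Sigma}}_j^{-1}$ produces the half-width $\sqrt{\cdot\,\widehat{\Sigma}_j(r,r)/n_j}$ that appears in the paper.

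The proof computes $\max x_r = \max \boldsymbol{e}_r^{\intercal}\boldsymbol{x}$ subject to $\boldsymbol{x}\in\mathds{E}$, where $\boldsymbol{e}_r$ is the $r$th standard basis vector.

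\emph{Upper bound.} Since $\mathbf{A}$ is symmetric positive definite, it has a symmetric positive-definite square root $\mathbf{A}^{1/2}$. Set $\boldsymbol{y}=\mathbf{A}^{1/2}\boldsymbol{x}$, so that $\mathds{E}$ maps onto the ball $\|\boldsymbol{y}\|^2\leq b$. The objective becomes
\[
\boldsymbol{e}_r^{\intercal}\boldsymbol{x}=(\mathbf{A}^{-1/2}\boldsymbol{e}_r)^{\intercal}\boldsymbol{y}.
\]
By Cauchy--Schwarz,
\[
(\mathbf{A}^{-1/2}\boldsymbol{e}_r)^{\intercal}\boldsymbol{y}\leq \|\mathbf{A}^{-1/2}\boldsymbol{e}_r\|\sqrt{b}=\sqrt{b\,\boldsymbol{e}_r^{\intercal}\mathbf{A}^{-1}\boldsymbol{e}_r}=\sqrt{b\,a^{-1}_{r,r}}.
\]

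\emph{Attainment.} Take $\boldsymbol{x}^*=\sqrt{b/a^{-1}_{r,r}}\,\mathbf{A}^{-1}\boldsymbol{e}_r$. Then
\[
\boldsymbol{x}^{*\intercal}\mathbf{A}\boldsymbol{x}^*=\frac{b}{a^{-1}_{r,r}}\,\boldsymbol{e}_r^{\intercal}\mathbf{A}^{-1}\boldsymbol{e}_r=b,
\]
so $\boldsymbol{x}^*\in\mathds{E}$, and its $r$th coordinate is $\sqrt{b/a^{-1}_{r,r}}\,a^{-1}_{r,r}=\sqrt{b\,a^{-1}_{r,r}}$. This direction is the Lagrange/KKT stationary point. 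Note that $a^{-1}_{r,r}>0$ because $\mathbf{A}^{-1}$ is positive definite, so the square roots are well defined.

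\emph{Minimum.} $\mathds{E}$ is symmetric under $\boldsymbol{x}\mapsto-\boldsymbol{x}$, so the minimum of $x_r$ is $-\sqrt{b\,a^{-1}_{r,r}}$.

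\emph{Every value in between.} $\mathds{E}$ is convex (a sublevel set of a convex quadratic) and compact, and projection is linear. The projection is therefore a compact interval, and it must equal the interval between the minimum and maximum found above.

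There is no real obstacle. The only point needing care is the notational one: $a^{-1}_{r,r}$ denotes the diagonal entry of $\mathbf{A}^{-1}$, and I will state that explicitly in the proof.
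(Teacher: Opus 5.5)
Your proof is correct. It lands on the same extremizer as the paper, $\boldsymbol{x}^*=\pm\sqrt{b/a^{-1}_{r,r}}\,\mathbf{A}^{-1}\boldsymbol{e}_r$, but certifies its optimality differently.

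\emph{The paper's route.} It treats $\min$ and $\max$ of $\boldsymbol{e}_r^{\intercal}\boldsymbol{x}$ over the ellipsoid as a constrained program. It writes the first-order KKT condition $\boldsymbol{e}_r-2\lambda\mathbf{A}\boldsymbol{x}^*=0$, uses the fact that the constraint must be active to pin down $\lambda$, and reads off the optimal values. This explains where the optimizer comes from. However, as written it invokes only necessary conditions. Concluding optimality therefore implicitly relies on convexity, or on compactness of the ellipsoid plus a constraint qualification, and neither is spelled out. It also only computes the two endpoints, without arguing that every intermediate value is attained.

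\emph{Your route.} The substitution $\boldsymbol{y}=\mathbf{A}^{1/2}\boldsymbol{x}$ plus Cauchy--Schwarz gives a bound that holds for every feasible point, and you then show it is attained. This is self-contained and needs no appeal to KKT sufficiency. Your symmetry step and your convexity/compactness step also establish that the projection is the whole closed interval, which the paper leaves implicit.

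\emph{Notation.} Your reading of $a^{-1}_{r,r}$ as the $(r,r)$ entry of $\mathbf{A}^{-1}$ is exactly how the paper uses it; its proof writes $\boldsymbol{e}_r^{\intercal}\mathbf{A}^{-1}\boldsymbol{e}_r=a^{-1}_{r,r}$. It is also the reading that makes Proposition~\ref{Proposition: ellipsoid projection} produce the half-width involving $\widehat{\Sigma}_i(r,r)$.
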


\noindent {\bf Proof of Lemma~\ref{lem:ellipse_proj}:}
Fix $r \in \{1, 2, \ldots, d\}$ and consider the optimization problem
\begin{equation}
    \label{eqn:ellipse_opt}
    \min_{\boldsymbol{x} \in \mathds{R}^d} \boldsymbol{e}_{r}^{\intercal}\boldsymbol{x}\ \mathrm{s.t.}\ \boldsymbol{x}^{\intercal}\mathbf{A}\boldsymbol{x}\leq b,
\end{equation}
where $\boldsymbol{e}_{r}$ is a unit vector whose $r$th element is $1$.
A first-order KKT condition states that the optimal solution $\boldsymbol{x}^*$ necessarily satisfies $\boldsymbol{e}_{r}-2\lambda \mathbf{A}\boldsymbol{x}^{*}=0$, where $\lambda$ is a scalar. Thus, $\boldsymbol{x}^{*}=(2\lambda)^{-1}\mathbf{A}^{-1}\boldsymbol{e}_{r}$. Since the sole constraint in (\ref{eqn:ellipse_opt}) must be active at optimality, we also have that $\boldsymbol{x}^{*\intercal}\mathbf{A}^{-1}\boldsymbol{x}^{*}= b$. Hence, the optimal solution to (\ref{eqn:ellipse_opt}) is $\boldsymbol{x}^{*}=-\sqrt{b/a^{-1}_{r,r}}\mathbf{A}^{-1}\boldsymbol{e}_{r}$ and the optimal objective function value is $ \boldsymbol{e}_{r}^{\intercal}\boldsymbol{x}^{*}=-\sqrt{b/a^{-1}_{r,r}}\boldsymbol{e}_{r}^{\intercal}\mathbf{A}^{-1}\boldsymbol{e}_{r}=-\sqrt{ba^{-1}_{r,r}}$.

Similarly, the optimal solution and objective function value for
$\max_{\boldsymbol{x} \in \mathds{R}^d} \boldsymbol{e}_{r}^{\intercal}\boldsymbol{x}\ \mathrm{s.t.}\ \boldsymbol{x}^{\intercal}\mathbf{A}\boldsymbol{x}\leq b$
are $\sqrt{b/a^{-1}_{r,r}}\mathbf{A}^{-1}\boldsymbol{e}_{r}$ and $\sqrt{ba^{-1}_{r,r}}$, respectively. $\square$

\bigskip

\noindent {\bf Proof of Proposition~\ref{Proposition: ellipsoid projection}:}
The event $\{\mathds{C}_{ui}^\mathrm{ell}\cap\mathds{F}^{c}=\emptyset\}$ is equivalent to $\{\mathds{C}_{ui}^\mathrm{ell}\subseteq\mathds{F}\}$. 
By Lemma~\ref{lem:ellipse_proj}, the projection of the ellipsoid $\mathds{C}_{ui}^\mathrm{ell}$ onto the $r$th dimension is
\[ \left[\widehat{{\mu}}_{ir}-\sqrt{\frac{d(n_{i}-1)}{n_{i}(n_{i}-d)} \Lambda^\mathrm{ell}\widehat{\Sigma}_{i}(r,r)}, \: \widehat{{\mu}}_{ir}+\sqrt{\frac{d(n_{i}-1)}{n_{i}(n_{i}-d)} \Lambda^\mathrm{ell}\widehat{\Sigma}_{i}(r,r)}\right]. \]
For $\mathds{F} = \{\boldsymbol{m} \in \mathds{R}^d\colon m_r \leq \mu_r^{\dagger} \text{ for } r = 2, 3, \ldots, d\}$, it follows that
$\mathds{C}_{ui}^\mathrm{ell}\subseteq\mathds{F}$ if the Cartesian product of the projection of $\mathds{C}_{ui}^\mathrm{ell}$ on each dimension is contained in $\mathds{F}$, i.e., if 
\[ \widehat{\mu}_{ir} + \sqrt{\frac{d(n_{i}-1)}{n_{i}(n_{i}-d)} \Lambda^\mathrm{ell}\widehat{\Sigma}_{i}(r,r)} \leq \mu_r^{\dagger} \text{ for all } r=2,3,\dots,d. \quad \square \]

Propositions \ref{Proposition: box projection} and \ref{Proposition: half-box projection} give analogous equivalent conditions for checking the plausible feasibility of a System $j$ for the box and half-box confidence regions, respectively; their proofs follow similar arguments to that of Proposition~\ref{Proposition: ellipsoid projection} and are omitted.

\begin{proposition}
\label{Proposition: box projection}
$\mathds{C}_{uj}^\mathrm{box}\cap\mathds{F}^{c}=\emptyset$ if and only if $\widehat{{\mu}}_{jr} +\Lambda^\mathrm{box}\sqrt{\widehat{\Sigma}_{j}(r,r)/n_{j}}\leq\mu^{\dagger}_{r} \text{ for all }r=2,3,\dots,d$.
\end{proposition}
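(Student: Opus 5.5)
We prove both directions of the equivalence by reducing the box geometry to a coordinatewise condition, in parallel with the proof of Proposition~\ref{Proposition: ellipsoid projection}. Write $\mathds{C}_{uj}^\mathrm{box}=\bigtimes_{r=1}^{d}[l_{jr},u_{jr}]$ with $l_{jr}=\widehat{\mu}_{jr}-\Lambda^\mathrm{box}\sqrt{\widehat{\Sigma}_{j}(r,r)/n_{j}}$ and $u_{jr}=\widehat{\mu}_{jr}+\Lambda^\mathrm{box}\sqrt{\widehat{\Sigma}_{j}(r,r)/n_{j}}$. The event $\{\mathds{C}_{uj}^\mathrm{box}\cap\mathds{F}^{c}=\emptyset\}$ is equivalent to $\{\mathds{C}_{uj}^\mathrm{box}\subseteq\mathds{F}\}$. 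Because $\mathds{F}=\{\boldsymbol{m}\colon m_r\leq\mu_r^{\dagger},\ r=2,\dots,d\}$ is itself a product of half-lines (with the whole real line in coordinate $1$), containment of a product set in $\mathds{F}$ is equivalent to containment of each coordinate projection in the corresponding factor.

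\emph{Sufficiency.} Suppose $u_{jr}\leq\mu_r^{\dagger}$ for all $r=2,\dots,d$, and take any $\boldsymbol{m}\in\mathds{C}_{uj}^\mathrm{box}$. Then $m_r\leq u_{jr}\leq\mu_r^{\dagger}$ for each such $r$, so $\boldsymbol{m}\in\mathds{F}$. Hence $\mathds{C}_{uj}^\mathrm{box}\subseteq\mathds{F}$.

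\emph{Necessity.} Argue by contraposition: suppose $u_{js}>\mu_s^{\dagger}$ for some $s\in\{2,\dots,d\}$. Construct the point $\boldsymbol{m}$ with $m_s=u_{js}$ and $m_r=\widehat{\mu}_{jr}$ for $r\neq s$. This point lies in the box, since the box is nonempty with $l_{jr}\leq\widehat{\mu}_{jr}\leq u_{jr}$ whenever $\Lambda^\mathrm{box}\geq0$. Yet $m_s>\mu_s^{\dagger}$, so $\boldsymbol{m}\in\mathds{C}_{uj}^\mathrm{box}\cap\mathds{F}^c$, and the intersection is nonempty.

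The proof has no real obstacle. The step needing the most care is exhibiting a concrete box point in necessity: this uses that the box is a genuine Cartesian product, so coordinates can be chosen independently. That independence is exactly why the test is an ``if and only if'' here, rather than a bound as with general shapes.
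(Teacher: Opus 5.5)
Your proof is correct and follows the route the paper indicates (the paper omits this proof and defers to the ellipsoid argument): rewrite the event as $\mathds{C}_{uj}^\mathrm{box}\subseteq\mathds{F}$ and reduce it to the coordinatewise conditions $u_{jr}\leq\mu_r^{\dagger}$ for $r=2,\dots,d$. Your explicit necessity step is more careful than the paper's ellipsoid template, which spells out only the sufficiency direction; that step uses the product structure of the box and $\Lambda^\mathrm{box}\geq 0$, and the latter holds here because $\beta^\mathrm{box}>1/2$.
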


\begin{proposition}
\label{Proposition: half-box projection}
$\mathds{C}_{ij}^\mathrm{hb}\cap\mathds{F}^{c}=\emptyset$ if and only if $\widehat{{\mu}}_{jr} +\Lambda^\mathrm{hb}\sqrt{\widehat{\Sigma}_{j}(r,r)/n_{j}}\leq\mu^{\dagger}_{r}\text{ for all }r=2,3,\dots,d$.
\end{proposition}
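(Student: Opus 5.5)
The statement to prove is Proposition~\ref{Proposition: half-box projection}: $\mathds{C}_{ij}^\mathrm{hb}\cap\mathds{F}^{c}=\emptyset$ if and only if $\widehat{\mu}_{jr}+\Lambda^\mathrm{hb}\sqrt{\widehat{\Sigma}_{j}(r,r)/n_j}\leq\mu^{\dagger}_{r}$ for all $r=2,3,\dots,d$. Write $u_{jr}=\widehat{\mu}_{jr}+\Lambda^\mathrm{hb}\sqrt{\widehat{\Sigma}_{j}(r,r)/n_j}$. Then $\mathds{C}_{ij}^\mathrm{hb}=\{\boldsymbol{m}\colon m_r\leq u_{jr},\ r=1,\dots,d\}$ is a product of the half-lines $(-\infty,u_{jr}]$, and $\mathds{F}=\{\boldsymbol{m}\colon m_r\leq\mu^\dagger_r,\ r=2,\dots,d\}$ is also a product set. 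The plan is to prove the equivalent statement $\mathds{C}_{ij}^\mathrm{hb}\subseteq\mathds{F}$ if and only if $u_{jr}\leq\mu^\dagger_r$ for all $r\geq2$.

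\textbf{Sufficiency.} Suppose $u_{jr}\leq\mu^\dagger_r$ for every $r\geq 2$. Take any $\boldsymbol{m}\in\mathds{C}_{ij}^\mathrm{hb}$. Then $m_r\leq u_{jr}\leq\mu^\dagger_r$ for each $r\geq2$, so $\boldsymbol{m}\in\mathds{F}$. Hence $\mathds{C}_{ij}^\mathrm{hb}\subseteq\mathds{F}$, which is the same as $\mathds{C}_{ij}^\mathrm{hb}\cap\mathds{F}^c=\emptyset$.

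\textbf{Necessity.} Argue by contrapositive and suppose $u_{js}>\mu^\dagger_s$ for some $s\geq2$. Take the corner point $\boldsymbol{m}=(u_{j1},u_{j2},\dots,u_{jd})$. It satisfies every defining inequality of the half-box, so $\boldsymbol{m}\in\mathds{C}_{ij}^\mathrm{hb}$. Its $s$th coordinate gives $m_s=u_{js}>\mu^\dagger_s$, so $\boldsymbol{m}\notin\mathds{F}$. The intersection $\mathds{C}_{ij}^\mathrm{hb}\cap\mathds{F}^c$ is therefore nonempty.

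This is the analogue of Proposition~\ref{Proposition: ellipsoid projection}, with the componentwise upper bounds of the half-box playing the role of the ellipsoid's projections. It is simpler, because the half-box equals the product of its projections, so no appeal to Lemma~\ref{lem:ellipse_proj} is needed. The only step needing care is the necessity direction: one must exhibit an explicit point of the region that violates a constraint, and the corner $\boldsymbol{u}_j$ serves as that witness. The first coordinate is unconstrained by $\mathds{F}$, so it plays no role.
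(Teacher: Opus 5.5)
Your proof is correct and follows the route the paper indicates; the paper omits this proof and defers to the argument for Proposition~\ref{Proposition: ellipsoid projection}. That route rewrites the condition as $\mathds{C}_{ij}^\mathrm{hb}\subseteq\mathds{F}$ and compares the half-box's componentwise upper bounds (its projections) with the thresholds $\mu^\dagger_r$. Your corner witness $\boldsymbol{u}_j$ for the ``only if'' direction is a useful addition, since the paper's ellipsoid template spells out only the ``if'' direction.
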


\bigskip

As stated previously in Section~\ref{subsec:pareto_ellipsoid}, for a closed confidence region $\mathds{C}_{ij}$ that is bounded above in each coordinate, we define $\boldsymbol{w}_{ij}=(w_{ij1},w_{ij2},\dots,w_{ijd})$, where $w_{ijr}=\max\limits_{\boldsymbol{m}\in \mathds{C}_{ij}} m_{r}$ for $r=1,2,\dots,d$ and $\mathds{W}^{+}(\boldsymbol{m})\equiv\{\boldsymbol{m}'\colon \boldsymbol{m}'\succ_{p} \boldsymbol{m}\}$.
Let $\mathcal{C}^{1}=\{j\neq i\colon\boldsymbol{w}_{ij}\notin \mathds{C}_{ij}\}$ and $\mathcal{C}^{2}=\{j\neq i\colon\boldsymbol{w}_{ij}\in \mathds{C}_{ij}\}$.
When $\mathcal{C}^1 = \emptyset$, $\mathbb{M}^{*}_{i}=\bigtimes\limits_{j=1}^{k}\mathbb{M}^{*}_{ij} = \bigtimes\limits_{j=1}^{i-1}\mathds{W}^{*}(\mathds{C}_{ij})\times \mathds{B}^{*}(\mathds{C}_{ii})\times \bigtimes\limits_{j=i+1}^{k}\mathds{W}^{*}(\mathds{C}_{ij})$ is a singleton, in which case the non-emptiness of $\mathds{A}_{i} \cap \mathbb{M}^{*}_{i}$ can be easily checked.
When $\mathcal{C}^1 \neq \emptyset$, we will make use of Proposition \ref{proposition: Pareto_opt_full}, which handles the general case where confidence regions for each system may vary in geometry.
Proposition \ref{Pareto_Proposition}, which appears in the paper, is a corollary of Proposition \ref{proposition: Pareto_opt_full}, where $\mathds{C}_{ij} = \mathds{C}_{uj}^{\mathrm{ell}}$ for all $j = 1, 2, \ldots, k$.

\begin{lemma}
\label{lemma: Dominated Corner of Confidence Region}
For the definition of acceptability as Pareto optimality,
\begin{enumerate}[(i)]
    \item for any $\boldsymbol{m}_{ii}\in\mathds{C}_{ii}$ and any
    $j\in\mathcal{C}^{1}$, there exists $\boldsymbol{m}_{ij}\in\mathds{C}_{ij} $ such that $ \boldsymbol{m}_{ij}\nprec_{p} \boldsymbol{m}_{ii}$ if and only if $\boldsymbol{m}_{ii}\notin \mathds{W}(\boldsymbol{w}_{ij})$; and
    \item for any $\boldsymbol{m}_{ii}\in\mathds{C}_{ii}$ and any $j\in\mathcal{C}^{2}$, there exists $\boldsymbol{m}_{ij}\in\mathds{C}_{ij} $ such that $ \boldsymbol{m}_{ij}\nprec_{p} \boldsymbol{m}_{ii}$ if and only if $\boldsymbol{m}_{ii}\notin \mathds{W}^{+}(\boldsymbol{w}_{ij})$.
\end{enumerate}

\end{lemma}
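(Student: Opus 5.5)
The plan is to argue directly from the definitions, using only two facts about $\boldsymbol{w}_{ij}$. First, $\boldsymbol{m}\leq\boldsymbol{w}_{ij}$ componentwise for every $\boldsymbol{m}\in\mathds{C}_{ij}$. Second, for each $r$ the maximum defining $w_{ijr}$ is attained by some $\boldsymbol{m}^{(r)}\in\mathds{C}_{ij}$ with $m^{(r)}_{r}=w_{ijr}$. The second fact is implicit in the definition of $\boldsymbol{w}_{ij}$ as a coordinatewise maximum over the closed region $\mathds{C}_{ij}$ that is bounded above. Recall that $\boldsymbol{a}\prec_p\boldsymbol{b}$ means $\boldsymbol{a}\leq\boldsymbol{b}$ componentwise with $\boldsymbol{a}\neq\boldsymbol{b}$. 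I will prove each part as two implications. In each part, one direction is witnessed by an explicit $\boldsymbol{m}_{ij}$, and the other shows that every element of $\mathds{C}_{ij}$ dominates $\boldsymbol{m}_{ii}$.

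For (i), fix $j\in\mathcal{C}^{1}$, so $\boldsymbol{w}_{ij}\notin\mathds{C}_{ij}$.

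Suppose first that $\boldsymbol{m}_{ii}\notin\mathds{W}(\boldsymbol{w}_{ij})$. Then there is a coordinate $r$ with $m_{ii,r}<w_{ijr}$. Take $\boldsymbol{m}_{ij}=\boldsymbol{m}^{(r)}$. Its $r$th component is strictly larger than that of $\boldsymbol{m}_{ii}$, so $\boldsymbol{m}_{ij}\nprec_p\boldsymbol{m}_{ii}$.

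Conversely, suppose $\boldsymbol{m}_{ii}\in\mathds{W}(\boldsymbol{w}_{ij})$, i.e., $\boldsymbol{m}_{ii}\geq\boldsymbol{w}_{ij}$. Then every $\boldsymbol{m}\in\mathds{C}_{ij}$ satisfies $\boldsymbol{m}\leq\boldsymbol{w}_{ij}\leq\boldsymbol{m}_{ii}$. It remains to rule out $\boldsymbol{m}=\boldsymbol{m}_{ii}$. If that held, the sandwich would force $\boldsymbol{m}=\boldsymbol{w}_{ij}$, giving $\boldsymbol{w}_{ij}\in\mathds{C}_{ij}$, which contradicts $j\in\mathcal{C}^1$. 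Hence every $\boldsymbol{m}\in\mathds{C}_{ij}$ satisfies $\boldsymbol{m}\prec_p\boldsymbol{m}_{ii}$, and no admissible $\boldsymbol{m}_{ij}$ exists.

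For (ii), fix $j\in\mathcal{C}^{2}$, so $\boldsymbol{w}_{ij}\in\mathds{C}_{ij}$.

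Suppose first that $\boldsymbol{m}_{ii}\notin\mathds{W}^{+}(\boldsymbol{w}_{ij})$. This says exactly that $\boldsymbol{w}_{ij}\nprec_p\boldsymbol{m}_{ii}$, so $\boldsymbol{m}_{ij}=\boldsymbol{w}_{ij}$ is a valid witness.

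Conversely, suppose $\boldsymbol{w}_{ij}\prec_p\boldsymbol{m}_{ii}$. Then for every $\boldsymbol{m}\in\mathds{C}_{ij}$ we have $\boldsymbol{m}\leq\boldsymbol{w}_{ij}\leq\boldsymbol{m}_{ii}$. If $\boldsymbol{m}=\boldsymbol{m}_{ii}$, the sandwich would give $\boldsymbol{w}_{ij}=\boldsymbol{m}_{ii}$, contradicting strict dominance. So again every element of $\mathds{C}_{ij}$ dominates $\boldsymbol{m}_{ii}$.

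The only delicate step is the one ruling out equality $\boldsymbol{m}=\boldsymbol{m}_{ii}$ in the sandwich arguments. This is exactly where the split into $\mathcal{C}^1$ and $\mathcal{C}^2$ matters, and why $\mathds{W}$ appears in (i) but $\mathds{W}^+$ in (ii). A secondary point is the attainment of each coordinate maximum. I would justify it by the closedness and upper-boundedness assumed in the definition of $\boldsymbol{w}_{ij}$, as the paper does for the ellipsoid and box cases. If attainment failed in general, I would instead take a sequence in $\mathds{C}_{ij}$ whose $r$th coordinate approaches $w_{ijr}$ and use one term with $m_r>m_{ii,r}$.
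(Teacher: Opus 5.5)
Your proof is correct and follows essentially the same route as the paper. Coordinatewise maximizers of $\mathds{C}_{ij}$ serve as witnesses when $\boldsymbol{m}_{ii}$ lies outside the dominated region, and the sandwich $\boldsymbol{m}\leq\boldsymbol{w}_{ij}\leq\boldsymbol{m}_{ii}$, combined with the $\mathcal{C}^1$/$\mathcal{C}^2$ split, rules out equality in the other direction. Two small differences from the paper. In the reverse direction of (ii), you take $\boldsymbol{w}_{ij}$ itself as the witness, since $\boldsymbol{m}_{ii}\notin\mathds{W}^{+}(\boldsymbol{w}_{ij})$ means exactly $\boldsymbol{w}_{ij}\nprec_{p}\boldsymbol{m}_{ii}$; this is slightly cleaner than the paper's two-case split. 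You also spell out the equality-exclusion step in (i), which the paper only asserts.
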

\noindent {\bf Proof of Lemma~\ref{lemma: Dominated Corner of Confidence Region}:}
Proof of (i): 
We will first prove the forward implication by contradiction.
Assume that there exists $\boldsymbol{m}_{ij} \in \mathds{C}_{ij}$ such that $\boldsymbol{m}_{ij}\nprec_{p} \boldsymbol{m}_{ii}$ and  $\boldsymbol{m}_{ii}\in \mathds{W}(\boldsymbol{w}_{ij})$.
By the definition of $\mathds{B}(\cdot)$ and $\boldsymbol{w}_{ij}$, $\mathds{C}_{ij}\subseteq \mathds{B}(\boldsymbol{w}_{ij})$.
Combined with the fact that $\boldsymbol{m}_{ii}\in \mathds{W}(\boldsymbol{w}_{ij})$, we conclude that $\boldsymbol{m}_{ij}\preccurlyeq_{p}\boldsymbol{w}_{ij}\preccurlyeq_{p}\boldsymbol{m}_{ii}$.
Since $j \in \mathcal{C}^1$, $\boldsymbol{m}_{ij} \neq \boldsymbol{m}_{ii}$. It follows that $\boldsymbol{m}_{ij}\prec_{p}\boldsymbol{m}_{ii}$, a contradiction.

We next prove the reverse implication.
Since $\boldsymbol{m}_{ii} \notin \mathds{W}(\boldsymbol{w}_{ij})$, there exists some $c\in\{1,2,\dots,d\}$ such that $m_{iic}< \max\limits_{\boldsymbol{m}\in \mathds{C}_{ij}} m_{c}$.
Let $\boldsymbol{m}^{*}_{c} =  \argmax\limits_{\boldsymbol{m}\in \mathds{C}_{ij}} m_{c}$, hence $\boldsymbol{m}^{*}_{c} \in \mathds{C}_{ij}$.
Because $\boldsymbol{m}_{ii}$ outperforms $\boldsymbol{m}^{*}_{c}$ in objective $c$, $\boldsymbol{m}^{*}_{c}\nprec_{p} \boldsymbol{m}_{ii}$.

Proof of (ii):
We will first prove the forward implication by contradiction.
Assume that there exists $\boldsymbol{m}_{ij}\in\mathds{C}_{ij} $ and $ \boldsymbol{m}_{ij}\nprec_{p} \boldsymbol{m}_{ii}$ and  $\boldsymbol{m}_{ii}\in \mathds{W}^{+}(\boldsymbol{w}_{ij})$.
By the definition of $\mathds{B}(\cdot)$ and $\boldsymbol{w}_{ij}$, $\mathds{C}_{ij}\subseteq \mathds{B}(\boldsymbol{w}_{ij})$.
Combined with the fact that $\boldsymbol{m}_{ii}\in \mathds{W}^+(\boldsymbol{w}_{ij})$, we conclude that $\boldsymbol{m}_{ij}\preccurlyeq_{p}\boldsymbol{w}_{ij}\prec_{p}\boldsymbol{m}_{ii}$.
Hence, $\boldsymbol{m}_{ij}\prec_{p}\boldsymbol{m}_{ii}$, a contradiction.

We next prove the reverse implication.
Because $\boldsymbol{m}_{ii} \notin \mathds{W}^{+}(\boldsymbol{w}_{ij})$, either
$\boldsymbol{m}_{ii} = \boldsymbol{w}_{ij}$ or $\boldsymbol{m}_{ii} \in \mathds{W}(\boldsymbol{w}_{ij})^{c}$.
If $\boldsymbol{m}_{ii}=\boldsymbol{w}_{ij}$, then since $j\in\mathcal{C}^{2}$, letting $\boldsymbol{m}_{ij}=\boldsymbol{w}_{ij}$ implies that $\boldsymbol{m}_{ij}\in\mathds{C}_{ij}$ and $ \boldsymbol{m}_{ij} \nprec_{p} \boldsymbol{m}_{ii}$.
If instead $\boldsymbol{m}_{ii} \notin \mathds{W}(\boldsymbol{w}_{ij})$,
then there exists some $c\in\{1,2,\dots,d\}$ such that $m_{iic}<\max\limits_{\boldsymbol{m}\in \mathds{C}_{ij}} m_{c}$.
Let $\boldsymbol{m}^{*}_{c} = \argmax\limits_{\boldsymbol{m}\in \mathds{C}_{ij}} m_{c}$, hence $\boldsymbol{m}^{*}_{c} \in \mathds{C}_{ij}$.
Because $\boldsymbol{m}_{ii}$ outperforms $\boldsymbol{m}^{*}_{c}$ in objective $c$, $\boldsymbol{m}^{*}_{c}\nprec_{p} \boldsymbol{m}_{ii}$.
$\square$

\begin{proposition}
\label{proposition: Pareto_opt_full}
For the definition of acceptability as Pareto optimality, if $\boldsymbol{w}_{ij}\notin \mathds{C}_{ij}$ for all $j\neq i$, then 
$$\mathds{A}_{i}\cap\mathds{C}_{i}\neq\emptyset \text{ if and only if } \mathds{C}_{ii}\cap\left(\bigcup\limits_{j\in\mathcal{C}^{1}} \mathds{W}(\boldsymbol{w}_{ij})\cup\bigcup\limits_{j\in\mathcal{C}^{2}} \mathds{W}^{+}(\boldsymbol{w}_{ij})\right)^{c}\neq\emptyset.$$
\end{proposition}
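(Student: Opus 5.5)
The plan is to reduce the statement to a pointwise condition on the row $\boldsymbol{m}_{i}$ and then apply Lemma~\ref{lemma: Dominated Corner of Confidence Region} one competitor at a time. The key fact is that $\mathds{C}_{i}=\bigtimes_{j=1}^{k}\mathds{C}_{ij}$ is a Cartesian product and that membership in $\mathds{A}_{i}$ is a conjunction of pairwise conditions $\boldsymbol{m}_{j}\nprec_{p}\boldsymbol{m}_{i}$, one for each $j\neq i$. Throughout I take $\mathds{C}_{ij}$ closed and bounded above in every coordinate, so that each $\boldsymbol{w}_{ij}$ and the maximizers used in the lemma exist. I read the hypothesis on $\boldsymbol{w}_{ij}$ only as placing every index in $\mathcal{C}^{1}$ or $\mathcal{C}^{2}$. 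The argument below treats both classes, so it covers the displayed union as written.

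\textbf{Step 1: decouple the existential.} First I will show that $\mathds{A}_{i}\cap\mathds{C}_{i}\neq\emptyset$ holds if and only if there is some $\boldsymbol{m}_{ii}\in\mathds{C}_{ii}$ such that, for every $j\neq i$, some $\boldsymbol{m}_{ij}\in\mathds{C}_{ij}$ satisfies $\boldsymbol{m}_{ij}\nprec_{p}\boldsymbol{m}_{ii}$. For the forward direction, take $\mathbf{M}\in\mathds{A}_{i}\cap\mathds{C}_{i}$ and read off its rows. For the reverse direction, fix $\boldsymbol{m}_{ii}$ and choose a witness $\boldsymbol{m}_{ij}$ separately for each $j$. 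Because $\mathds{C}_{i}$ is a product, the assembled matrix lies in $\mathds{C}_{i}$. Its $i$th row is dominated by no other row, so the matrix lies in $\mathds{A}_{i}$. The witness for $j$ depends only on $\boldsymbol{m}_{ii}$ and $\mathds{C}_{ij}$, not on the other rows, which is exactly what lets the quantifiers separate.

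\textbf{Step 2: apply the lemma to each $j$.} For fixed $\boldsymbol{m}_{ii}\in\mathds{C}_{ii}$, Lemma~\ref{lemma: Dominated Corner of Confidence Region}(i) says that a witness exists for $j\in\mathcal{C}^{1}$ if and only if $\boldsymbol{m}_{ii}\notin\mathds{W}(\boldsymbol{w}_{ij})$. Part (ii) says that a witness exists for $j\in\mathcal{C}^{2}$ if and only if $\boldsymbol{m}_{ii}\notin\mathds{W}^{+}(\boldsymbol{w}_{ij})$. Taking the conjunction over all $j\neq i$, the condition from Step 1 becomes
\[
\boldsymbol{m}_{ii}\in\mathds{C}_{ii}\cap\Bigl(\bigcup_{j\in\mathcal{C}^{1}}\mathds{W}(\boldsymbol{w}_{ij})\cup\bigcup_{j\in\mathcal{C}^{2}}\mathds{W}^{+}(\boldsymbol{w}_{ij})\Bigr)^{c}.
\]
Such an $\boldsymbol{m}_{ii}$ exists precisely when the displayed set in the proposition is non-empty. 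By De Morgan's law, this step is just the statement that the point avoids every set in the union.

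I expect the only delicate point to be Step 1, namely checking that the reverse construction really lands in $\mathds{A}_{i}$. Pareto acceptability of System $i$ involves only comparisons between row $i$ and each other row, never between two rows $j,j'\neq i$. Choosing the witnesses independently therefore cannot create a violation. All of the geometric content, including the boundary case in which $\boldsymbol{w}_{ij}$ is attained, is already handled by the lemma.
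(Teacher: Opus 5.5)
Your proposal is correct and follows essentially the same route as the paper. Both arguments combine the Cartesian-product structure of $\mathds{C}_{i}$ with the fact that Pareto acceptability of System $i$ involves only pairwise comparisons of row $i$ against each other row, and then apply Lemma~\ref{lemma: Dominated Corner of Confidence Region} separately for each $j\neq i$. The one difference is that, for the forward direction, the paper first passes to a configuration in $\mathds{A}_{i}\cap\mathbb{M}^{*}_{i}$ (implicitly via Theorem~\ref{theorem: FOSSA}), whereas you read the rows directly off any $\mathbf{M}\in\mathds{A}_{i}\cap\mathds{C}_{i}$; your way is slightly cleaner and avoids the extra assumptions behind that theorem.
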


\noindent {\bf Proof of Proposition ~\ref{proposition: Pareto_opt_full}:}
We prove the forward implication by contradiction.
Assume that $\mathds{A}_{i}\cap\mathds{C}_{i}\neq\emptyset$ and $\mathds{C}_{ii}\cap\left(\bigcup\limits_{j\in\mathcal{C}^{1}} \mathds{W}(\boldsymbol{w}_{ij})\cup\bigcup\limits_{j\in\mathcal{C}^{2}} \mathds{W}^{+}(\boldsymbol{w}_{ij})\right)^{c}=\emptyset$.
Because $\mathds{A}_{i}\cap\mathds{C}_{i}\neq\emptyset$, there exists a configuration $\mathbf{M}'\in\mathds{A}_{i}\cap\bigtimes\limits_{j=1}^{k}\mathbb{M}^{*}_{ij}$. By the definition of acceptability, $\boldsymbol{m}_{ii}'\nsucc_{p}\boldsymbol{m}_{ij}'$ for all $j\neq i$ and because $\mathds{M}_{ij}^* \subseteq \mathds{C}_{ij}$, $\boldsymbol{m}_{ii}' \in \mathds{C}_{ii}$ and $\boldsymbol{m}_{ij}' \in \mathds{C}_{ij}$ for all $j \neq i$.
By Lemma \ref{lemma: Dominated Corner of Confidence Region}, $\boldsymbol{m}_{ii}'\notin\mathds{W}(\mathds{C}_{ij})$ for $j\in\mathcal{C}^{1}$ and $\boldsymbol{m}_{ii}'\notin\mathds{W}^{+}(\mathds{C}_{ij})$ for all $j\in\mathcal{C}^{2}$.
Therefore $\boldsymbol{m}_{ii}'\in \left(\bigcup\limits_{j\in\mathcal{C}^{1}} \mathds{W}(\boldsymbol{w}_{ij})\cup\bigcup\limits_{j\in\mathcal{C}^{2}} \mathds{W}^{+}(\boldsymbol{w}_{ij})\right)^{c}$.
Because $\boldsymbol{m}_{ii}'\in\mathds{C}_{ii}$, $\mathds{C}_{ii}\cap\left(\bigcup\limits_{j\in\mathcal{C}^{1}} \mathds{W}(\boldsymbol{w}_{ij})\cup\bigcup\limits_{j\in\mathcal{C}^{2}} \mathds{W}^{+}(\boldsymbol{w}_{ij})\right)^{c}\neq\emptyset$.

We next prove the reverse implication.
Assume there exists a vector $\boldsymbol{m}'_{ii} \in\mathds{C}_{ii}\cap\left(\bigcup\limits_{j\in\mathcal{C}^{1}} \mathds{W}(\boldsymbol{w}_{ij})\cup\bigcup\limits_{j\in\mathcal{C}^{2}} \mathds{W}^{+}(\boldsymbol{w}_{ij})\right)^{c}$.
Then for all $j\in\mathcal{C}^{1}$, $\boldsymbol{m}'_{ii}\notin \mathds{W}(\boldsymbol{w}_{ij})$ and for all $j\in\mathcal{C}^{2}$, $\boldsymbol{m}'_{ii}\notin \mathds{W}^{+}(\boldsymbol{w}_{ij})$.
By Lemma \ref{lemma: Dominated Corner of Confidence Region}, for any $j \neq i$ we can find a vector
$\boldsymbol{m}'_{ij}\in\mathds{C}_{ij}$ such that $\boldsymbol{m}'_{ij} \nprec_{p} \boldsymbol{m}_{ii}$.
By the definition of acceptability, $\mathbf{M}' \equiv \bigtimes\limits_{j=1}^{k}\boldsymbol{m}'_{ij}\in\mathds{A}_{i}$, and by construction, $\mathbf{M}' \in\bigtimes\limits_{j=1}^{k}\mathds{C}_{ij}=\mathds{C}_{i}$.
Therefore, $\mathds{A}_{i}\cap\mathds{C}_{i}\neq\emptyset$.
$\square$

\section{Implementation of FOSSA Procedures in Parallel Computing Environments}
\label{sec:parallel}
In this section, we show that FOSSA procedures can be naturally parallelized through a divide-and-conquer scheme under a master-worker computing framework without compromising their screening power.
FOSSA procedures can be naively parallelized by tasking each worker with checking the non-emptiness of $\mathds{A}_{i}\cap\mathds{C}_{i}$ for a subset of systems.
However, this approach requires that the simulation outputs (or at least the sufficient statistics) for \emph{all} systems be shared with \emph{all} workers, which leads to significant memory and/or communication overhead.
Alternatively, under a divide-and-conquer scheme, FOSSA procedures can be applied locally on each worker on a subset of systems and then once more on the master on all systems returned by the workers.
If the simulation runs are also parallelized via a divide-and-conquer scheme, then each worker can perform screening locally after finishing its runs, without needing to await the results from other workers' runs.

The main result we prove (Theorem~\ref{theorem: Divide and Conquer}) states that the final set of systems returned by the master when applying FOSSA procedures under a divide-and-conquer scheme is the same as would have been obtained by applying that FOSSA procedure on the same output data on a single processor.
The result holds for any definition of acceptability and any FOSSA procedure satisfying Assumptions~\ref{assump: Acceptability}--\ref{assump: closed & bounded}.

For notational convenience, we extend the definitions of the acceptable region, confidence region, and returned subset to handle the situation in which only a subset of systems are being screened.
Let $\mathcal{X}=\{1,2,\dots,k\}$ denote the set of indices of all systems under consideration and let $\mathcal{L}\subseteq\mathcal{X}$ be a subset of systems.
Under the divide-and-conquer scheme, a worker would receive the data for some subset of systems, $\mathcal{L}$.
For any $i\in\mathcal{L}$, define
\begin{align*}
\mathds{A}_{i}(\mathcal{L}) &\equiv \{\mathbf{M}\in \mathds{R}^{|\mathcal{L}| \times d}\colon \boldsymbol{m}_{i}\in\mathds{F} \text{ and }  \boldsymbol{m}_{i}\nsucc \boldsymbol{m}_{j} \text{ for all } j\in\mathcal{L} \text{ and } j \neq i \}, \quad \text{and} \\
\mathds{C}_{i}(\mathcal{L}) &\equiv \bigtimes\limits_{j\in\mathcal{L}}\mathds{C}_{ij}
\end{align*}
where $|\mathcal{L}|$ is the cardinality of $\mathcal{L}$.
Note that the matrix $\mathbf{M}$ in the definition of $\mathds{A}_i(\mathcal{L})$ is of lower dimension ($|\mathcal{L}| \times d$) than the overall configuration ($k \times d)$ and that the response vectors are indexed with respect to $\mathcal{X}$ instead of $\mathcal{L}$.
In words, $\mathds{A}_i(\mathcal{L})$ is the set of configurations of the response vectors of systems in $\mathcal{L}$ for which system $i$ is acceptable relative to the other systems in $\mathcal{L}$, and $\mathds{C}_i(\mathcal{L})$ is the Cartesian product of the $(1-\alpha)^{1/k}$ confidence regions for $\boldsymbol{\mu}_{i}$ for $i \in \mathcal{L}$ based on there being $k$ (not $|\mathcal{L}|$) systems under consideration.
The set of systems returned when applying the FOSSA procedure to only systems in $\mathcal{L}$ is then
$$
\mathcal{S}(\mathcal{L}) \equiv \{i\in\mathcal{L}\colon \mathds{A}_{i}(\mathcal{L})\cap\mathds{C}_{i}(\mathcal{L})\neq\emptyset \}.
$$

Let $\mathds{W}^{+}(\boldsymbol{m})\equiv\{\boldsymbol{m}'\colon \boldsymbol{m}'\succ \boldsymbol{m}\}$ denote the set of response vectors that are less preferable than $\boldsymbol{m}$ and $\mathds{W}^{+}(\mathds{C}_{ij})\equiv\bigcap\limits_{\boldsymbol{m}_{ij}\in\mathds{C}_{ij}} \mathds{W}^{+}(\boldsymbol{m}_{ij})$ denote the set of response vectors that are less preferable than all response vectors in $\mathds{C}_{ij}$. A consequence of these definitions is that if
$\boldsymbol{m}\notin\mathds{W}^{+}(\mathds{C}_{ij})$, then there exists some $\boldsymbol{m_{ij}}\in\mathds{C}_{ij}$ such that $\boldsymbol{m}\nsucc\boldsymbol{m_{ij}}$.

We proceed to prove Theorem~\ref{theorem: Divide and Conquer} via four lemmas: Lemmas~\ref{lemma: FOSSA on a subset}--\ref{lemma: Parallel necessity}.

\begin{lemma}
\label{lemma: FOSSA on a subset}
$\mathds{A}_{i}(\mathcal{L})\cap\mathds{C}_{i}(\mathcal{L})\neq\emptyset$ if and only if $\mathds{C}_{ii}\cap\mathds{F}\cap\left(\bigcup\limits_{j\in\mathcal{L}\backslash\{i\}}\mathds{W}^{+}(\mathds{C}_{ij})\right)^{c}\neq\emptyset$.
\end{lemma}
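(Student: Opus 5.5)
Both directions follow by unwinding the definitions, because $\mathds{C}_{i}(\mathcal{L})$ is a Cartesian product and $\mathds{A}_{i}(\mathcal{L})$ imposes its preference conditions one competitor at a time. The key observation is the consequence noted just before the lemma: $\boldsymbol{m}\notin\mathds{W}^{+}(\mathds{C}_{ij})$ if and only if some $\boldsymbol{m}_{ij}\in\mathds{C}_{ij}$ satisfies $\boldsymbol{m}\nsucc\boldsymbol{m}_{ij}$. The converse also holds, since $\boldsymbol{m}\in\mathds{W}^{+}(\mathds{C}_{ij})$ means $\boldsymbol{m}\succ\boldsymbol{m}_{ij}$ for every $\boldsymbol{m}_{ij}\in\mathds{C}_{ij}$. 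So I would first record this as an equivalence, directly from the definition of $\mathds{W}^{+}(\mathds{C}_{ij})$ as an intersection.

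\emph{Forward direction.} Suppose $\mathbf{M}\in\mathds{A}_{i}(\mathcal{L})\cap\mathds{C}_{i}(\mathcal{L})$.
\begin{enumerate}[(i)]
\item Its $i$th row satisfies $\boldsymbol{m}_{i}\in\mathds{C}_{ii}\cap\mathds{F}$.
\item For each $j\in\mathcal{L}\setminus\{i\}$, its row $\boldsymbol{m}_{j}$ lies in $\mathds{C}_{ij}$ and satisfies $\boldsymbol{m}_{i}\nsucc\boldsymbol{m}_{j}$.
\item Hence $\boldsymbol{m}_{i}$ is not less preferable than every element of $\mathds{C}_{ij}$, so $\boldsymbol{m}_{i}\notin\mathds{W}^{+}(\mathds{C}_{ij})$.
\item Taking this over all $j$ puts $\boldsymbol{m}_{i}$ in the complement of the union, which proves non-emptiness.
\end{enumerate}

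\emph{Reverse direction.} Take $\boldsymbol{m}$ in $\mathds{C}_{ii}\cap\mathds{F}\cap(\bigcup_{j}\mathds{W}^{+}(\mathds{C}_{ij}))^{c}$.
\begin{enumerate}[(i)]
\item For each $j\in\mathcal{L}\setminus\{i\}$, $\boldsymbol{m}\notin\mathds{W}^{+}(\mathds{C}_{ij})$, so the recorded consequence gives a witness $\boldsymbol{m}_{ij}\in\mathds{C}_{ij}$ with $\boldsymbol{m}\nsucc\boldsymbol{m}_{ij}$.
\item Assemble the configuration $\mathbf{M}'$ whose $i$th row is $\boldsymbol{m}$ and whose $j$th row is $\boldsymbol{m}_{ij}$.
\item The product structure gives $\mathbf{M}'\in\mathds{C}_{i}(\mathcal{L})$.
\item Since $\boldsymbol{m}\in\mathds{F}$ and all the pairwise non-preference conditions hold, $\mathbf{M}'\in\mathds{A}_{i}(\mathcal{L})$.
\end{enumerate}

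The witnesses may be chosen independently for different $j$ because both the constraints of $\mathds{A}_{i}(\mathcal{L})$ and the factors of $\mathds{C}_{i}(\mathcal{L})$ decouple across $j$. This decoupling is the main point to state carefully. If $\mathcal{L}=\{i\}$, the union is empty and the statement reduces to $\mathds{C}_{ii}\cap\mathds{F}\neq\emptyset$, which matches the definitions.

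Two points need care. First, $\succ$ should be read as the strict preference ``less preferable than'', consistent with its use in $\mathds{W}^{+}$ and in Assumption~\ref{assump: Acceptability}. Second, this lemma needs no appeal to Assumptions~\ref{assump: existence}--\ref{assump: closed & bounded} or to Theorem~\ref{theorem: FOSSA}: it is purely set-theoretic, and the compactness assumptions play no role here.
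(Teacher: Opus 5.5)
Your proposal is correct and matches the paper's proof essentially step for step. In the forward direction you take a configuration in $\mathds{A}_{i}(\mathcal{L})\cap\mathds{C}_{i}(\mathcal{L})$ and show its $i$th row lies in $\mathds{C}_{ii}\cap\mathds{F}$ and outside every $\mathds{W}^{+}(\mathds{C}_{ij})$; the paper nominally frames this as a contradiction, but the argument is the same direct one. In the reverse direction you assemble a configuration from per-$j$ witnesses $\boldsymbol{m}_{ij}\in\mathds{C}_{ij}$, exactly as the paper does.
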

\noindent {\bf Proof of Lemma~\ref{lemma: FOSSA on a subset}:}
We first prove the forward implication by contradiction.
Assume that $\mathds{A}_{i}(\mathcal{L})\cap\mathds{C}_{i}(\mathcal{L})\neq\emptyset$ and $\mathds{C}_{ii}\cap\mathds{F}\cap\left(\bigcup\limits_{j\in\mathcal{L}\backslash\{i\}}\mathds{W}^{+}(\mathds{C}_{ij})\right)^{c}=\emptyset$.
Because $\mathds{A}_{i}(\mathcal{L})\cap\mathds{C}_{i}(\mathcal{L})\neq\emptyset$, there exists some $\mathbf{M}'\in\mathds{A}_{i}(\mathcal{L})\cap\mathds{C}_{i}(\mathcal{L})$, where $\boldsymbol{m}_{ii}'\in\mathds{F}$ and $\boldsymbol{m}_{ii}'\nsucc\boldsymbol{m}_{ij}'$ for all $j\in\mathcal{L}\backslash\{i\}$.
For all $j\in\mathcal{L}\backslash\{i\}$, $\boldsymbol{m}'_{ij}\in\mathds{C}_{ij}$ and $\boldsymbol{m}_{ii}'\notin  \mathds{W}^{+}(\boldsymbol{m}'_{ij})$, hence
$\boldsymbol{m}_{ii}'\notin\mathds{W}^{+}(\mathds{C}_{ij})$.
Therefore, $\boldsymbol{m}_{ii}'\in\left(\bigcup\limits_{j\in\mathcal{L}\backslash\{i\}}\mathds{W}^{+}(\mathds{C}_{ij})\right)^{c}$.
Because $\boldsymbol{m}_{ii}'\in\mathds{C}_{ii}\cap\mathds{F}$, we have that $\boldsymbol{m}_{ii}' \in \mathds{C}_{ii}\cap\mathds{F}\cap\left(\bigcup\limits_{j\in\mathcal{L}\backslash\{i\}}\mathds{W}^{+}(\mathds{C}_{ij})\right)^{c}$ and thus $\mathds{C}_{ii}\cap\mathds{F}\cap\left(\bigcup\limits_{j\in\mathcal{L}\backslash\{i\}}\mathds{W}^{+}(\mathds{C}_{ij})\right)^{c}\neq\emptyset$.

We next prove the reverse implication. Assume that $\mathds{C}_{ii}\cap\mathds{F}\cap\left(\bigcup\limits_{j\in\mathcal{L}\backslash\{i\}}\mathds{W}^{+}(\mathds{C}_{ij})\right)^{c}\neq\emptyset$. Then there exists some $\boldsymbol{m}'_{ii}\in \mathds{C}_{ii}\cap\mathds{F}\cap\left(\bigcup\limits_{j\in\mathcal{L}\backslash\{i\}}\mathds{W}^{+}(\mathds{C}_{ij})\right)^{c}$ and $\boldsymbol{m}'_{ii}\notin \mathds{W}^{+}(\mathds{C}_{ij})$ for all $j\in\mathcal{L}\backslash\{i\}$.
Hence, for all $j\in\mathcal{L}\backslash\{i\}$, we can find a
$\boldsymbol{m}'_{ij}\in\mathds{C}_{ij}$ such that $\boldsymbol{m}'_{ii}\nsucc\boldsymbol{m}'_{ij}$.
By the definitions of $\mathds{A}_{i}(\mathcal{L})$ and $\mathds{C}_{i}(\mathcal{L})$, $\bigtimes\limits_{j\in\mathcal{L}}\boldsymbol{m}'_{ij}\in\mathds{A}_{i}(\mathcal{L})$ and $\bigtimes\limits_{j\in\mathcal{L}}\boldsymbol{m}'_{ij}\in\mathds{C}_{i}(\mathcal{L})$.
Therefore, $\mathds{A}_{i}(\mathcal{L})\cap\mathds{C}_{i}(\mathcal{L})\neq\emptyset$.
$\square$

\begin{lemma}
\label{lemma: Parallel Sufficency}
For any $\mathcal{L}\subseteq\mathcal{X}$, $\mathcal{L}\backslash\mathcal{S}(\mathcal{L})\subseteq\mathcal{X}\backslash\mathcal{S}(\mathcal{X})$.
\end{lemma}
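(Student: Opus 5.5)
The plan is to argue by contrapositive, using Lemma~\ref{lemma: FOSSA on a subset} to translate both screening decisions into statements about System $i$'s own confidence region. Fix $\mathcal{L}\subseteq\mathcal{X}$ and $i\in\mathcal{L}\setminus\mathcal{S}(\mathcal{L})$. We must show $i\notin\mathcal{S}(\mathcal{X})$, i.e., $\mathds{A}_{i}(\mathcal{X})\cap\mathds{C}_{i}(\mathcal{X})=\emptyset$.

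First, apply Lemma~\ref{lemma: FOSSA on a subset} with the subset $\mathcal{L}$. Since $i\notin\mathcal{S}(\mathcal{L})$, we get
\[
\mathds{C}_{ii}\cap\mathds{F}\cap\Bigl(\bigcup_{j\in\mathcal{L}\setminus\{i\}}\mathds{W}^{+}(\mathds{C}_{ij})\Bigr)^{c}=\emptyset .
\]
The key observation is monotonicity in the index set. Because $\mathcal{L}\setminus\{i\}\subseteq\mathcal{X}\setminus\{i\}$, the union over $\mathcal{X}\setminus\{i\}$ contains the union over $\mathcal{L}\setminus\{i\}$, so its complement is smaller. Hence
\[
\mathds{C}_{ii}\cap\mathds{F}\cap\Bigl(\bigcup_{j\in\mathcal{X}\setminus\{i\}}\mathds{W}^{+}(\mathds{C}_{ij})\Bigr)^{c}\subseteq \mathds{C}_{ii}\cap\mathds{F}\cap\Bigl(\bigcup_{j\in\mathcal{L}\setminus\{i\}}\mathds{W}^{+}(\mathds{C}_{ij})\Bigr)^{c}=\emptyset .
\]
Applying Lemma~\ref{lemma: FOSSA on a subset} again, now with the subset $\mathcal{X}$, gives $\mathds{A}_{i}(\mathcal{X})\cap\mathds{C}_{i}(\mathcal{X})=\emptyset$, so $i\notin\mathcal{S}(\mathcal{X})$. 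Since $i$ was arbitrary, the inclusion follows.

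The one point needing care is that the regions $\mathds{C}_{ij}$ must be the same objects in both applications of the lemma. This holds because they are held fixed by hypothesis. In particular, they are the $(1-\alpha)^{1/k}$ regions indexed relative to $\mathcal{X}$, independent of which subset is being screened, so the sets $\mathds{C}_{ii}$ and $\mathds{W}^{+}(\mathds{C}_{ij})$ do not change between the two applications.

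I do not expect a real obstacle here. All the structural work is done by Lemma~\ref{lemma: FOSSA on a subset}, and what remains is pure set monotonicity. Intuitively, adding competitors can only remove configurations in which System $i$ is acceptable.
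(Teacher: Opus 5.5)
Your proposal is correct and follows essentially the same route as the paper: apply Lemma~\ref{lemma: FOSSA on a subset} with $\mathcal{L}$, observe that enlarging the index set from $\mathcal{L}\setminus\{i\}$ to $\mathcal{X}\setminus\{i\}$ only shrinks the complement of the union of the sets $\mathds{W}^{+}(\mathds{C}_{ij})$, and apply the lemma again with $\mathcal{X}$. The paper expresses the middle step by splitting the union over $\mathcal{X}\setminus\{i\}$ into its parts over $\mathcal{X}\setminus\mathcal{L}$ and $\mathcal{L}\setminus\{i\}$ and intersecting the complements; also, your argument is direct rather than by contrapositive, but that is only a matter of wording.
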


\noindent {\bf Proof of Lemma~\ref{lemma: Parallel Sufficency}:}
Fix a subset $\mathcal{L} \subseteq \mathcal{X}$ and an arbitrary $i\in \mathcal{L}\backslash\mathcal{S}(\mathcal{L})$.
By the definition of $\mathcal{S}(\mathcal{L})$,
$\mathds{A}_{i}(\mathcal{L})\cap\mathds{C}_{i}(\mathcal{L})=\emptyset$ and by Lemma~\ref{lemma: FOSSA on a subset}, $\mathds{C}_{ii}\cap\mathds{F}\cap\left(\bigcup\limits_{j\in\mathcal{L}\backslash\{i\}}\mathds{W}^{+}(\mathds{C}_{ij})\right)^{c}=\emptyset$.
We thus have that
\begin{align*}
\mathds{C}_{ii}\cap\mathds{F}\cap\left(\bigcup\limits_{j\in\mathcal{X}\backslash\{i\}}\mathds{W}^{+}(\mathds{C}_{ij})\right)^{c} &= \mathds{C}_{ii}\cap\mathds{F}\cap\left(\bigcup\limits_{j\in\mathcal{X}\backslash\mathcal{L}}\mathds{W}^{+}(\mathds{C}_{ij})\cup\bigcup\limits_{j\in\mathcal{L}\backslash\{i\}}\mathds{W}^{+}(\mathds{C}_{ij})\right)^{c} \\
    &= \mathds{C}_{ii}\cap\mathds{F}\cap\left(\bigcup\limits_{j\in\mathcal{L}\backslash\{i\}}\mathds{W}^{+}(\mathds{C}_{ij})\right)^{c}\cap\left(\bigcup\limits_{j\in\mathcal{X}\backslash\mathcal{L}}\mathds{W}^{+}(\mathds{C}_{ij})\right)^{c}
    &= \emptyset.
\end{align*}
Applying Lemma~\ref{lemma: FOSSA on a subset} (with the set $\mathcal{X}$) then implies that 
$\mathds{A}_{i}(\mathcal{X})\cap\mathds{C}_{i}(\mathcal{X})=\emptyset$ and  
$i\notin\mathcal{S}(\mathcal{X})$.
Because the choice of $i\in \mathcal{L}\backslash\mathcal{S}(\mathcal{L})$ and $\mathcal{L}$ were arbitrary, we have shown that for any $\mathcal{L}\subseteq\mathcal{X}$, $\mathcal{L}\backslash\mathcal{S}(\mathcal{L})\subseteq\mathcal{X}\backslash\mathcal{S}(\mathcal{X})$.
$\square$

\begin{lemma}\label{lemma: W+ nested relationship}
For any $j,j'\in\mathcal{X}$, $j \neq j'$, if $\mathds{C}_{ij}\cap\mathds{W}^{+}(\mathds{C}_{ij'})\neq\emptyset$, then $\mathds{W}^{+}(\mathds{C}_{ij}) \subset \mathds{W}^{+}(\mathds{C}_{ij'})$.
\end{lemma}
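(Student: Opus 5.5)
We will prove the inclusion $\mathds{W}^{+}(\mathds{C}_{ij}) \subseteq \mathds{W}^{+}(\mathds{C}_{ij'})$ directly from the definitions and the transitivity in Assumption~\ref{assump: Acceptability}. Recall that $\mathds{W}^{+}(\mathds{C}_{ij'})=\bigcap_{\boldsymbol{m}'\in\mathds{C}_{ij'}}\mathds{W}^{+}(\boldsymbol{m}')$ is the set of vectors strictly less preferable than every vector in $\mathds{C}_{ij'}$. The hypothesis supplies a witness: fix $\boldsymbol{m}^{\circ}\in\mathds{C}_{ij}\cap\mathds{W}^{+}(\mathds{C}_{ij'})$. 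Then $\boldsymbol{m}^{\circ}\in\mathds{C}_{ij}$, and $\boldsymbol{m}^{\circ}\succ\boldsymbol{m}'$ for every $\boldsymbol{m}'\in\mathds{C}_{ij'}$.

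Next, take an arbitrary $\boldsymbol{m}\in\mathds{W}^{+}(\mathds{C}_{ij})$. Because $\boldsymbol{m}^{\circ}\in\mathds{C}_{ij}$, the definition gives $\boldsymbol{m}\succ\boldsymbol{m}^{\circ}$. For any $\boldsymbol{m}'\in\mathds{C}_{ij'}$ we then have the chain $\boldsymbol{m}'\prec\boldsymbol{m}^{\circ}\prec\boldsymbol{m}$. Transitivity of $\prec$ yields $\boldsymbol{m}'\prec\boldsymbol{m}$, that is, $\boldsymbol{m}\in\mathds{W}^{+}(\boldsymbol{m}')$. Since $\boldsymbol{m}'\in\mathds{C}_{ij'}$ was arbitrary, $\boldsymbol{m}\in\mathds{W}^{+}(\mathds{C}_{ij'})$, which gives the inclusion $\subseteq$.

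The statement writes $\subset$, and the main obstacle is deciding whether strict inclusion is intended and provable. I would first argue that $\boldsymbol{m}^{\circ}$ itself separates the two sets. We have $\boldsymbol{m}^{\circ}\in\mathds{W}^{+}(\mathds{C}_{ij'})$. However, $\boldsymbol{m}^{\circ}\notin\mathds{W}^{+}(\mathds{C}_{ij})$, because membership would require $\boldsymbol{m}^{\circ}\succ\boldsymbol{m}^{\circ}$, as $\boldsymbol{m}^{\circ}\in\mathds{C}_{ij}$. This needs irreflexivity of $\prec$, which follows from the convention in Assumption~\ref{assump: Acceptability} that $\preccurlyeq$ means ``$\prec$ or equal.'' Under that convention $\prec$ is a strict preference; this holds for the componentwise orders $<$ and $\prec_p$, and for constrained optimality restricted to feasible vectors. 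There is a subtlety for infeasible vectors, which are declared ``equally preferable,'' so $\boldsymbol{m}\nprec\boldsymbol{m}$ still holds. I would state irreflexivity explicitly as the reading of ``strictly preferred.'' With it, the inclusion is proper. If only $\subseteq$ is needed for the later divide-and-conquer lemmas, the strictness remark can be presented as a side note.
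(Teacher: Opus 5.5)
Your proof is correct and follows the paper's own route. The paper likewise fixes a witness $\boldsymbol{m}'_{ij}\in\mathds{C}_{ij}\cap\mathds{W}^{+}(\mathds{C}_{ij'})$, uses transitivity to get $\mathds{W}^{+}(\boldsymbol{m}'_{ij})\subseteq\mathds{W}^{+}(\mathds{C}_{ij'})$ (hence the inclusion), and takes strictness from the witness itself. Your explicit appeal to irreflexivity of $\prec$ spells out a step the paper leaves implicit, and its closing line misprints the set the witness avoids as $\mathds{W}^{+}(\mathds{C}_{ij'})$ rather than $\mathds{W}^{+}(\mathds{C}_{ij})$.
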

\noindent {\bf Proof of Lemma~\ref{lemma: W+ nested relationship}:} 
$\mathds{C}_{ij}\cap\mathds{W}^{+}(\mathds{C}_{ij'})\neq\emptyset$ implies that there exists some $\boldsymbol{m}'_{ij}\in \mathds{C}_{ij}\cap\mathds{W}^{+}(\mathds{C}_{ij'})$.
By the transitivity of acceptability, $\mathds{W}^{+}(\boldsymbol{m}'_{ij})\subseteq\mathds{W}^{+}(\mathds{C}_{ij'})$.
Hence, $\mathds{W}^{+}(\mathds{C}_{ij}) \subset \mathds{W}^{+}(\mathds{C}_{ij'})$, where the subset is proper because $\boldsymbol{m}'_{ij}\in\mathds{W}^{+}(\mathds{C}_{ij'})$ is not in $\mathds{W}^{+}(\mathds{C}_{ij'})$.
$\square$

\begin{lemma}\label{lemma: Parallel necessity}
For any $\mathcal{S}(\mathcal{X})\subseteq\mathcal{L}\subseteq\mathcal{X}$, $\mathcal{S}(\mathcal{L})\subseteq\mathcal{S}(\mathcal{X})$.
\end{lemma}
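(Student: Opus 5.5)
The plan is to prove the contrapositive: fix $i\in\mathcal{L}$ with $i\notin\mathcal{S}(\mathcal{X})$ and show $i\notin\mathcal{S}(\mathcal{L})$. By Lemma~\ref{lemma: FOSSA on a subset}, the hypothesis $i\notin\mathcal{S}(\mathcal{X})$ means every $\boldsymbol{m}\in\mathds{C}_{ii}\cap\mathds{F}$ lies in $\mathds{W}^{+}(\mathds{C}_{ij})$ for some $j\in\mathcal{X}\setminus\{i\}$. It therefore suffices to show that such an $\boldsymbol{m}$ also lies in $\mathds{W}^{+}(\mathds{C}_{ij''})$ for some $j''\in\mathcal{L}\setminus\{i\}$. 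Lemma~\ref{lemma: FOSSA on a subset}, applied with $\mathcal{L}$, then gives $\mathds{A}_i(\mathcal{L})\cap\mathds{C}_i(\mathcal{L})=\emptyset$. If the witnessing $j$ already lies in $\mathcal{L}$ there is nothing to do, so the work is to handle a witness $j\in\mathcal{X}\setminus\mathcal{L}$.

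Since $\mathcal{S}(\mathcal{X})\subseteq\mathcal{L}$, such a $j$ was screened out in $\mathcal{X}$. Applying Lemma~\ref{lemma: FOSSA on a subset} to $j$ gives
\[
\mathds{C}_{jj}\cap\mathds{F}\subseteq\bigcup_{j'\neq j}\mathds{W}^{+}(\mathds{C}_{jj'}).
\]
Two observations come first. Because $\boldsymbol{m}$ is feasible and $\boldsymbol{m}\in\mathds{W}^{+}(\mathds{C}_{ij})$, the last two bullets of Assumption~\ref{assump: Acceptability} force $\mathds{C}_{ij}\subseteq\mathds{F}$: a feasible vector is never worse than an infeasible one. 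Next, I use the standing convention that the regions are ``held fixed'', i.e.\ the region for a competitor $j'$ does not depend on which system is screened, so $\mathds{C}_{jj'}=\mathds{C}_{ij'}$ for $j'\notin\{i,j\}$. I also assume $\mathds{C}_{ij}\cap\mathds{C}_{jj}\neq\emptyset$, as holds for the box and half-box regions, which share the point estimate.

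With these, pick $\boldsymbol{p}\in\mathds{C}_{ij}\cap\mathds{C}_{jj}\subseteq\mathds{F}$. Then $\boldsymbol{p}\in\mathds{W}^{+}(\mathds{C}_{ij'})$ for some $j'\neq j$, and Lemma~\ref{lemma: W+ nested relationship} yields $\mathds{W}^{+}(\mathds{C}_{ij})\subset\mathds{W}^{+}(\mathds{C}_{ij'})$. Hence $\boldsymbol{m}\in\mathds{W}^{+}(\mathds{C}_{ij'})$, so the witness has moved from $j$ to $j'$. If $j'\in\mathcal{L}\setminus\{i\}$ we are done. Otherwise $j'\notin\mathcal{L}$ is again screened out and I repeat the step. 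Each step strictly enlarges the set $\mathds{W}^{+}(\cdot)$ by the properness in Lemma~\ref{lemma: W+ nested relationship}, so no index can recur. With finitely many systems, the chain must terminate at an index in $\mathcal{L}$.

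The main obstacle is the possibility $j'=i$, where the chain reaches the screened system itself, so that $\boldsymbol{m}\in\mathds{W}^{+}(\mathds{C}_{ji})$ with $\boldsymbol{m}\in\mathds{C}_{ii}$. Here $\mathds{C}_{ji}$ is the competitor region for $i$, not $\mathds{C}_{ii}$. I would first try to rule this case out by showing it contradicts the orientation of the regions: $\mathds{C}_{ji}$ excludes overly favorable vectors while $\mathds{C}_{ii}$ excludes overly unfavorable ones, and they overlap at the estimate. If that fails, I would choose $\boldsymbol{m}$ to be an element of $\mathds{B}^{*}(\mathds{C}_{ii}\cap\mathds{F})$ via Lemma~\ref{lemma:more_fav} and Theorem~\ref{theorem: FOSSA}, so that $\boldsymbol{m}$ cannot be strictly worse than a point of $\mathds{C}_{ji}\cap\mathds{C}_{ii}$, which gives the contradiction. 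Together with Lemma~\ref{lemma: Parallel Sufficency}, this lemma then yields Theorem~\ref{theorem: Divide and Conquer}.
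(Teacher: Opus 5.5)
Your argument is correct, provided you use your fallback for the case $j'=i$. It is organized differently from the paper's proof.

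The paper argues by contradiction at the level of sets. It proves the stronger, $\mathcal{L}$-independent covering $\mathds{C}_{ii}\cap\mathds{F}\subseteq\bigcup_{j\in\mathcal{S}(\mathcal{X})}\mathds{W}^{+}(\mathds{C}_{ij})$. If this fails, it extracts a screened-out $i'$ with $\mathds{C}_{ii'}$ plausibly feasible and again uncovered, then $i''\neq i$, and so on until $\mathcal{X}\setminus\mathcal{S}(\mathcal{X})$ is exhausted. Each step reapplies Lemma~\ref{lemma: FOSSA on a subset} to $i'$ with $\mathds{C}_{ii'}$ and $\mathds{C}_{ij}$ (even $\mathds{C}_{ii}$) standing in for $\mathds{C}_{i'i'}$ and $\mathds{C}_{i'j}$ (resp.\ $\mathds{C}_{i'i}$). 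That substitution is only justified when a common region $\mathds{C}_u$ is used.

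You instead follow a single point $\boldsymbol{m}$ through a chain of witnesses. Your bridging point $\boldsymbol{p}\in\mathds{C}_{ij}\cap\mathds{C}_{jj}$ makes explicit the link between the region used for $j$ when screening $i$ and the one used when screening $j$ itself. This covers the oriented half-boxes as well, and termination is transparent because $\mathds{W}^{+}(\mathds{C}_{ij_t})$ strictly grows along the chain. Your extra hypotheses are not a weakness; some such link is necessary. For $d=1$, three systems with self regions $[5,6],[10,11],[3,4]$ and screener-independent competitor regions $[5,6],[0,1],[3,4]$ give $\mathcal{S}(\mathcal{X})=\emptyset$, yet $\mathcal{S}(\{1\})=\{1\}$.

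The one place to be decisive is $j'=i$. Under common regions this case is immediately contradictory: $\boldsymbol{m}\in\mathds{C}_{ii}=\mathds{C}_{ji}$ gives $\boldsymbol{m}\succ\boldsymbol{p}\succ\boldsymbol{m}$. For half-boxes, however, the orientation argument gives nothing. Take $\mathds{C}_{ii}=[0,\infty)$, $\mathds{C}_{ji}=(-\infty,1]$, $\mathds{C}_{jj}=[2,\infty)$, $\mathds{C}_{ij}=(-\infty,3]$, $\boldsymbol{m}=100$ and $\boldsymbol{p}=2.5$; these satisfy every relation of that case without contradiction.

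So restrict to $\boldsymbol{m}\in\mathds{B}^{*}(\mathds{C}_{ii})\cap\mathds{F}$ from the start. This is legitimate:
\begin{itemize}
\item Lemma~\ref{lemma:more_fav} gives every feasible $\boldsymbol{m}\in\mathds{C}_{ii}$ a feasible $\underline{\boldsymbol{m}}\in\mathds{B}^{*}(\mathds{C}_{ii})$ with $\underline{\boldsymbol{m}}\preccurlyeq\boldsymbol{m}$.
\item By transitivity, $\underline{\boldsymbol{m}}\in\mathds{W}^{+}(\mathds{C}_{ij''})$ implies $\boldsymbol{m}\in\mathds{W}^{+}(\mathds{C}_{ij''})$.
\end{itemize}
With this restriction, any $\boldsymbol{q}\in\mathds{C}_{ji}\cap\mathds{C}_{ii}$ (nonempty by your bridging assumption with the roles swapped) satisfies $\boldsymbol{q}\prec\boldsymbol{p}\prec\boldsymbol{m}$. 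That contradicts $\boldsymbol{m}\in\mathds{B}^{*}(\mathds{C}_{ii})$.
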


\noindent {\bf Proof of Lemma~\ref{lemma: Parallel necessity}:} 
It suffices to prove that for any $i \in \mathcal{L}\backslash\mathcal{S}(\mathcal{X})$, $i \notin \mathcal{S}(\mathcal{L})$.
By Lemma~\ref{lemma: FOSSA on a subset}, $i \notin \mathcal{S}(\mathcal{L})$ if and only if $\mathds{C}_{ii}\cap\mathds{F}\cap\left(\bigcup\limits_{j\in\mathcal{L}\backslash\{i\}}\mathds{W}^{+}(\mathds{C}_{ij})\right)^{c}=\emptyset$. The event on the right-hand side can re-expressed as
\[\mathds{C}_{ii}\cap\mathds{F}\cap\left(\bigcup\limits_{j\in\mathcal{L}\backslash(\mathcal{S}(\mathcal{X})\cup\{i\})}\mathds{W}^{+}(\mathds{C}_{ij})\right)^{c}\cap\left(\bigcup\limits_{j\in\mathcal{S}(\mathcal{X})}\mathds{W}^{+}(\mathds{C}_{ij})\right)^{c} = \emptyset,\]
which is implied by the event
\begin{equation}
\label{eqn:parallel_necessity_key_event}
\mathds{C}_{ii}\cap\mathds{F}\cap\left(\bigcup\limits_{j\in\mathcal{S}(\mathcal{X})}\mathds{W}^{+}(\mathds{C}_{ij})\right)^{c} = \emptyset.
\end{equation}

We proceed to prove via contradiction that for any $i \in \mathcal{L}\backslash\mathcal{S}(\mathcal{X})$, (\ref{eqn:parallel_necessity_key_event}) holds, which will imply that $i \notin \mathcal{S}(\mathcal{L})$.
Fix an arbitrary $i \in \mathcal{L}\backslash\mathcal{S}(\mathcal{X})$ and suppose that
\begin{equation}
\label{eqn:parallel_necessity_key_event_contra}
\mathds{C}_{ii}\cap\mathds{F}\cap\left(\bigcup\limits_{j\in\mathcal{S}(\mathcal{X})}\mathds{W}^{+}(\mathds{C}_{ij})\right)^{c} \neq \emptyset.
\end{equation}
Because $i\notin \mathcal{S}(\mathcal{X})$, Lemma~\ref{lemma: FOSSA on a subset} implies that 
\begin{equation}
\label{eqn:parallel_necessity_Lemma5_result}
\mathds{C}_{ii}\cap\mathds{F}\cap\left(\bigcup\limits_{j\in\mathcal{X}\backslash\{i\}}\mathds{W}^{+}(\mathds{C}_{ij})\right)^{c}=\emptyset.
\end{equation}
From (\ref{eqn:parallel_necessity_key_event_contra}), we have that $\mathds{C}_{ii}\cap\mathds{F}\neq\emptyset$.
Then because of (\ref{eqn:parallel_necessity_Lemma5_result}),
$\mathds{C}_{ii}\cap\mathds{F}\subseteq\bigcup\limits_{j\in\mathcal{X}\backslash\{i\}}\mathds{W}^{+}(\mathds{C}_{ij}$).

We next claim that there exists some $i'\in\mathcal{X}\backslash(\mathcal{S}(\mathcal{X})\cup\{i\})$ such that 
\begin{equation}
\label{eqn:system_b_condition1}
\mathds{C}_{ii}\cap\mathds{F}\cap\mathds{W}^{+}(\mathds{C}_{ii'})\neq\emptyset
\end{equation}
and
\begin{equation}
\label{eqn:system_b_condition2}
\mathds{W}^{+}(\mathds{C}_{ii'})\nsubseteq\bigcup\limits_{j\in\mathcal{S}(\mathcal{X})}\mathds{W}^{+}(\mathds{C}_{ij}).
\end{equation}
To the contrary, suppose that for 
all $i'\in\mathcal{X}\backslash(\mathcal{S}(\mathcal{X})\cup\{i\})$, either
$\mathds{C}_{ii} \cap \mathds{F} \cap \mathds{W}^{+}(\mathds{C}_{ii'}) = \emptyset$ or $\mathds{W}^{+}(\mathds{C}_{ii'})\subseteq\bigcup\limits_{j\in\mathcal{S}(\mathcal{X})}\mathds{W}^{+}(\mathds{C}_{ij})$.
Let
$$\mathcal{I}_1 = \left\{i'\in\mathcal{X}\backslash(\mathcal{S}(\mathcal{X})\cup\{i\}) \colon \mathds{W}^{+}(\mathds{C}_{ii'})\subseteq\bigcup\limits_{j\in\mathcal{S}(\mathcal{X})}\mathds{W}^{+}(\mathds{C}_{ij})\right\}$$
and
$$\mathcal{I}_2 = \left\{i'\in\mathcal{X}\backslash(\mathcal{S}(\mathcal{X})\cup\{i\}) \colon \mathds{C}_{ii} \cap \mathds{F} \cap \mathds{W}^{+}(\mathds{C}_{ii'}) = \emptyset \text{ and } \mathds{W}^{+}(\mathds{C}_{ii'})\nsubseteq\bigcup\limits_{j\in\mathcal{S}(\mathcal{X})}\mathds{W}^{+}(\mathds{C}_{ij}) \right\}.$$
By definition, $\mathcal{I}_1 \cup \mathcal{I}_2 = \mathcal{X}\backslash(\mathcal{S}(\mathcal{X}) \cup \{i\})$ and $\mathcal{I}_1 \cap \mathcal{I}_2 = \emptyset$.
Furthermore, for those $i' \in \mathcal{I}_2$, $\mathds{C}_{ii} \cap \mathds{F} \subseteq \left(\mathds{W}^{+}(\mathds{C}_{ii'})\right)^c$.
All together, we have that
\begin{align}
\notag
\mathds{C}_{ii}\cap\mathds{F}\cap\bigcap\limits_{j\in\mathcal{X}\backslash\{i\}}\left(\mathds{W}^{+}(\mathds{C}_{ij})\right)^{c} &=\mathds{C}_{ii}\cap\mathds{F}\cap\bigcap\limits_{j\in\mathcal{I}_1}\left(\mathds{W}^{+}(\mathds{C}_{ij})\right)^{c}\cap\bigcap\limits_{j\in\mathcal{I}_2}\left(\mathds{W}^{+}(\mathds{C}_{ij})\right)^{c}\cap\bigcap\limits_{j\in\mathcal{S}(\mathcal{X})}\left(\mathds{W}^{+}(\mathds{C}_{ij})\right)^{c} \\
\label{eqn:system_b_contra}
&=\mathds{C}_{ii}\cap\mathds{F}\cap\bigcap\limits_{j\in\mathcal{S}(\mathcal{X})}\left(\mathds{W}^{+}(\mathds{C}_{ij})\right)^{c},
\end{align}
where in the second equality we have used the facts that 
$\bigcap\limits_{j\in\mathcal{S}(\mathcal{X})}\left(\mathds{W}^{+}(\mathds{C}_{ij})\right)^{c} \subseteq \bigcap\limits_{j\in\mathcal{I}_1}\left(\mathds{W}^{+}(\mathds{C}_{ij})\right)^{c}$
and $\mathds{C}_{ii}\cap\mathds{F} \subseteq \bigcap\limits_{j \in \mathcal{I}_2} \left(\mathds{W}^{+}(\mathds{C}_{ij})\right)^{c}$.
This leads to a contradiction, since (\ref{eqn:parallel_necessity_Lemma5_result}) states that the left-hand-side of (\ref{eqn:system_b_contra}) is empty, whereas
(\ref{eqn:parallel_necessity_key_event_contra}) states that the set on the right-hand side of (\ref{eqn:system_b_contra}) is non-empty.

The remainder of the proof involves showing that $\mathds{C}_{ii'}\cap\mathds{F}\cap\left(\bigcup\limits_{j\in\mathcal{S}(\mathcal{X})}\mathds{W}^{+}(\mathds{C}_{ij})\right)^{c}\neq\emptyset$ and leveraging this result to build a chain of implications on the existence of more and more different systems in $\mathcal{X}\backslash\mathcal{S}(\mathcal{X})$ until we eventually exhaust the set and obtain a contradiction.

For this particular $i'$, we claim that $\mathds{C}_{ii'}\cap\mathds{F}\neq\emptyset$, i.e., System $i'$ is plausibly feasible. 
Again, we employ a proof by contradiction and assume instead that $\mathds{C}_{ii'}\cap\mathds{F}=\emptyset$. For any $\boldsymbol{m}_{ii'}\in\mathds{C}_{ii'}$, $\boldsymbol{m}_{ii'}\in\mathds{F}^{c}$, and thus by Assumption~\ref{assump: Acceptability}, $\mathds{W}^{+}(\boldsymbol{m}_{ii'})=\emptyset$ since System $i'$ is infeasible.
Therefore, we have $\mathds{W}^{+}(\mathds{C}_{ii'})\equiv\bigcap\limits_{\boldsymbol{m}_{ii'}\in\mathds{C}_{ii'}} \mathds{W}^{+}(\boldsymbol{m}_{ii'})=\emptyset$. This, however, contradicts the fact that $\mathds{F}\cap\mathds{W}^{+}(\mathds{C}_{ii'})\neq\emptyset$ from (\ref{eqn:system_b_condition1}). Hence, $\mathds{C}_{ii'}\cap\mathds{F}=\emptyset$.

Additionally, for any $\boldsymbol{m}_{ii'}\in\mathds{C}_{ii'}$, $\boldsymbol{m}_{ii'}\notin\mathds{W}^{+}(\boldsymbol{m}_{ii'})$ by the definition of $\mathds{W}^{+}(\cdot)$. Hence $\mathds{C}_{ii'}\cap\mathds{W}^{+}(\mathds{C}_{ii'})=\emptyset$. 

We next show that because $i'\notin\mathcal{S}(\mathcal{X})$,
\begin{equation}
\label{eqn:system_b_lemma_5_like}
\mathds{C}_{ii'}\cap\mathds{F}\cap\left(\bigcup\limits_{j\in\mathcal{S}(\mathcal{X})}\mathds{W}^{+}(\mathds{C}_{ij})\right)^{c}\neq\emptyset. 
\end{equation}
Suppose to the contrary that $\mathds{C}_{ii'}\cap\mathds{F}\cap\left(\bigcup\limits_{j\in\mathcal{S}(\mathcal{X})}\mathds{W}^{+}(\mathds{C}_{ij})\right)^{c}=\emptyset$.
We have already proven that $\mathds{C}_{ii'}\cap\mathds{F}\neq\emptyset$, thus, $\mathds{C}_{ii'}\cap\mathds{F}\cap\left(\bigcup\limits_{j\in\mathcal{S}(\mathcal{X})}\mathds{W}^{+}(\mathds{C}_{ij})\right)^{c}=\emptyset$ implies that $(\mathds{C}_{ii'}\cap\mathds{F})\subseteq\bigcup\limits_{j\in\mathcal{S}(\mathcal{X})}\mathds{W}^{+}(\mathds{C}_{ij}).$
Using a similar argument as before, this implies that there exists some $j'\in\mathcal{S}(\mathcal{X})$ such that $\mathds{C}_{ii'}\cap\mathds{F}\cap\mathds{W}^{+}(\mathds{C}_{ij'})\neq\emptyset$.
Since $\mathds{C}_{ii'}\cap\mathds{W}^{+}(\mathds{C}_{ij'}) \neq \emptyset$, by Lemma~\ref{lemma: W+ nested relationship}, we have $\mathds{W}^{+}(\mathds{C}_{ii'})\subset\mathds{W}^{+}(\mathds{C}_{ij'})\subseteq\bigcup\limits_{j\in\mathcal{S}(\mathcal{X})}\mathds{W}^{+}(\mathds{C}_{ij})$, which contradicts (\ref{eqn:system_b_condition2}).
Therefore, (\ref{eqn:system_b_lemma_5_like}) holds.

Because $i'\in\mathcal{X}\backslash(\mathcal{S}(\mathcal{X})\cup\{i\})$ and (\ref{eqn:system_b_lemma_5_like}) holds, we can make a similar argument to that made for (\ref{eqn:system_b_condition1}) and (\ref{eqn:system_b_condition2}).
In particular, there exists some $i''\in\mathcal{X}\backslash(\mathcal{S}(\mathcal{X})\cup\{i'\})$ such that 
\begin{equation}
\label{eqn:system_b_condition1_like}
\mathds{C}_{ii'}\cap\mathds{F}\cap\mathds{W}^{+}(\mathds{C}_{ii''})\neq\emptyset   
\end{equation}
 and $\mathds{W}^{+}(\mathds{C}_{ii''})\nsubseteq\bigcup\limits_{j\in\mathcal{S}(\mathcal{X})}\mathds{W}^{+}(\mathds{C}_{ij})$.
This application of the argument is slightly different from the previous because we can show by contradiction that $i'' \neq i$.
Suppose that $i'' =i$. 
We know from (\ref{eqn:system_b_condition1}) that $\mathds{C}_{ii}\cap\mathds{W}^{+}(\mathds{C}_{ii'})\neq\emptyset$, which implies $\mathds{W}^{+}(\mathds{C}_{ii}) \subset \mathds{W}^{+}(\mathds{C}_{ii'})$ by Lemma~\ref{lemma: W+ nested relationship}.
Then 
$$\mathds{C}_{ii'}\cap\mathds{F}\cap\mathds{W}^{+}(\mathds{C}_{ii''})=\mathds{C}_{ii'}\cap\mathds{F}\cap\mathds{W}^{+}(\mathds{C}_{ii})\subseteq\mathds{C}_{ii'}\cap\mathds{F}\cap\mathds{W}^{+}(\mathds{C}_{ii'})=\emptyset,$$
which contradicts (\ref{eqn:system_b_condition1_like}).
Therefore $i'' \neq i$ and $i''\in\mathcal{X}\backslash(\mathcal{S}(\mathcal{X})\cup\{i,i'\})$.
Because we cannot repeat this process an infinite number of times---as $\mathcal{X}\backslash\mathcal{S}(\mathcal{X})$ is a finite set---we will eventually arrive at a contradiction. Therefore, we have shown via contradiction that $i \in \mathcal{S}(\mathcal{L})$.
$\square$

\bigskip

\noindent {\bf Theorem~\ref{theorem: Divide and Conquer}.}
\emph{If Assumptions~\ref{assump: Acceptability}--\ref{assump: closed & bounded} hold and confidence regions $\mathds{C}_{ij}$ for all $i,j=1,2,\dots,k$ are held fixed, FOSSA procedures can be parallelized using a divide-and-conquer scheme without compromising their screening power.}

\bigskip

\noindent {\bf Proof of Theorem~\ref{theorem: Divide and Conquer}:} 
Let $\mathcal{S}(\mathcal{X})$ denote the set of systems that would be returned if a given FOSSA procedure were applied on a single processor.
In the parallel computing setting, let $w$ be the number of worker processors and let $\{\mathcal{L}_1, \mathcal{L}_2, \ldots, \mathcal{L}_w\}$ be a partition of $\mathcal{X}$, i.e., $\bigcup_{v=1}^{w}\mathcal{L}_{v}=\mathcal{X}$ and $\mathcal{L}_{v} \cap \mathcal{L}_{v'} = \emptyset$ for all $v, v' \in \mathcal{X}$, $v \neq v'$. Worker processor $v$ receives the necessary data pertaining to the systems in $\mathcal{L}_v$ and applies the FOSSA procedure to $\mathcal{L}_v$, returning a subset $\mathcal{S}(\mathcal{L}_v)$ to the master processor. 
By Lemma~\ref{lemma: Parallel Sufficency}, $\mathcal{S}(\mathcal{X})\subseteq\bigcup_{v=1}^{w}\mathcal{S}(\mathcal{L}_{v})$.
On the master processor, the same FOSSA procedure is applied on the union of the subsets returned by all worker processors, and the set $\mathcal{S}(\bigcup_{v=1}^{w}\mathcal{S}(\mathcal{L}_{v}))$ is ultimately returned.
By Lemma~\ref{lemma: Parallel necessity}, $\mathcal{S}(\bigcup_{v=1}^{w}\mathcal{S}(\mathcal{L}_{v}))=\mathcal{S}(\mathcal{X})$. Thus, there is no loss of screening power.
$\square$

\section{Miscellaneous Results}
\label{sec:misc_results}

\subsection{A Confidence Region for the Response Vectors of Acceptable Systems}

FOSSA procedures that use common confidence regions when screening all systems can also provide a confidence region for the response vectors of all acceptable systems.

\begin{proposition}
$\bigcup_{j\in\mathcal{S}}\mathds{C}_{uj}\cap\left(\bigcup\limits_{h=1}^{k} \mathds{W}(\mathds{C}_{uh})\right)^{c}$ is a $1-\alpha$ confidence region for the response vectors of all acceptable systems, where  $\mathds{W}(\mathds{C}_{uh}):=\{\boldsymbol{m}\colon \boldsymbol{m}\succ\boldsymbol{m}_h \text{ for all }\boldsymbol{m}_h\in\mathds{C}_{uh}\}$.
\label{proposition: FOSSA Confidence region}
\end{proposition}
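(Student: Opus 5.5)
We read ``confidence region for the response vectors of all acceptable systems'' as the claim that $\mathrm{P}(\boldsymbol{\theta}_{j}\in\mathds{R}_{\mathcal{S}}\text{ for all } j\in\mathcal{A}(\truemean))\geq 1-\alpha$, where $\mathds{R}_{\mathcal{S}}$ denotes the displayed set. The plan is to work on the single event $E=\{\truemean\in\mathds{C}_u\}$, which has probability at least $1-\alpha$ because $\mathds{C}_u=\bigtimes_{j}\mathds{C}_{uj}$ is a $1-\alpha$ confidence region. We then show deterministically that on $E$ every acceptable system's response vector lies in $\mathds{R}_{\mathcal{S}}$. This mirrors the proof of Theorem~\ref{theorem2}.

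Fix an outcome in $E$ and an acceptable system $j\in\mathcal{A}(\truemean)$. Membership in the first factor $\bigcup_{j'\in\mathcal{S}}\mathds{C}_{uj'}$ is immediate in two steps. Since $\truemean\in\mathds{A}_j\cap\mathds{C}_u$, the intersection is nonempty, so $j\in\mathcal{S}$; this is exactly the argument of Theorem~\ref{theorem2}, and by Theorem~\ref{theorem: FOSSA} it agrees with the FOSSA returned subset. Moreover, $\boldsymbol{\theta}_j\in\mathds{C}_{uj}$ on $E$. Hence $\boldsymbol{\theta}_j\in\bigcup_{j'\in\mathcal{S}}\mathds{C}_{uj'}$.

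For the second factor, we must show $\boldsymbol{\theta}_j\notin\mathds{W}(\mathds{C}_{uh})$ for every $h$. For $h\neq j$, suppose instead that $\boldsymbol{\theta}_j\succ\boldsymbol{m}_h$ for all $\boldsymbol{m}_h\in\mathds{C}_{uh}$. On $E$ we have $\boldsymbol{\theta}_h\in\mathds{C}_{uh}$, so $\boldsymbol{\theta}_j\succ\boldsymbol{\theta}_h$. This contradicts the acceptability condition $\boldsymbol{\theta}_j\nsucc\boldsymbol{\theta}_h$ from Assumption~\ref{assump: Acceptability}.

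For $h=j$, suppose instead that $\boldsymbol{\theta}_j\succ\boldsymbol{m}$ for all $\boldsymbol{m}\in\mathds{C}_{uj}$. Taking $\boldsymbol{m}=\boldsymbol{\theta}_j$ gives $\boldsymbol{\theta}_j\succ\boldsymbol{\theta}_j$. We must rule this out as irreflexivity of $\prec$. This is the step I expect to need care: Assumption~\ref{assump: Acceptability} states transitivity but not irreflexivity explicitly. I would argue irreflexivity from the structure of the assumption. The relation is a strict preference, and $\preccurlyeq$ is defined as ``$\prec$ or $=$'', which presupposes the two are exclusive. For infeasible vectors, irreflexivity is stated directly, since all infeasible vectors are equally preferable. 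In the concrete cases (componentwise $<$, constrained optimality, Pareto dominance $\prec_p$) irreflexivity is immediate. If needed, I would state irreflexivity as an implicit part of the definition of a strict preference.

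Combining the two factors gives $\boldsymbol{\theta}_j\in\mathds{R}_{\mathcal{S}}$ for all acceptable $j$ on $E$. Therefore the probability in the claim is at least $\mathrm{P}(E)\geq1-\alpha$. The randomness of $\mathcal{S}$ causes no difficulty, since everything is established pointwise on $E$.
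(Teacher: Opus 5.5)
Your proof is correct and follows essentially the same route as the paper's: both reduce to the single event $\{\truemean\in\mathds{C}_u\}$ and show that on it each acceptable system is returned, has $\boldsymbol{\theta}_i\in\mathds{C}_{ui}$, and cannot lie in any $\mathds{W}(\mathds{C}_{uh})$, since that would force $\boldsymbol{\theta}_i\succ\boldsymbol{\theta}_h$ and contradict acceptability. Your pointwise formulation is slightly more careful than the paper's chain of inequalities: the paper's first inequality implicitly needs $i\in\mathcal{S}$, which holds only on the event $\{\truemean\in\mathds{A}_i\cap\mathds{C}_u\}$, and it silently passes over the case $h=i$, which you handle via irreflexivity of $\prec$.
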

\noindent {\bf Proof of Proposition~\ref{proposition: FOSSA Confidence region}:}
\begin{align*}
&\mathrm{P}\left(\bigcap_{i\in\mathcal{A}(\mathbf{M}_{0})}\left\{\boldsymbol{\theta}_{i}\in\bigcup_{j\in\mathcal{S}}\mathds{C}_{uj}\cap\left(\bigcup\limits_{h=1}^{k} \mathds{W}(\mathds{C}_{uh})\right)^{c}\right\}\right)\\
&\geq \mathrm{P}\left(\bigcap_{i\in\mathcal{A}(\mathbf{M}_{0})}\left\{\boldsymbol{\theta}_{i}\in\mathds{C}_{ui}\cap \left(\bigcup\limits_{h=1}^{k} \mathds{W}(\mathds{C}_{uh})\right)^{c}\right\}\right)\\
&\geq \mathrm{P}\left(\bigcap_{i\in\mathcal{A}(\mathbf{M}_{0})} \left\{\{\boldsymbol{\theta}_{i}\in\mathds{C}_{ui}\} \cap \{\truemean \in \mathds{A}_{i} \cap \mathds{C}_{u}\}\right\}\right)\\
&= \mathrm{P}\left(\{\truemean \in\mathds{C}_{u}\}\cap\bigcap_{i\in\mathcal{A}(\mathbf{M}_{0})}\{\boldsymbol{\theta}_{i}\in\mathds{C}_{ui}\}\right)\\
&= \mathrm{P}\left(\{\truemean \in\mathds{C}_{u}\}\cap\left\{\bigtimes_{i=1}^{k}\boldsymbol{\theta}_{i}\in\bigtimes_{i=1}^{k}\mathds{C}_{ui}\right\}\right)\\
&= \mathrm{P}\left(\truemean \in\mathds{C}_{u}\right)\\
&\geq 1-\alpha. \quad \square
\end{align*}
The second inequality comes from the fact that for all $i\in\mathcal{A}(\mathbf{M}_{0})$, $\truemean \in \mathds{A}_{i} \cap \mathds{C}_{u}$ implies that $\boldsymbol{\theta}_{i}\in\left(\bigcup\limits_{h=1}^{k} \mathds{W}(\mathds{C}_{uh})\right)^c$.
To see this, assume by contradiction that there exists a System $i'\in\mathcal{A}(\mathbf{M}_{0})$ such that $\truemean \in \mathds{A}_{i'} \cap \mathds{C}_{u}$, but $\boldsymbol{\theta}_{i'}\in\bigcup\limits_{h=1}^{k} \mathds{W}(\mathds{C}_{uh})$.
Therefore, there exists a System $h'$ such that $\boldsymbol{\theta}_{i'}\in \mathds{W}(\mathds{C}_{uh'})$.
Meanwhile, $\truemean \in \mathds{C}_{u}$ implies that $\boldsymbol{\theta}_{h'}\in\mathds{C}_{uh'}$. Combining these two results, we have that $\boldsymbol{\theta}_{i'}\succ\boldsymbol{\theta}_{h'}$, meaning System $i'$ is unacceptable, which contradicts the assertion that $\truemean \in \mathds{A}_{i'}$.

When acceptability is defined as Pareto optimality, $\mathds{W}(\mathds{C}_{uj})=\mathds{W}(\boldsymbol{w}_j)$ for all $j=1,2,\dots,k$, where $\boldsymbol{w}_j$ is as defined in Section~\ref{subsec:pareto_ellipsoid}. As a result, a confidence region for the Pareto front of a bi-objective problem can be visualized by superimposing plots of $\mathds{C}_{uj}$ for all $j\in\mathcal{S}$ (i.e., plotting their union) and cutting out the part that overlaps with $\bigcup\limits_{h=1}^{k} \mathds{W}(\boldsymbol{w}_h)$.

\subsection{Definition of Phantom Pareto Systems}
\label{ec:phantom_pareto}

Let $\mathcal{P} = \{\boldsymbol{\nu}_1, \boldsymbol{\nu}_2, \ldots, \boldsymbol{\nu}_p\}$ be a set of vectors of length $d > 1$ describing a Pareto front. Let $\mathcal{K}=\{\boldsymbol{\kappa}\colon\boldsymbol{\kappa}\in\{1,2,\dots,d\}^{p} \}$ be the set of all vectors of length $p$ having elements in $\{1,2,\dots,d\}$.
For a given $\boldsymbol{\kappa} \in \mathcal{K}$, define the vector $\boldsymbol{\nu}^{\mathrm{bf}}(\boldsymbol{\kappa})= \left( \nu^{\mathrm{bf}}_{1}(\boldsymbol{\kappa}),\nu^{\mathrm{bf}}_{2}(\boldsymbol{\kappa}),\dots,\nu^{\mathrm{bf}}_{d}(\boldsymbol{\kappa}) \right)$ where 
\[ \nu_r^{\mathrm{bf}}(\boldsymbol{\kappa}) := \min_{\substack{\ell \in \{1, 2, \ldots, p\}\colon\\ \kappa_\ell = r}} \nu_{\ell r},\]
for $r = 1, 2, \ldots, d$
and bf stands for ``brute force" \citep{applegate2020multi}. 
In other words, for a given $\boldsymbol{\kappa}$ and $r$, we look at the indices of the components of $\boldsymbol{\kappa}$ equal to $r$ and take the minimum of the $r$th element over the  corresponding $\boldsymbol{\nu}$ vectors; if no components of $\boldsymbol{\kappa}$ are equal to $r$, then we set $\nu^{\mathrm{bf}}_{r}(\boldsymbol{\kappa})=\infty$.
The set of phantom Pareto systems is defined as
$$\left\{\boldsymbol{\nu}^{\mathrm{bf}}(\boldsymbol{\kappa})\colon \boldsymbol{\kappa}\in\mathcal{K} \text{ and } \nexists\boldsymbol{\kappa}'\in\mathcal{K} \ \mathrm{such\ that\ } \boldsymbol{\nu}^{\mathrm{bf}}(\boldsymbol{\kappa})\prec_{p}\boldsymbol{\nu}^{\mathrm{bf}}(\boldsymbol{\kappa}') \right\}.$$

\subsection{Measures of System Quality for Numerical Experiments}

In our numerical experiments, we adopt a metric that measures a system's distance to acceptability in terms of how much its responses would need to be adjusted to make the system acceptable. 

\paragraph{Optimization with Stochastic Constraints.}
For optimization with stochastic constraints, we employ a one-sided $L^{\infty}$ distance.
We define the quality of System $i$ to be 
$$q_{i}=\max\{\Tilde{\mu}_{i1}-\Tilde{\mu}_{1}^{*},0\}+\sum_{r = 2}^d \max\{\Tilde{\mu}_{ir}-\Tilde{\mu}^{\dagger}_r,0\},$$
where 
$$\Tilde{\mu}_{ir}=\frac{{\mu}_{ir}-\min\limits_{i}\mu_{ir}}{\max\limits_{i}\mu_{ir}-\min\limits_{i}\mu_{ir}}$$ for $r=1,2,\dots,d$ and
$$\Tilde{\mu}_{1}^{*}=\min\limits_{i}\Tilde{\mu}_{i1} \quad \text{ and } \quad \Tilde{\mu}^{\dagger}_r=\frac{{\mu}^{\dagger}_r -\min\limits_{i}\mu_{ir}}{\max\limits_{i}\mu_{ir}-\min\limits_{i}\mu_{ir}}$$
for $r = 2, 3, \ldots, d$.
This metric essentially captures the sum of the optimality gap and the sum of the violations of the individual stochastic constraints.
\paragraph{Pareto Optimality}
For multi-objective optimization, we employ a one-sided $L^2$ distance. The quality $q_{i}$ of System $i$ is defined as the Euclidean distance from the system's response vector to $\left(\bigcup_{j\in\mathcal{A}}\mathds{W}(\Tilde{\boldsymbol{\mu}}_{j})\right)^{c}$, the region that is not dominated by any systems. 
Here, $\Tilde{\boldsymbol{\mu}}_{i}$ is the normalized response vector, where 
$$\Tilde{\mu}_{ir}=\frac{{\mu}_{ir}-\min\limits_{i}\mu_{ir}}{\max\limits_{i}\mu_{ir}-\min\limits_{i}\mu_{ir}}$$ for $r=1,2,\dots,d$.
This measure quantifies how close a system's response vector is to being non-dominated by any other system.
Because $\left(\bigcup_{j\in\mathcal{A}}\mathds{W}(\Tilde{\boldsymbol{\mu}}_{j})\right)^{c}$ is a non-convex open set, we make use of phantom Pareto systems to decompose the set into a finite union of half-boxes. The minimum $L^2$ distance is then computed by solving a set of quadratic programs, one for each phantom Pareto system.
See Section~\ref{sec:pareto} for more details about how this approach was used for solving the optimization problems that arise in FOSSA Box and FOSSA Half-Box.
	\end{document}